\documentclass[reprint,amsfonts, amssymb, amsmath,  showkeys,pra,nofootinbib , superscriptaddress, twocolumn,longbibliography,aps]{revtex4-2}

\usepackage{float}
\makeatletter
\let\newfloat\newfloat@ltx
\makeatother
\usepackage[english]{babel}

\usepackage[utf8]{inputenc}
\usepackage{graphics}
\usepackage{selinput}
\usepackage[normalem]{ulem}
\usepackage[shortlabels]{enumitem}

\usepackage{braket}
\usepackage{amsthm}
\usepackage{mathtools}
\usepackage{physics}
\usepackage[table]{xcolor}
\usepackage{graphicx}
\usepackage[left=16mm,right=16mm,top=35mm,columnsep=15pt]{geometry} 
\usepackage{adjustbox}
\usepackage{placeins}
\usepackage[T1]{fontenc}
\usepackage{lipsum}
\usepackage{csquotes}
\usepackage{bm}

\usepackage[linesnumbered,ruled,vlined]{algorithm2e}
\SetKwInput{kwInit}{Init}

\usepackage{changes}
\definechangesauthor[name=Ilya Safro, color=red]{is}
\definechangesauthor[name=Ankit, color=orange]{ak}

\def\HC{\mathcal{H}}

\def\LC{\mathcal{L}}

\usepackage{mathrsfs} 

\def\ad{^{\dagger}}

\newcommand{\fsnull}[1]{}
\newcommand{\old}[1]{}

\usepackage[makeroom]{cancel}
\usepackage[toc,page]{appendix}
\usepackage[colorlinks=true,citecolor=blue,linkcolor=magenta]{hyperref}

\usepackage{tikz}
\tikzset{every picture/.style=remember picture}

\newcommand{\kett}[1]{|#1\rangle\!\rangle}
\newcommand{\braa}[1]{\langle\!\langle{#1}|}

\newcommand{\braakett}[1]{\langle\!\braket{#1}\!\rangle}

\newcommand{\poly}{\operatorname{poly}}

\newcommand{\Ebb}{\mathbb{E}}

\newcommand{\GC}{\mathcal{G}}

\newcommand{\NC}{\mathcal{N}}
\newcommand{\OC}{\mathcal{O}}
\newcommand{\PC}{\mathcal{P}}

\newcommand{\Var}{{\rm Var}}

\renewcommand{\geq}{\geqslant}
\renewcommand{\leq}{\leqslant}

\DeclareMathOperator*{\argmin}{arg\,min}
\DeclareMathOperator{\sinc}{sinc}

\renewcommand{\vec}[1]{\boldsymbol{#1}}  

\newcommand*{\id}{\openone}

\newcommand{\bs}{\textsf{BS}}

\renewcommand{\th}{\theta } 

\newcommand{\thv}{\vec{\theta}}

\def\be{\begin{equation}}
\def\ee{\end{equation}}
\def\bs{\begin{split}}
\def\e{\end{split}}
\def\ba{\begin{eqnarray}}
\def\bea{\begin{eqnarray}}

\def\tea{\end{eqnarray}}
\def\ea{\end{eqnarray}}
\def\eea{\end{eqnarray}}

\theoremstyle{theorem}
\newtheorem{theorem}{Theorem}
\newtheorem{lemma}{Lemma}

\theoremstyle{definition}
\newtheorem{example}{Example}

\theoremstyle{theorem}

\newtheorem{definition}{Definition}

\usepackage{dsfont}

\def\be{\begin{equation}}
\def\te{\end{equation}}
\def\ee{\end{equation}}
\def\ba{\begin{eqnarray}}
\def\bea{\begin{eqnarray}}

\def\tea{\end{eqnarray}}
\def\ea{\end{eqnarray}}
\def\eea{\end{eqnarray}}

\newcommand{\beq}{\begin{equation}}
\newcommand{\eeq}{\end{equation}}

\usepackage{graphicx}

\usepackage{enumitem}

\usepackage{dcolumn}
\usepackage{bm,bbm}
\usepackage{physics}
\usepackage{amsmath}
\usepackage{amsthm}
\usepackage{amsfonts}
\usepackage{hyperref}
\usepackage{xcolor}
\usepackage{float}
\usepackage{soul}
\sethlcolor{yellow}

\usepackage{lipsum}

\usepackage{cancel}

\newcommand{\1}{\mathbbm{1}}

\newcommand{\sx}{\sigma_x}
\newcommand{\sy}{\sigma_y}
\newcommand{\sz}{\sigma_z}

\makeatletter
\renewcommand\onecolumngrid{
\do@columngrid{one}{\@ne}
\def\set@footnotewidth{\onecolumngrid}
\def\footnoterule{\kern-6pt\hrule width 1.5in\kern6pt}
}
\renewcommand\twocolumngrid{
        \def\footnoterule{
        \dimen@\skip\footins\divide\dimen@\thr@@
        \kern-\dimen@\hrule width.5in\kern\dimen@}
        \do@columngrid{mlt}{\tw@}
}
\makeatother

\begin{document}

\preprint{APS/123-QED}

\title{Exponentially many initializations to avoid barren plateaus}

\author{Ankit Kulshrestha}
\thanks{The first two authors contributed equally}
\affiliation{Fujitsu Research of America, Santa Clara, CA 95054, USA}
\affiliation{University of Delaware, Newark, DE 19716, USA}

\author{Ricard Puig}
\thanks{The first two authors contributed equally}
\affiliation{Institute of Physics, Ecole Polytechnique F\'{e}d\'{e}rale de Lausanne (EPFL), CH-1015 Lausanne, Switzerland}

\affiliation{Centre for Quantum Science and Engineering, Ecole Polytechnique F\'{e}d\'{e}rale de Lausanne (EPFL), CH-1015 Lausanne, Switzerland}

\author{Diego Garc\'ia-Mart\'in}
\affiliation{Department for Quantum Information and Computation at Kepler (QUICK), Johannes Kepler University, Linz, Austria }

\author{Lukasz Cincio}
\affiliation{Theoretical Division, Los Alamos, NM, 87545, USA}

\author{Ilya Safro}
\affiliation{University of Delaware, Newark, DE 19716, USA}

\author{Zo\"e Holmes}
\affiliation{Institute of Physics, Ecole Polytechnique F\'{e}d\'{e}rale de Lausanne (EPFL), CH-1015 Lausanne, Switzerland}

\affiliation{Centre for Quantum Science and Engineering, Ecole Polytechnique F\'{e}d\'{e}rale de Lausanne (EPFL), CH-1015 Lausanne, Switzerland}

\author{M. Cerezo}
\thanks{cerezo@lanl.gov}
\affiliation{Information Sciences, Los Alamos National Laboratory, Los Alamos, NM 87545, USA}

\begin{abstract}
Barren plateaus are stated as an average-case phenomenon: pick an ansatz, initialize it naively, and concentration follows. This has led to the common view that a potential cure for barren plateaus is simply to initialize the parameters more carefully. Here we show that the situation is subtler. We introduce a first-moment framework that gives a simple operator-level diagnostic for when an initialization may escape the fully concentrated barren-plateau fixed point, and for comparing the biases induced by different initialization strategies. Our framework recovers several known initialization schemes such as identity and Gaussian initialization, but also shows that barren-plateau avoidance is highly non-unique. Indeed, many shifted, biased, and non-symmetric parameter distributions can avoid concentration, and these choices need not be equivalent. In fact, our results show that one can generate exponentially many families of inequivalent initialization strategies. Then, our numerics indicate that different first-moment-distinct initializations can lead to different attained minima, suggesting that avoiding barren plateaus via smart initializations can trade the exponential concentration problem for the challenge of selecting the right trainable pocket amongst many options.
\end{abstract}

\maketitle

\section{Introduction}

Over the past few years, a significant amount of effort has been put forward towards understanding if and how parametrized quantum circuits (PQCs) can be trained to accurately and efficiently find the global minima of loss functions for variational quantum algorithms~\cite{cerezo2020variationalreview,bharti2021noisy,endo2021hybrid,schuld2015introduction,biamonte2017quantum,cerezo2022challenges,di2023quantum}. This has led researchers to realize that quantum training landscapes can suffer from issues such as the barren plateau phenomenon~\cite{mcclean2018barren,larocca2024review}, where loss function values and their gradients concentrate exponentially; the presence of exponentially many local minima~\cite{you2021exponentially,fontana2022nontrivial,anschuetz2022beyond,anschuetz2021critical,larocca2021diagnosing}; or even the fact that training a variational model can be NP-hard~\cite{bittel2021training}.

While these results should certainly raise some eyebrows, they must also be taken with a (fairly large) grain of salt. Many no-go results in variational quantum computing are average-case or worst-case in nature, and therefore do not by themselves preclude the possibility that specific instances, architectures, or parameter regimes may remain trainable. In the particular case of barren plateaus, the average-case perspective is not just a technical nuisance, but rather  reflects a deeper issue for these models: the curse of dimensionality. Specifically, most loss functions used in the literature can be ultimately expressed as the Hilbert-Schmidt inner product between an input state and some observable, both living in an exponentially large operator space. As such, one can generically expect this overlap to become, on average, exponentially concentrated~\cite{larocca2024review}. 

Broadly speaking, current strategies for avoiding barren plateaus seem to fall into two different categories. The first one avoids the problem by taming the underlying curse of dimensionality itself. Here, one typically restricts the relevant dynamics to some smaller effective subspaces, for instance through shallow local architectures~\cite{cerezo2020cost,pesah2020absence,khatri2019quantum,zhao2021analyzing,liu2021presence,miao2023isometric,bermejo2024quantum,bach2024mlqaoa}, symmetries~\cite{larocca2022group,meyer2022exploiting,skolik2022equivariant,ragone2022representation,nguyen2022atheory,schatzki2022theoretical,zheng2021speeding,east2023all,chang2026practical,tsvelikhovskiy2026equivariant,shaydulin2021classical}, or small dynamical Lie algebras~\cite{larocca2021diagnosing,monbroussou2023trainability,cherrat2023quantum,fontana2023theadjoint,ragone2023unified,diaz2023showcasing,west2023provably,heredge2024prospects,aguilar2024full,kokcu2024classification,kazi2022landscape,tsvelikhovskiy2026reductions}. While this route has proven extremely successful at preventing barren plateaus, it also comes with an important caveat. Namely, that if the dynamics are constrained to some polynomially-sized subspace, then the same structure can also make the model more amenable to classical simulation~\cite{cerezo2023does,bermejo2024quantum,angrisani2024classically,lerch2024efficient,anschuetz2024arbitrary,mele2024noise,shin2024dequantising,ermakov2024unified,miller2025simulation, rudolph2026thermal,chang2026practical,barligea2026enabling}. The second route is more subtle. Rather than fundamentally shrinking the space explored by the PQCs, one can try to avoid the bad average-case scenario by initializing the parameters in a smart way~\cite{grant2019initialization,kulshrestha2022beinit,duffield2023bayesian,heyraud2023efficient,zhang2022escaping,park2023hamiltonian, wang2023trainability,park2024hardware,shi2024avoiding, puig2024variational,puig2026warm, lerch2026iqp}, thereby steering optimization toward special patches of the landscape where gradients remain large. In other words, some strategies avoid barren plateaus by constraining the space one can explore, while others try to avoid them by sampling a trainable part of the landscape, but without creating a barrier for the optimizer to see the full parameter space. While the second approach preserves the expressivity of the model, it introduces a new challenge due to the existence of local minima~\cite{anschuetz2022quantum, nemkov2024barren}, making the choice of ``where to start'' critical.

In this work, we focus precisely on the second route. We develop a simple framework to diagnose when a parameter initialization may avoid the fully concentrated barren-plateau regime, and to compare different initialization strategies. Our goal is not just to say that smart initializations can avoid barren plateaus, but rather to show that there can be exponentially many such initializations, and that they need not be equivalent. This, in turn, changes the initialization problem, as once exponentially small gradients are avoided, the relevant question becomes \emph{which initialization among many possible choices leads to a sufficiently favorable region of the landscape} (see Fig.~\ref{fig:schematic}). We support this picture both analytically and numerically, showing that different initialization families can  lead to different landscape patches and thus to different training dynamics, and different attained minima. 

This manuscript is organized as follows. In Section~\ref{sec:frmwrk}, we present the definitions and tools necessary to follow the results. In Sec.~\ref{sec:an_res}, we start by presenting two theorems which provide necessary conditions for an initialization strategy to avoid exponential concentration, and we use them to analyze several initializations. We then extend these tools to compare different non-concentrated initialization strategies, introduce the notions of symmetric and antisymmetric distributions, and explicitly show how to obtain exponentially many different (and potentially non-equivalent) initializations capable of avoiding barren plateaus. In Sec.~\ref{sec:numerics}, we present numerical simulations that can help determine whether these initialization strategies are practically distinguishable. Finally, we present our discussions in Sec.~\ref{sec:discussion}.

\section{Framework}\label{sec:frmwrk}

In this section we present the basic notions and tools that will be used throughout our work.

\subsection{Parameterized quantum circuit}

\begin{figure}
    \centering
    \includegraphics[width=.9\linewidth]{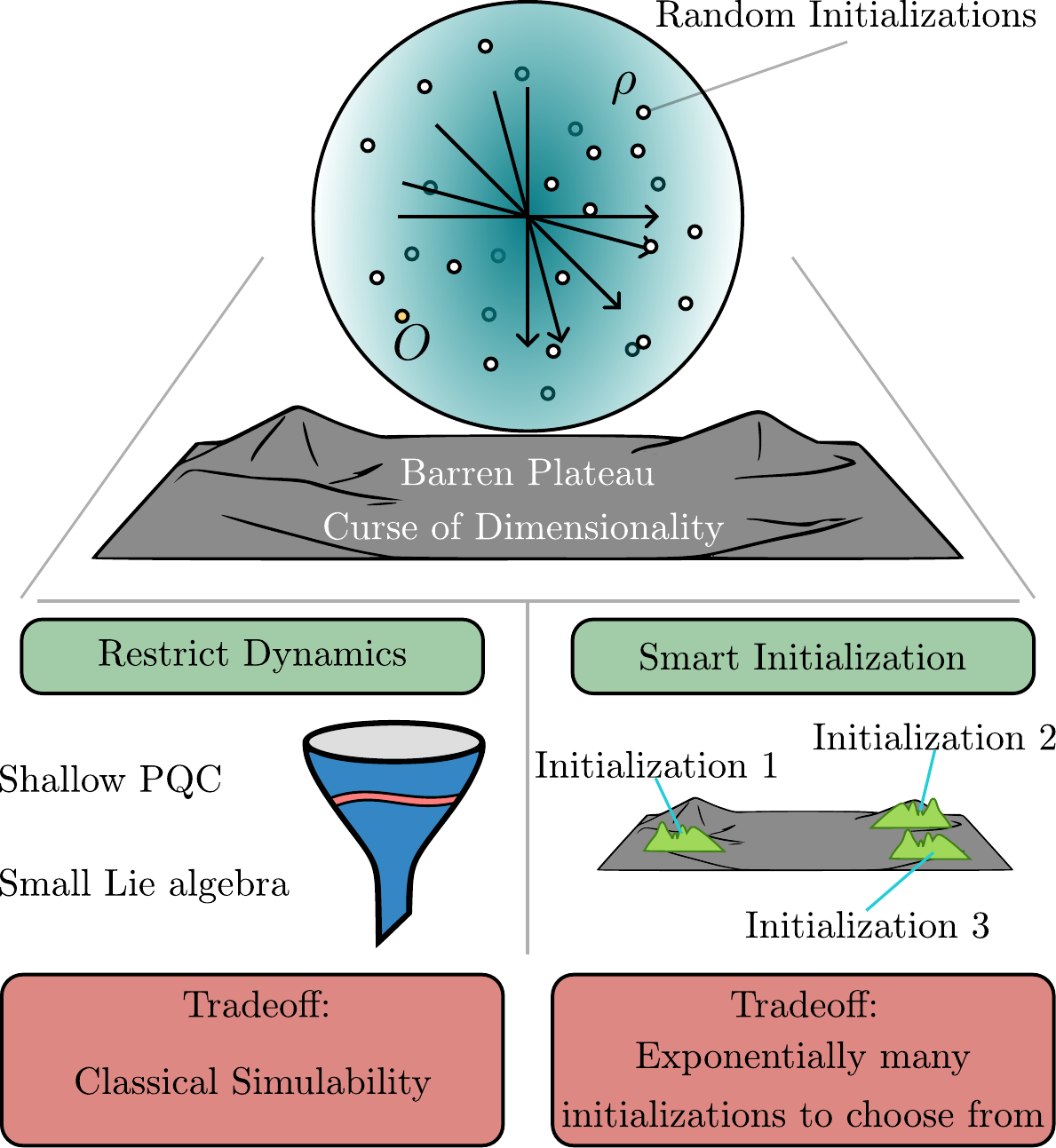}
\caption{\textbf{Schematic picture of the main message of this work.}  Barren plateaus arise as an average-case consequence of exploring an exponentially large space, leading to concentration of loss functions and gradients. One way to avoid this problem is to restrict the dynamics to a smaller effective subspace, e.g., through shallow local structure, symmetries, or small Lie algebras; however, this can make the model more amenable to classical simulation. A different route is to use smart parameter initializations, which do not necessarily shrink the accessible space, but instead bias optimization toward special trainable pockets of the landscape. The tradeoff is then no longer only how to avoid exponential concentration, but which among many candidate initializations should be chosen. 
}
    \label{fig:schematic}
\end{figure}

Let $\HC=(\mathbb{C}^2)^{\otimes n}$ denote an $n$-qubit Hilbert space. We will define a parameterized quantum circuit (PQC) acting on $\HC$ as a product of unitaries of the form
\begin{align}\label{eq:pauliPQC}
    U(\thv) = \prod_{l=1}^L e^{-i\th_l\sigma_l/2}V_l\,,
\end{align}
where $\sigma_l$ is a Pauli operator taken from a set of generators~$\GC$, $\thv=\{\theta_l\}_{l=1}^L$ a set of trainable parameters, and $V_l$ a non-parametrized gate. In what follows, we will assume that the PQC is universal, meaning that for any unitary $W\in \mathbb{U}(\HC)$, where $\mathbb{U}(\HC)$ is the unitary group of degree $2^n$, there exists a depth $L$ and set of parameters $\thv$ for which $U(\thv)=W$. However, we note that the techniques developed here can be extended to the case when $U(\thv)$ has symmetries and is only able to produce unitaries from a subgroup of  $\mathbb{U}(\HC)$. 

Within the context of VQAs, the goal is to optimize the vector $\thv$ to minimize a given loss function $\LC(\thv)$, defined by
\begin{equation}\label{eq:loss}
\LC({\thv})=\Tr[U(\thv)\rho U\ad(\thv) O]\,,
\end{equation}
for some traceless Hermitian observable $O$ such that $\|O\|_2\in\OC(2^n)$ (e.g., a Pauli operator, or a combination thereof), that quantifies the extent to which a given task has been resolved. That is, we optimize the PQC's parameters to solve the optimization problem
\begin{equation}\label{eq:min-prob}
    \argmin_{\thv } \LC({\thv})\,.
\end{equation}

Extensive research has characterized how the properties and inductive biases of $U(\thv)$ affect the optimization in Eq.~\eqref{eq:min-prob}. This includes the choice of gates (either chosen from a pool of implementable operations~\cite{kandala2017hardware}, or so that they encode information or respect a given symmetry  of the task at hand~\cite{grimsley2019adaptive,farhi2014quantum,nguyen2022atheory,larocca2022group,wiersema2020exploring,ragone2022representation,Chang2023Approximately,west2023provably,forestano2023comparison,west2023reflection,zheng2021speeding,schatzki2022theoretical,zheng2022super}), the way in which the parameters in the PQC are sampled and initialized~\cite{grant2019initialization,kulshrestha2022beinit,duffield2023bayesian,heyraud2023efficient,zhang2022escaping,park2023hamiltonian, wang2023trainability,park2024hardware,shi2024avoiding, puig2024variational,mhiri2025unifying,puig2026warm, lerch2026iqp}, and the classical optimizer, among others. The focus of this work is the strategy used to initialize training, which is defined by some parameter distributions from which the PQC parameters are sampled
\begin{align}
    \theta_l \sim P_{\gamma_l}(\theta_l)\,.
\end{align}
Here $P_{\gamma_l}(\theta_l)$ denotes a probability distribution of $\theta_l$, with $\gamma$ referring to the probability distribution.  In what follows, we will denote the joint distribution of the parameter vector as $P_{\bm{\gamma}}$.

\subsection{Barren plateaus}\label{sec:bp}
One of the main roadblocks for solving Eq.~\eqref{eq:min-prob} is that of barren plateaus. We say that a loss function exhibits a barren plateau~\cite{larocca2024review} if 
\begin{equation}\label{eq:exp_concentration}
    \Var_{\bm{\gamma}}[\LC(\thv)]\in\order{\frac{1}{b^n}}\,,
\end{equation}
with $b>1$ and where we recall that
\begin{align}
    \Var_{\bm{\gamma}}[\LC(\thv)] = \mathbb{E}_{\bm{\gamma}}[\LC(\thv)^2] - \mathbb{E}_{\bm{\gamma}}[\LC(\thv)]^2\, .  \nonumber
\end{align}
Above, $\mathbb{E}_{\bm{\gamma}}$ denotes the expectation value over the ensemble $\{U(\thv), P_{\bm{\gamma}}(\thv) \}$, i.e., the set of unitaries obtained by sampling parameters according to probability distribution $P_{\bm{\gamma}}(\thv)$. Typically, this ensemble is obtained by sampling parameters from the whole parameter range, and thus makes an average statement about the whole landscape. 

From Chebyshev's inequality,
\begin{align}
    \Pr[|\LC(\thv)-\mathbb{E}_{\bm{\gamma}}[\LC(\thv)]|\geq c ]\leq\frac{ \Var_{\bm{\gamma}}[\LC(\thv)]}{c^2}\,, \nonumber
\end{align}
for some $c>0$, we see that this means that the loss is exponentially concentrated around the mean. Hence, when initializing the PQC, the probability of large deviations from the mean are exponentially suppressed. This implies that exponential precision will typically be needed to resolve the gradients of the loss function that are required to variationally train the PQC. 

As evidenced from the fact that the expectation values above directly depend on the ensemble $\{U(\thv), P_{\bm{\gamma}}(\thv) \}$, the choice of initialization can determine whether the loss function concentrates. In what follows, we will assume that there exist a na\"ive parameter initialization distribution, denoted as $P_{{\rm BP}}$, for which a barren plateau exists, and for which the ensemble $\{U(\thv), P_{\bm{\gamma}}(\thv) \}$ forms a one-design over $\mathbb{U}(\HC)$ (see the next section for a definition of $t$-designs).  For example, consider a  standard variational quantum eigensolver task~\cite{peruzzo2014variational}, where  $O$ is a sum of local Pauli operators acting on $\OC(1)$ qubits, and $\rho$ is the all-zero state (or some other fiducial tensor-product initial state). Then, take $U(\thv)$ to be a hardware efficient ansatz~\cite{kandala2017hardware,cerezo2020variationalreview} composed of $\Theta(\poly(n))$ layers, where at each layer we apply single qubit rotations $e^{-i \theta Y }e^{-i \theta' X }$ to each qubit, followed by two-qubit entangling gates (e.g. CNOTs)  acting on nearest neighbors in a brick-layered fashion. Here, if $P_{{\rm BP}}$ is  obtained by sampling each parameter independently and identically from ${\rm Unif}[- \pi,  \pi]$, it has been proven that such loss will exhibit a barren plateau~\cite{cerezo2020cost}, i.e., $\Var_{{\rm BP}}[\LC(\thv)]\in\order{\frac{1}{b^n}}$. 

\subsection{Moment operators}\label{sec:exp_moment_loss}

To analyze how the parameter distribution affects the existence of barren plateaus, it is convenient to somewhat separate the influence of the PQC, from that of the initial state and the measurement operator. For this purpose, one defines the $t$-th order twirl superoperator
\begin{equation}
    \tau_{\bm{\gamma}}^{(t)}\left(\cdot\right) = \int d\thv P_{\bm{\gamma}}(\thv) U(\thv)^{\otimes t}(\cdot)\left( U(\thv)^\dagger \right)^{\otimes t}\,, 
\end{equation}
from which the loss function moments 
\begin{align}
    \mathbb{E}_{\bm{\gamma}}[\LC(\thv)^t] \,,
\end{align}
take the simple form 
\begin{align}
    \mathbb{E}_{\bm{\gamma}}[\LC(\thv)^t]  = \Tr\left[ \tau_{\bm{\gamma}}^{(t)}\left(\rho^{\otimes t}\right) O^{\otimes t} \right]\,.
\end{align}
Moreover in the vectorized picture, this becomes
\begin{align}\label{eq:vecmoment}
    \mathbb{E}_{\bm{\gamma}}[\LC(\thv)^t] = \langle\!\langle\rho^{\otimes t}|\widehat{\tau}^{(t)}_{\bm{\gamma}}|O^{\otimes t}\rangle\!\rangle\, .
\end{align}
For completeness,  we recall that the vectorization of a matrix $A = \sum_{i,j} A_{i,j}\ket{i}\bra{j}$ is defined as  $
   \kett{A} =  \sum_{i,j} A_{i,j}\ket{i}\ket{j}$; while given a channel $\Phi(\cdot)=\sum_\mu A_\mu (\cdot )B_\mu\ad$, its vectorization is the matrix $\widehat{\Phi}=\sum_\mu A_\mu \otimes  B^*_\mu$. As such, $\widehat{\tau}^{(t)}_{\bm{\gamma}}$ denotes the $t$-th order moment operator  
\begin{equation}\label{eq:twirl-vec}
    \widehat{\tau}^{(t)}_{\bm{\gamma}} = \int d\thv P_{\bm{\gamma}}(\thv) U(\thv)^{\otimes t}\otimes U^*(\thv)^{\otimes t}\, .
\end{equation}
We refer the reader for additional details to Appendix~\ref{app:vec_formalism}. 

For a PQC of the form in Eq.~\eqref{eq:pauliPQC}, it is helpful to write $\widehat{\tau}^{(t)}_{\bm{\gamma}}$ as a product of moment operators of the different gates. That is, for a circuit $U(\thv)=\prod_{l=1}^L U_l(\theta_l)$, where each parameter is independently initialized via $P_{\gamma_l}(\th_l)$, one obtains 
\begin{align}\label{eq:superoperator}
    \widehat{\tau}^{(t)}_{\bm{\gamma}}
    &= \prod_{l=1}^L  \widehat{\tau}^{(t)}_{\gamma_l}\, ,
\end{align}
where  
\begin{equation}\label{eq:layertwirl}
     \widehat{\tau}^{(t)}_{\gamma_l} = \int d\th_l P_{\gamma_l}(\th_l) U_l(\th_l)^{\otimes t}\otimes U_l^*(\th_l)^{\otimes t} \, .
\end{equation}

\subsection{A first-moment witness of escape from the fully concentrated barren-plateau fixed point}

As explained in Section~\ref{sec:bp}, on a landscape that suffers from a barren plateau, loss function values exponentially concentrate to a fixed point. Thus, one way of establishing that a particular initialization distribution $P_{\bm{\gamma}}(\thv)$ may not exhibit a barren plateau is to show that the average loss value for that distribution is substantially different to the fixed point for the full barren plateau landscape. More concretely, one can check whether the following minimal condition is satisfied
\begin{equation}
   | \mathbb{E}_{{\rm BP}}[\LC(\thv)] - \mathbb{E}_{\bm{\gamma}}[\LC(\thv)] | \in \Omega\left( \frac{1}{{\rm poly}(n)}\right)\,.
\end{equation} 
Of course, this diagnostic is not a complete trainability guarantee. In particular, a distribution could have the same first moment as the barren-plateau ensemble while still having large variance, or could have a shifted mean but poor gradients. Nevertheless, first-moment separation is a useful necessary ingredient for the particular mechanism we study here: avoiding the fully washed-out average operator induced by some naive initialization. 

Then, note that using Eq.~\eqref{eq:vecmoment}, this constraint becomes
\begin{equation}\label{eq:necessary}
  \Delta = | \braa{\rho}\widehat{\tau}^{(1)}_{{\rm BP}}\kett{O} - \braa{\rho}\widehat{\tau}^{(1)}_{\vec{\gamma}}\kett{O} |\in \Omega\left( \frac{1}{\poly(n)}\right)\,.
\end{equation} 
For ease of notation, we will henceforth omit the superscript $(1)$ and assume that a moment operator without superscript indicates $t=1$. One can readily see that a necessary, but not sufficient, condition for Eq.~\eqref{eq:necessary} to hold, is that 
\begin{equation}\label{eq:expresscondition}
   \| \widehat{\tau}_{{\rm BP}}\kett{O} - \widehat{\tau}_{\vec{\gamma}}\kett{O} \|_1\in \Omega\left( \frac{1}{\poly(n)}\right) \, ,
\end{equation} 
where $\|\cdot\|_p$ indicates the vector $p$-norm. The vectors $\widehat{\tau}_{{\rm BP}}\kett{O}$ and $\widehat{\tau}_{\vec{\gamma}}\kett{O} $ respectively represent the expected Heisenberg-evolved measurement operator obtained from the circuit ensembles arising from $P_{{\rm BP}}$ and $P_{\vec{\gamma}}$.  Thus, the norm above quantifies how much those two expected operators differ, a necessary condition for the expectation values over $\rho$ to be distinct. We provide a formal proof of these statements in Appendix~\ref{app:necessary_condition}.

At this point, we find it important to make several remarks. First, note that given two $t$-th order moment operators $\widehat{\tau}^{(t)}_{\vec{\alpha}}$ and $\widehat{\tau}^{(t)}_{\vec{\gamma}}$, the following bound holds for any matrix~$A$
\begin{equation}
  \| \widehat{\tau}^{(t)}_{\vec{\alpha}}\kett{A} - \widehat{\tau}_{\vec{\gamma}}^{(t)}\kett{A} \|_1\leq  \| \widehat{\tau}^{(t)}_{\vec{\alpha}} - \widehat{\tau}^{(t)}_{\vec{\gamma}} \|\cdot\|\kett{A}\|_1\,,\label{eq:ineq-norm}
\end{equation}
with $\|\cdot \|$ compatible with $\|\cdot \|_1$ (i.e., satisfying the inequality above). Indeed, this shows that if the moment operator themselves are similar, then differences between expected Heisenberg-evolved matrices will be small (and in turn the expectation value differences will also be small). Then, if we consider the case when $\widehat{\tau}^{(t)}_{\vec{\alpha}}$ and $ \widehat{\tau}^{(t)}_{\vec{\gamma}}$ are respectively obtained from sampling unitaries from a group $G$ according to the associated Haar measure, and from some other distribution, then the norm  $\| \widehat{\tau}^{(t)}_{\vec{\alpha}} - \widehat{\tau}^{(t)}_{\vec{\gamma}} \|$  has previously been called the ``expressivity'' of the PQC~\cite{sim2019expressibility}.   We refer the reader to~\cite{holmes2021connecting} for additional details on how expressivity measures relate to the barren plateau phenomenon.  In particular, if $\widehat{\tau}^{(t)}_{\vec{\gamma}}=\widehat{\tau}^{(t)}_{\vec{\alpha}} $, we say that the ensemble  $\{U(\thv),P_{\vec{\gamma}}(\thv)\}$ forms a $t$-design over $G$. In this case, one can leverage the Weingarten calculus to analytically evaluate the formula for  $\widehat{\tau}^{(t)}_{\vec{\gamma}}$~\cite{mele2023introduction}.

\subsection{Three questions about parameter initialization}

At this stage it is useful to distinguish three questions that are often conflated in discussions of parameter initialization. First, one may ask whether a given initialization strategy avoids the exponentially concentrated barren plateau regime, i.e., \textit{Is the expected observable separated from the BP ensemble?} For a fixed task, defined by the pair $(\rho,O)$, this requires that the average loss under $P_{\vec{\gamma}}$ remains inversely polynomially separated from the barren-plateau fixed point, as in Eq.~\eqref{eq:necessary}. A more tractable operator-level necessary condition is the one in Eq.~\eqref{eq:expresscondition}, namely that the expected Heisenberg-evolved observable under $P_{\vec{\gamma}}$ remains polynomially separated from that induced by $P_{\rm BP}$. 

Second, one may wonder: \textit{Do two such initializations induce different expected observables?} Indeed, even if two initialization strategies both avoid exponential concentration, they need not be equivalent. To address this question, we compare the expected Heisenberg-evolved observables induced by two parameter distributions directly.

\begin{definition}[Expected Heisenberg-evolved operator gap]\label{def:opdep_exp}
Let $U(\theta)$ be a parametrized quantum circuit, let $O$ be an observable, and let $\widehat{\tau}_{\vec{\alpha}}$ and $\widehat{\tau}_{\vec{\gamma}}$ be two first-moment operators associated with probability distributions $P_{\alpha}(\theta)$ and $P_{\gamma}(\theta)$. We define the expected Heisenberg-evolved operator gap as
\begin{equation}
    \xi_{\vec{\alpha},\vec{\gamma}}=\|\widehat{\tau}_{\vec{\alpha}}|O\rangle\!\rangle-\widehat{\tau}_{\vec{\gamma}}|O\rangle\!\rangle\|_1 \,.
\end{equation}
\end{definition}

When $\vec{\alpha}={\rm BP}$ (the initialization leading to a barren plateau), the quantity $\xi_{{\rm BP},\vec{\gamma}}$ measures how strongly $P_\gamma$ leads to a moment operator that remains separated from that associated to the fully concentrated barren-plateau ensemble. More generally, $\xi_{\vec{\gamma}',\vec{\gamma}}$ allows us to compare two potentially barren-plateau-free initialization strategies and ask whether they induce the same or different operator-space biases.

As for the third, more fundamental question,  one may ask: \textit{Do those differences persist to training outcomes?} That is, whether the different biases induced by the parameter distribution ultimately lead to different, or even good, optimization outcomes. In general, this depends not only on the circuit ensemble, but also on the overlap with the input state $\rho$ and on the global geometry of the loss landscape. Therefore, this question cannot be settled in full generality from first moments alone, which is why we complement our analytical framework, presented in the next section, with the numerical study in Section~\ref{sec:numerics}.

\section{Analytical results}\label{sec:an_res}

In this section we derive analytical results for the expected Heisenberg-evolved operator gap introduced above. Theorem~\ref{lem:moment-1-rotation-Pauli} characterizes the first moment of a single initialized gate as a contraction (i.e., multiplication of the moment operator by a prefactor that is less than one) followed by a rotation, while Theorem~\ref{th:sufficient_condition} uses these contraction factors to lower bound the first-moment separation from the barren-plateau ensemble. We then use these tools to compare inequivalent initializations, as well as to show how to obtain exponentially many barren plateau-free strategies.

\subsection{Lower bound on the expected Heisenberg-evolved operator gap}

We begin by presenting the following theorem. We refer the reader to Appendix~\ref{sec:twirt_rot_op} for a proof.
\begin{theorem}\label{lem:moment-1-rotation-Pauli}
    Let $U = e^{-i\theta\sigma_l/2}$, where $\theta$ is sampled according to $P_{\gamma_l}(\theta)$. Then, its associated first moment operator  can be written as a contraction matrix $\eta_{l}$ times a rotation $W_l\otimes W_l^*$. That is
\begin{equation}\label{eq:twirl_as_rot_app}
    \widehat{\tau}_{\gamma_l}  = \left(W_{l}\otimes W_{l}^*\right)\eta_{l}\,,
\end{equation}
where $W_{l}=e^{-i\phi_{l} \sigma_l/2}$, $\phi_{l} = \arccos \left(\frac{a_l}{\sqrt{a_l^2 + b_l^2}}\right)$ and 
\begin{align}
    a_l = \int P_{\gamma_l}(\th) \cos(\th) d\th \,,\quad 
    b_l =\int P_{\gamma_l}(\th)\sin(\th) d\th \,.\nonumber
\end{align}
In particular, the action of the matrix $\eta_{l}$ over Pauli operators is
\begin{equation}
    \eta_{l}\kett{\sigma_i} = 
    \begin{cases}
        \kappa_l \kett{\sigma_i} \, {\rm iff} \, \{\sigma_l,\sigma_i\} = 0\,,\\
        \kett{\sigma_i} \, {\rm iff} \, [\sigma_l,\sigma_i] = 0\,,
    \end{cases}
\end{equation}
where
\begin{equation}
    \kappa_l=\sqrt{a_l^2 + b_l^2}
\end{equation}
which shows that $\eta_{l}$ has positive eigenvalues smaller or equal to $1$, 
\end{theorem}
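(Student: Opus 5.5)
The plan is to compute $\widehat{\tau}_{\gamma_l}$ directly in the Pauli basis. Because $\sigma_l$ is a Pauli, every Pauli $\sigma_i$ either commutes or anticommutes with it, so it is enough to know the action of the moment operator on these two classes. In the vectorized picture, $\widehat{\tau}_{\gamma_l}\kett{\sigma_i}$ is the vectorization of $\int d\theta\, P_{\gamma_l}(\theta)\, e^{-i\theta\sigma_l/2}\sigma_i e^{i\theta\sigma_l/2}$.

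First, if $[\sigma_l,\sigma_i]=0$, the conjugation is trivial, so $\widehat{\tau}_{\gamma_l}\kett{\sigma_i}=\kett{\sigma_i}$ for every $\theta$ and every distribution.

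Second, if $\{\sigma_l,\sigma_i\}=0$, we use $e^{-i\theta\sigma_l/2}\sigma_i=\sigma_i e^{i\theta\sigma_l/2}$. This gives
\begin{equation*}
e^{-i\theta\sigma_l/2}\sigma_i e^{i\theta\sigma_l/2}=e^{-i\theta\sigma_l}\sigma_i=\cos\theta\,\sigma_i - i\sin\theta\,\sigma_l\sigma_i .
\end{equation*}
Averaging over $P_{\gamma_l}$ then yields $a_l\sigma_i - i b_l\sigma_l\sigma_i$. The operator $-i\sigma_l\sigma_i$ is again a Hermitian Pauli that anticommutes with $\sigma_l$. Hence on each two-dimensional invariant subspace $\mathrm{span}\{\kett{\sigma_i},\kett{-i\sigma_l\sigma_i}\}$, the moment operator acts as the matrix
\begin{equation*}
\begin{pmatrix} a_l & -b_l\\ b_l & a_l\end{pmatrix}.
\end{equation*}
To check this, apply the same formula to $-i\sigma_l\sigma_i$: using $\sigma_l^2=\1$, we get $\sigma_l(-i\sigma_l\sigma_i)=-i\sigma_i$, which gives the second column.

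Third, we factor each such $2\times 2$ block in polar form,
\begin{equation*}
\begin{pmatrix} a_l & -b_l\\ b_l & a_l\end{pmatrix}=\kappa_l\begin{pmatrix}\cos\phi_l & -\sin\phi_l\\ \sin\phi_l & \cos\phi_l\end{pmatrix},
\end{equation*}
with $\kappa_l=\sqrt{a_l^2+b_l^2}$, $\cos\phi_l=a_l/\kappa_l$ and $\sin\phi_l=b_l/\kappa_l$. By the calculation in the second step, the rotation block is exactly the action of $W_l\otimes W_l^*$ with $W_l=e^{-i\phi_l\sigma_l/2}$; this is the deterministic case where $P$ is a delta at $\phi_l$. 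On commuting Paulis, $W_l\otimes W_l^*$ acts as the identity.

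Defining $\eta_l$ as $\kappa_l$ on the anticommuting Paulis and $1$ on the commuting ones, $\eta_l$ commutes with $W_l\otimes W_l^*$, because it is proportional to the identity on each block. This gives $\widehat{\tau}_{\gamma_l}=(W_l\otimes W_l^*)\eta_l$. Finally, $\kappa_l=\left|\int P_{\gamma_l}(\theta)e^{i\theta}d\theta\right|\leq 1$ by the triangle inequality, so the eigenvalues of $\eta_l$ lie in $[0,1]$.

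The main subtlety is the sign and branch of $\phi_l$. The formula $\phi_l=\arccos(a_l/\kappa_l)$ only determines $\phi_l$ up to sign, so it must be read together with $\mathrm{sign}(b_l)$; in other words, $\phi_l=\arg(a_l+ib_l)$. A second point is the degenerate case $\kappa_l=0$: there $\phi_l$ is arbitrary and $\eta_l$ annihilates the anticommuting sector, and the statement still holds. Everything else is routine Pauli algebra.
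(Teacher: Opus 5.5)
Your proof is correct and follows essentially the same route as the paper. Both arguments split the Pauli basis into the sectors commuting and anticommuting with $\sigma_l$, average the conjugation to get $a_l\sigma_i - i b_l\sigma_l\sigma_i$ on the anticommuting sector, pull out $\kappa_l$, and recognize the normalized remainder as conjugation by $W_l$; the paper reaches that expression by expanding $\widehat{\tau}_{\gamma_l}$ in the half-angle integrals $I_{k_1,k_2}$, whereas you use $e^{-i\theta\sigma_l/2}\sigma_i e^{i\theta\sigma_l/2}=e^{-i\theta\sigma_l}\sigma_i$.

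Your side remarks are also right and tighten the statement. The paper's $\phi_l=\arccos(a_l/\kappa_l)$ always has $\sin\phi_l\geq 0$, so it gives the wrong rotation when $b_l<0$, and $\phi_l=\arg(a_l+ib_l)$ is needed. The case $\kappa_l=0$ makes $\eta_l$ only positive semidefinite. Finally, $\kappa_l=\left|\int P_{\gamma_l}(\theta)e^{i\theta}d\theta\right|\leq 1$ supplies the bound the paper merely asserts.
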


Theorem~\ref{lem:moment-1-rotation-Pauli} shows that the first moment operator of a single parametrized gate admits a simple operational decomposition, as it acts as a positive contraction followed by a rotation in operator space. More precisely, components of $\kett{O}$ that commute with the generator are left unchanged, whereas components that anticommute with the generator $\sigma_l$  are first suppressed by a factor  $\kappa_l$ and then rotated by $W_l$. Hence, the initialization distribution $P_{\gamma_l}(\theta)$ plays two distinct roles. On one hand, it determines how strongly averaging damps operator components. Then,  it determines the direction in operator space toward which those components are biased. Indeed, we will see next that such interpretation will be central below, as we will find that the damping factors $\kappa_l$ control our lower bound for avoiding barren plateaus, while the rotations $W_l$ can  explain why different barren-plateau-free initializations need not be equivalent.

The next theorem shows that the contraction factors $\kappa_l$ alone already control a lower bound on the expected Heisenberg-evolved operator gap (while the role of the associated rotations will become important when comparing inequivalent initialization strategies). As such, consider the following theorem whose proof can be found in Appendix~\ref{app:proof_th1}.
\begin{theorem}\label{th:sufficient_condition}
Let $U(\thv)$ be a PQC as defined in Eq.~\eqref{eq:pauliPQC} with associated distribution of parameters $P_{\vec{\gamma}}(\thv)$. The expected Heisenberg-evolved operator gap $\xi_{{\rm BP},\vec{\gamma}}$ is lower bounded as
    \begin{equation}
        \xi_{{\rm BP},\vec{\gamma}}\in\Omega\left(\kappa\right)
    \end{equation}
with
\begin{equation}
\kappa=\prod_{l=1}^L \kappa_l \,,\quad\kappa_l=\sqrt{a_l^2 + b_l^2}\,,
\end{equation}
and where 
\begin{align}\label{eq:def_a_main}
        a_l &= \int \cos(\th_l)P_{\gamma_l}(\th_l)d\th_l\,,\\
        b_l &= \int \sin(\theta_l)P_{\gamma_l}(\theta_l)d\th_l\,.\label{eq:def_b_main}
    \end{align}
\end{theorem}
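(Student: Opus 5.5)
The plan is to work entirely in the Pauli basis and combine the one-design property of $P_{\rm BP}$ with the contraction--rotation structure of Theorem~\ref{lem:moment-1-rotation-Pauli}.

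First, since the ensemble under $P_{\rm BP}$ forms a one-design over $\mathbb{U}(\HC)$, Weingarten calculus gives $\widehat{\tau}_{\rm BP}\kett{O}=\frac{\Tr[O]}{2^n}\kett{\id}=0$, because $O$ is traceless. Hence
\begin{equation}
\xi_{{\rm BP},\vec{\gamma}}=\|\widehat{\tau}_{\vec{\gamma}}\kett{O}\|_1\,.
\end{equation}
It therefore suffices to lower bound the norm of the expected Heisenberg-evolved observable. I will expand $\kett{O}=\sum_i c_i\kett{\sigma_i}$ over non-identity Paulis. I will then use the factorization $\widehat{\tau}_{\vec{\gamma}}=\prod_l \widehat{\tau}_{\gamma_l}$ from Eq.~\eqref{eq:superoperator}, together with the fixed gates $V_l\otimes V_l^*$.

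Second, I will propagate $\kett{O}$ gate by gate. The factors $V_l\otimes V_l^*$ are unitary and, for Clifford $V_l$, map Paulis to Paulis. By Theorem~\ref{lem:moment-1-rotation-Pauli}, each $\widehat{\tau}_{\gamma_l}=(W_l\otimes W_l^*)\eta_l$, where $\eta_l$ is diagonal in the Pauli basis with eigenvalues in $\{1,\kappa_l\}$, and $W_l\otimes W_l^*$ is a unitary rotation.

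The key observation is that $\eta_l$ is a positive diagonal operator with smallest eigenvalue at least $\kappa_l$ (since $\kappa_l\le 1$). Consequently, on the Frobenius (vector $2$-) norm,
\begin{equation}
\|\eta_l\kett{A}\|_2\ge\kappa_l\|\kett{A}\|_2\,.
\end{equation}
Unitaries preserve the $2$-norm. Chaining these facts across all $L$ gates gives
\begin{equation}
\|\widehat{\tau}_{\vec{\gamma}}\kett{O}\|_2\ \ge\ \Big(\prod_l\kappa_l\Big)\|\kett{O}\|_2=\kappa\,\|O\|_2\,.
\end{equation}
That is, the product of smallest singular values lower bounds the smallest singular value of the product.

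Third, I will convert back to the $1$-norm used in Definition~\ref{def:opdep_exp}. The vector $1$-norm dominates the $2$-norm, so $\|\widehat{\tau}_{\vec{\gamma}}\kett{O}\|_1\ge\|\widehat{\tau}_{\vec{\gamma}}\kett{O}\|_2\ge\kappa\|O\|_2$. With the normalization of $O$ treated as fixed (the $\|O\|_2$ factor is absorbed into the $\Omega$), this yields $\xi_{{\rm BP},\vec{\gamma}}\in\Omega(\kappa)$.

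The main obstacle I expect is bookkeeping around the norms. The $\kappa_l$ contraction bound is clean in the $2$-norm, but the rotations $W_l\otimes W_l^*$ are not $1$-norm isometries in the Pauli/computational basis. This is exactly why I route the argument through the $2$-norm and only compare to the $1$-norm at the end, rather than tracking Pauli-coefficient $1$-norms gate by gate. A second subtlety is the precise normalization convention under which the constant $\|O\|_2$ is considered $\Theta(1)$ relative to $\kappa$; I would state the bound as $\xi\ge\kappa\|O\|_2$ explicitly and then read off the asymptotic form.
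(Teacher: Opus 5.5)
Your proof is correct, and its core is the same as the paper's: both bound $\|\widehat{\tau}_{\vec{\gamma}}\kett{O}\|_2^2=\braa{O}\widehat{\tau}_{\vec{\gamma}}^\dagger\widehat{\tau}_{\vec{\gamma}}\kett{O}$ from below by $\kappa^2\|O\|_2^2$, using that each $\eta_l$ has smallest eigenvalue $\kappa_l$ and that the $W_l\otimes W_l^*$ and $V_l\otimes V_l^*$ factors are unitary. The only difference is the norm bookkeeping: the paper detours through $\max_{\rho}|\braa{\rho}\widehat{\tau}_{\vec{\gamma}}\kett{O}|=\|\tau_{\vec{\gamma}}(O)\|_\infty\geq \|\tau_{\vec{\gamma}}(O)\|_2/\sqrt{2^n}$ to connect with the expectation-value criterion, ending at $\kappa\|O\|_2/\sqrt{2^n}$, whereas your direct step $\|\cdot\|_1\geq\|\cdot\|_2$ is shorter, avoids that lossy round trip, and makes the dependence on the normalization of $O$ explicit.
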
 
Theorem~\ref{th:sufficient_condition} is quite general and can be used to assess how a given parameter-initialization strategy affects the gap $ \xi_{{\rm BP},\vec{\gamma}}$ via the damping factors $\kappa_l$. When this lower bound is inverse-polynomial, the initialization passes our first-moment escape criterion. 

We now apply this result to several concrete distributions. As a first sanity check, we consider the trivial tensor-product PQC example studied in Ref.~\cite{mhiri2025unifying}. Although Ref.~\cite{mhiri2025unifying} analyzes this setting through second-moment operators, here we work at the level of first moments. While the two approaches do not address exactly the same question, the first moment still suffices to diagnose when barren plateaus can be avoided, while leading to significantly simpler calculations.

\begin{example}[Tensor product PQC]
    \textit{Consider a PQC composed of a tensor product of single qubit rotations
   \begin{equation}
        U(\thv) = \prod_{j=1}^{n}e^{-i\th_{1,j}\sx^{(j)}/2} e^{-i\th_{2,j}\sz^{(j)}/2}e^{-i\th_{3,j}\sx^{(j)}/2}\,,
    \end{equation}
i.e. $\GC=\{\sx^{(j)},\sz^{(j)}\}_{j=1}^n$ with $\sigma_\mu^{(j)}$ ($\mu=x,y,z$) a Pauli operator on qubit $j$, and let $O = \sigma_z^{\otimes n}$. Then, choose $P_{\gamma}(\th_j)= {\rm Unif}[- r,  r] \; \forall \; j$. Here we find for all parameters
\begin{equation}
    a= \sinc(r)\,,\quad b = 0\,,
\end{equation}
and hence 
    \begin{equation}
       \kappa = \left(\frac{1}{2r}\int_{-r}^r \cos(\th) \right)^{3n } = {\rm sinc}^{3n}(r)\,.
    \end{equation}
Note that here we removed the parameter dependence from $a,b,\kappa$ as they are all the same. Using the fact that $\sinc(r) = (\sin r)/r$, one obtains 
    \begin{equation}
        r\in\Theta\left(\frac{1}{\sqrt{n}}\right)\Rightarrow  \xi_{{\rm BP},\vec{\gamma}}\in \Omega(1)\,.
    \end{equation}}
\end{example}

Next, let us use Theorem~\ref{th:sufficient_condition} to re-derive more complex results from the literature. We begin with identity-inspired initializations~\cite{grant2019initialization,wang2023trainability, park2023hamiltonian,park2024hardware, zhang2022escaping, chang2024latent,shi2024avoiding,cao2024exploiting} and then show that the same barren-plateau scaling is achieved by a continuous family of shifted, generally non-identity initializations. As before, we stress that our technique only uses first moment operators, instead of second ones as in the original references.

\begin{example}[Close to Identity Initialization]\label{ex:reproducing_id}
\textit{Consider a PQC of the form
    \begin{equation}
        U(\thv) = \prod_{l=1}^L V \prod_{j=1}^n e^{-i\th_{j,l,x}\sigma_x^{(j)}/2}e^{-i\th_{j,l,y}\sigma_y^{(j)}/2}\,,
    \end{equation}
where  $\GC=\{\sx^j,\sy^j\}_{j=1}^n$,  $V$ is an entangling layer composed of a ladder of $C_z$ gates, and where $O = \sum_i \sigma_i$. Consider sampling the parameters independent and identically distributed (i.i.d.) from  ${\rm Unif}[- r,  r]$, then
\begin{equation}
    a = {\rm sinc}(r)\,,\quad b = 0\,, 
\end{equation}
which means we can bound the expected Heisenberg-evolved operator gap as
\begin{equation}\label{eq:lowerbound_reproducing_id}
        \xi_{{\rm BP},\vec{\gamma}} \in \Omega\left( a^{2 L n}\right)\,.
    \end{equation}
Thus, and similarly to the example above, we find
    \begin{equation}
        r\in\Theta\left(\frac{1}{\sqrt{Ln}}\right)\Rightarrow  \xi_{{\rm BP},\vec{\gamma}}\in \Omega(1)\,.
    \end{equation}}
\end{example}

\begin{example}[Gaussian initialization]\label{example-Gaussian}
\textit{Consider the same circuit and measurement operator as the ones presented in Example~\ref{ex:reproducing_id}. However, instead of considering a uniform distribution over a smaller region of the space, we assume that all the parameters are sampled i.i.d. from a normal (Gaussian) distribution with a mean $\mu = 0$ and a variance $\varsigma$, i.e., $\NC(0,\varsigma^2)$. Then, we find  for all parameters
\begin{equation}
    a = e^{-\varsigma^2/2}\,,\quad b=0\,, 
\end{equation}
 and hence
    \begin{equation}
       \xi_{{\rm BP},\vec{\gamma}}\in \Omega\left(e^{-Ln\varsigma^2}\right)\,,
    \end{equation}
from where
\begin{equation}
    \varsigma\in\Theta\left(\frac{1}{\sqrt{nL}}\right)\Rightarrow\xi_{{\rm BP},\vec{\gamma}}\in \Omega\left(1\right)\,.
\end{equation}}
\end{example}

A natural interpretation of Examples~\ref{ex:reproducing_id} and~\ref{example-Gaussian} is that avoiding barren plateaus may simply amount to centering the parameters around zero (and thus potentially simply initializing near the identity). However, this strategy does not always lead to avoiding barren plateaus~\cite{mhiri2025unifying}, and as shown by our next example is too restrictive of an interpretation.

\begin{example}[Initialization that is not centered around zero.]\label{example-not-centered}
\textit{Consider again the circuit and observables in Ex.~\ref{ex:reproducing_id}, but assume that all parameters are sampled i.i.d. from  ${\rm Unif}[\thv^*-r,\thv^*+r]$ for some $\thv^*$. Under these constraints we find for all parameters
\begin{align}
    a = \cos (\th^* )\sinc (r)\,, \quad 
    b  = \sin (\th^* ) \sinc (r)\, ,
\end{align}
leading to
\begin{equation}
     \xi_{{\rm BP},\vec{\gamma}}\in \Omega\left( \sinc^{2Ln}(r) \right)\, \,\, \, \forall \th^*  .
\end{equation}
Indeed, this is the exact same lower bound as that in Eq.~\eqref{eq:lowerbound_reproducing_id}. Thus we again find that
\begin{equation}
        r\in\Theta\left(\frac{1}{\sqrt{Ln}}\right)\Rightarrow  \xi_{{\rm BP},\vec{\gamma}}\in \Omega(1) \, .
\end{equation}}
\end{example}

Here, we find it important to highlight the significance of Example~\ref{example-not-centered}. In particular, we can see that although the coefficients $a$ and $b$ depend on the center $\theta^\ast$, the damping factor $\kappa=\sqrt{a^2+b^2}$, and hence the lower bound from Theorem~\ref{th:sufficient_condition}, depends only on the width $r$. This does not mean that all these points will avoid exponential concentration (as one can expect that the overlap with the initial state can be very small for most values of $\thv^*$).

To finish, we use  Theorem~\ref{th:sufficient_condition} to derive bounds for new potential initialization strategies. In particular, consider the following case.

\begin{example}[Sinusoidal initialization strategy]
\textit{Consider the PQC 
\begin{equation}
    U(\thv) = \prod_{l=1}^L V_l \prod_{j=1}^n e^{-i\th_{j,l,x}\sigma_x^{(j)}/2}e^{-i\th_{j,l,y}\sigma_y^{(j)}/2}e^{-i\th_{j,l,z}\sigma_z^{(j)}/2}\,,
\end{equation}
where $V_l$ are arbitrary non parametrized gates, a Pauli observable $O$, and the following probability distribution for all parameters
\begin{equation}
    P_{\gamma}(\theta)
    = \frac{\sqrt{\pi}\,\Gamma\!\left[\frac{\gamma+1}{2}\right]}
           {\Gamma\!\left[1+\frac{\gamma}{2}\right]}
      \sin^{\gamma}(\theta)\,,
    \qquad \theta\in[0,\pi]\,,
\end{equation}
where $\gamma>0$, and where $\Gamma(x) = \int_0^\infty t^{x-1}e^{-t}dt$ for any $x>0$ denotes the Gamma function. Then, we find
\begin{align}
    a &= 0\,,\quad
    b = \frac{\Gamma^2\!\left[\frac{\gamma+2}{2}\right]}
             {\Gamma\!\left[\frac{\gamma+1}{2}\right]
              \Gamma\!\left[\frac{\gamma+3}{2}\right]}\,.
\end{align}
and therefore  $\xi_{{\rm BP},\vec{\gamma}}\in\Omega(\kappa)$ with
\begin{equation}
    \kappa =  b^{3L n} = \left(\frac{\Gamma^2 \left[\frac{\gamma+2}{2}\right]}{\Gamma \left[\frac{\gamma+1}{2}\right] \Gamma\left[\frac{\gamma+3}{2}\right]}\right)^{3Ln}\,.
\end{equation}
Using that $\Gamma\left[\frac{\gamma+3}{2}\right] = \frac{\gamma+1}{2}\Gamma\left[\frac{\gamma+1}{2}\right]$, as well as Gautschi's inequality~\cite{wendel1948note} $\Gamma[\gamma/2+1]/\Gamma[\gamma/2+1/2]>\sqrt{\gamma/2}$, we can simplify the statement to
\begin{align}
    \kappa > \left(\frac{\gamma}{\gamma+1}\right)^{3Ln}\in\Omega(1)
\end{align}
whenever $\gamma\in\Theta(L n)$.}
\end{example}

We conclude this section by emphasizing both what Theorem~\ref{th:sufficient_condition} establishes and what it does not. On the one hand, the condition $\xi_{{\rm BP},\vec{\gamma}}\in\Omega(1)$ provides a simple and highly general certificate that an initialization strategy can avoid the fully concentrated barren-plateau regime, and the examples above show that this certificate is satisfied by many distinct families of parameter distributions  proposed in  the literature, not just by a single special choice. On the other hand, this condition is only a partial certificate, as guaranteeing that gradients do not vanish exponentially also requires accounting for the overlap between the twirled operators and the initial state $\rho$, so that Eq.~\eqref{eq:necessary} is satisfied. Determining when this overlap remains sufficiently large cannot be done in full generality, as it depends on the specific choice of initial state, observable, circuit, and initialization strategy. However, below we will present a specific practical case where a simple initial state can have large overlap with the expected Heisenberg-evolved
operators, and thus lead to non-concentrated landscapes, for exponentially many initializations.

Nevertheless, our lower bound on the expected Heisenberg-evolved operator gap already makes one central point clear: avoiding barren plateaus might not remove the initialization problem, but rather transforms it. Indeed, there may be many families of initializations that avoid exponential concentration, and the relevant question then becomes whether these strategies are effectively equivalent or whether one must choose among them because they bias the optimization toward different regions of the landscape. 

In this sense, our framework does more than recover known special initializations, it provides a systematic way of identifying broad classes of candidate barren-plateau-free strategies, while simultaneously motivating the need to understand how these different choices affect trainability beyond the initial gradient scale. In particular, once many barren-plateau-free initializations are available, trainability may no longer be limited by finding \emph{an} initialization with non-exponentially-concentrated loss function, but by deciding \emph{which} such initialization to use.

\subsection{Comparing initialization strategies}\label{sec:different_strategies}

The previous section showed that Theorem~\ref{th:sufficient_condition} can certify many distinct parameter distributions for which $\xi_{{\rm BP},\gamma}\in\Omega(1)$, and thus that avoiding barren plateaus is already a non-unique problem. The next question is whether these candidate initializations are effectively interchangeable.  Theorem~\ref{lem:moment-1-rotation-Pauli} suggests that they are not. In particular, two initialization strategies may have comparable damping factors $\kappa_l$, and hence similar barren-plateau scaling, while still inducing different rotations and therefore different operator-space biases. This motivates us to study the expected Heisenberg-evolved operator gap $\xi_{\gamma',\gamma}$ between two initialization strategies. A large value of $\xi_{\gamma',\gamma}$ means that the two strategies induce different expected observables and should therefore be regarded as different inductive biases on the optimization landscape.

To begin, we present a trivial single-qubit example of how two different non-concentrating initializations lead to different expected operators.

\begin{example}[Single-qubit PQC initialization strategies]\label{ex:trivial_diff_strategy}
\textit{Consider a single qubit PQC  $ U(\th) = e^{-i\theta\sigma_x/2}$, and take $O = \sigma_z$. Then, assume two deterministic initialization strategies such that in the first we always initialize to $\theta = 0$ and in the second to $\theta = \pi/2$: i.e., $P_1(\th) = \delta(\th)$ and $P_2(\th) = \delta(\th-\pi/2)$. This leads to $a_1 = 1,\, b_1 = 0$ and $a_2 = 0,\, b_2 = 1$ and  
\begin{align}
    \widehat{\tau}_1\kett{\sigma_z} =& \kett{\sigma_z}\, ,\\
    \widehat{\tau}_2\kett{\sigma_z} =-& \kett{\sigma_y}\, .
\end{align}
Clearly, there will be initial states for which the expectation values differ, meaning that the two  initialization strategies significantly bias the model to different landscape regions. This in-equivalence can be diagnosed by the expected Heisenberg-evolved operator gap   $\xi_{1,2} = \| \kett{\sz} +\kett{\sy}\|_1 = 2$. }
\end{example}

Example~\ref{ex:trivial_diff_strategy} is particularly useful because it isolates the role of the rotation in Theorem~\ref{lem:moment-1-rotation-Pauli}. Indeed, both initialization strategies are deterministic, so there is no averaging-induced damping to distinguish them; instead, their inequivalence arises entirely from the fact that they rotate the observable toward different Pauli directions. In this sense, the example shows explicitly that two initialization strategies can have comparable barren-plateau scaling while still inducing different operator-space biases through different first-moment rotations. In addition, at this point we also note that in Example~\ref{ex:trivial_diff_strategy} one distribution leads to $a=0$, while the other one to $b=0$. This distinction can be formalized via the helpful notion of symmetric and antisymmetric initialization strategies.

\begin{definition}[Symmetric/antisymmetric initialization strategy]\label{def:symmetric_anti}
    Let $P_{\gamma}(\theta)$ be an initialization strategy. We will say that it is symmetric when
    \begin{equation}
    b_\gamma = 0 \,,
    \end{equation}
    and antisymmetric if 
    \begin{equation}
     a_\gamma = 0\, .
    \end{equation}
\end{definition}

As such, in Example~\ref{ex:trivial_diff_strategy} we would say that $P_{1}$ is symmetric, whereas $P_{2}$ is antisymmetric. Importantly, we note that this distinction is not merely semantic. Example~\ref{ex:trivial_diff_strategy} already shows that symmetric and antisymmetric initialization strategies can induce markedly different expected observables, and hence substantially different inductive biases in the PQC. More generally, these two classes provide elementary local choices that can be assigned gate by gate, and therefore a natural route to constructing broad families of initialization strategies. Indeed, related notions have also appeared in the context of diagnosing barren plateaus~\cite{heyraud2023efficient}. In Table~\ref{tab:coefs_main} we summarize several representative examples of symmetric and antisymmetric examples, but we highlight that it is by no means comprehensive.

Then, we note that in Appendix~\ref{ap-sec:symmetric-strategis} we present a general recipe to compute $\widehat{\tau}_{\vec{\gamma}_{\rm sym}}$ when all PQC parameters are initialized according to symmetric distributions. In what follows, we illustrate the power of this framework by computing $\xi_{\gamma',\gamma}$ in three representative settings: when both initializations are symmetric, and when one is symmetric and the other is the identity initialization.  We refer the reader to Appendix~\ref{ap-sec:examples} for additional details.

\renewcommand{\arraystretch}{1.8}
\begin{table*}[t]
    \centering
    \begin{tabular}{|c|c|c|c|}
    \hline
        Initialization strategy & $a$ & $b$ & $\kappa=\sqrt{a^2+b^2}$\\
        \hline\hline
         $\sin(\theta)$, $\theta\in[0,\pi]$ & $0$ & $\frac{\pi}{4}$ & $\frac{\pi}{4}$ \\ 
         Unif$[- r, r]$ & $\frac{\sin r}{r}$ & $0$ & $\frac{\sin r}{r}$\\
         Unif$[\theta^\ast- r,\theta^\ast+r]$ & $\frac{\cos \theta^\ast \sin r}{r}$ & $\frac{\sin \theta^\ast \sin r}{r}$ &$\frac{\sin r}{r}$ \\
         $\mathcal{N}(0,\sigma)$ & $e^{-\sigma^2/2}$ & $0$  & $e^{-\sigma^2/2}$\\
         $\mathcal{N}(\mu,\sigma)$ & $e^{-\sigma^2/2}\cos\mu$ & $e^{-\sigma^2/2}\sin\mu$ & $e^{-\sigma^2/2}$\\
         Beta$\left( \frac{\theta}{2\pi}, \alpha, \beta \right)$ & ${}_2F_3^{(1)}$ &  $\pi ^2 \alpha  2^{-\alpha -\beta } \Gamma (\alpha +\beta ) \, {}_2\tilde{F}_3^{(2)}$ & $\sqrt{\left( {}_2F_3^{(1)} \right)^2 + \left( \pi ^2 \alpha  2^{-\alpha -\beta } \Gamma (\alpha +\beta ) \, {}_2\tilde{F}_3^{(2)} \right)^2}$\\
         \hline
    \end{tabular}
    \caption{\textbf{Representative initialization families.} Different initializations can be characterized through the first-moment coefficients $a=\int P_\gamma(\theta)\cos\theta\,d\theta$ and $b=\int P_\gamma(\theta)\sin\theta\,d\theta$ of Theorem~\ref{lem:moment-1-rotation-Pauli}, which in turn define the associated damping factor $\kappa=\sqrt{a^2+b^2}$. Symmetric and antisymmetric initializations correspond only to the special cases $b=0$ and $a=0$. More generally, many shifted or biased distributions have both $a\neq 0$ and $b\neq 0$, showing that the space of candidate barren-plateau-free initializations is substantially richer than this binary distinction. We use ${}_2F_3^{(1)} = {}_2F_3\left(\frac{\alpha }{2}+\frac{1}{2},\frac{\alpha }{2};\frac{1}{2},\frac{\alpha }{2}+\frac{\beta }{2},\frac{\alpha }{2}+\frac{\beta }{2}+\frac{1}{2};-\pi ^2\right)$ and ${}_2\tilde{F}_3^{(2)} = {}_2\tilde{F}_3\left(\frac{\alpha +1}{2},\frac{\alpha +2}{2};\frac{3}{2},\frac{1}{2} (\alpha +\beta +1),\frac{1}{2} (\alpha +\beta +2);-\pi^2\right)$, where ${}_2F_3$ denotes the Generalized hyper-geometric function. }
    \label{tab:coefs_main}
\end{table*}

\begin{example}[Symmetric initializations in the Hamiltonian variational ansatz]\label{ex:HVA}
\textit{Consider a Hamiltonian variational ansatz~\cite{wecker2015progress,park2023hamiltonian} for a one-dimensional $XYZ$ Heisenberg chain. Here we have $\GC=\{\sx^{(i)}\sx^{(i+1)},\sy^{(i)}\sy^{(i+1)},\sz^{(i)}\sz^{(i+1)}\}_{i=1}^{n-1}\cup\{\sz^{(i)}\}_{i=1}^n$. Next, assume that all the parameters are initialized according to the same symmetric initialization strategy $P_{\gamma_{\rm sym}}(\theta)$, and that the observable takes the form $O=\sum_i \sx^{(i)}\sy^{(i+1)}$. Then, as we  show in Appendix~\ref{app:hva}, we can express $\widehat{\tau}_{\vec{\gamma}_{\rm sym}}\kett{O}$ as
    \begin{align}
        \widehat{\tau}_{\vec{\gamma}_{\rm sym}}\kett{O} = &\sum_{i = 2}^{n-2}a^{9 L}\kett{\sx^{(i)}\sy^{(i+1)}} + \nonumber\\
        &a^{8 L}\kett{\sx^{(1)}\sy^{(2)}}+a^{8 L}\kett{\sx^{(n-1)}\sy^{(n)}}\,,\label{eq:sym-id-tau}
    \end{align}
    where $\vec{\gamma}_{\rm sym}$ is such that $a_{\vec{\gamma}_{\rm sym}}\neq0$ and $b_{\vec{\gamma}_{\rm sym}} = 0$.     Then, take a second initialization $P_{\vec{\gamma'}_{\rm sym}}(\thv)$ where all gates are initialized with $P_{\gamma_{\rm sym}}(\th)$, except for the one gate with parameter $\th_1$ generated by $\sz^{(1)}$ in the last layer, for which we use $P_{\gamma'_{\rm sym}}(\th_1)$. Then if we call $a' = \int P_{\gamma'_{\rm sym}}(\th_1) \cos \theta_1$, we  find
     \begin{align}
            \widehat{\tau}_{\vec{\gamma}'_{\rm sym}}\kett{O} = &\sum_{i = 2}^{n-2}a^{9 L}\kett{\sx^{(i)}\sy^{(i+1)}} + \\
            &a^{7 L}(a')^L\kett{\sx^{(1)}\sy^{(2)}}+a^{8 L}\kett{\sx^{(n-1)}\sy^{(n)}}\,.
    \end{align}
This means that the expected Heisenberg-evolved operator gap between these two probability distributions is
   \begin{align}\label{eq:sim-vs-sim}
    \xi_{\vec{\gamma}',\vec{\gamma}}
    = \|\widehat{\tau}_{\vec{\gamma}'_{\rm sym}}\kett{O}
      -\widehat{\tau}_{\vec{\gamma}_{\rm sym}}\kett{O}\|_1
    = a^{7L}\left|a^L-(a')^L\right|\,.
\end{align} 
Note that in this case, $\xi_{\vec{\gamma}',\vec{\gamma}}$ is 
 \begin{align}
     \xi_{\vec{\gamma}',\vec{\gamma}} = |\xi_{{\rm BP}, \gamma} - \xi_{{\rm BP}, \gamma'}|
 \end{align}
 and therefore, if the initializations do not concentrate (and $a^L\not=(a')^L$) we can guarantee that
 \begin{align}
     \xi_{\vec{\gamma}',\vec{\gamma}}\in\Omega(1)\,.
 \end{align}}

\end{example}
In contrast with Example~\ref{ex:trivial_diff_strategy}, where inequivalence arises purely from the rotational component of Theorem~\ref{lem:moment-1-rotation-Pauli}, Example~\ref{ex:HVA} shows that even within the symmetric class, different initialization strategies can still be inequivalent at the level of the expected Heisenberg-evolved observable and through the different contraction weights they assign to the surviving operator components. We can readily extend our analysis to compare a symmetric initialization with an identity initialization~\cite{grant2019initialization}, where all parameters are simply set to zero.

\begin{example}[Symmetric and identity initializations in the Hamiltonian variational ansatz]\label{ex:HVA-identity}
\textit{Consider the Hamiltonian variational ansatz of the previous example. Then, we can readily combine Eq.~\eqref{eq:sym-id-tau} with the fact that under an identity initialization $\widehat{\tau}_{\rm id}=\id\otimes \id$ (so that for any $O$ one has $\widehat{\tau}_{\rm id}\kett{O}=\kett{O})$ to find
\begin{align}
    \xi_{{\rm id},\vec{\gamma}}
    = \|\widehat{\tau}_{\rm id}\kett{O}
      -\widehat{\tau}_{\vec{\gamma}}\kett{O}\|_1
    = a^{7L}\left|a^L-1\right|\,.
\end{align}
Similarly, we find that if the two distributions avoid BPs, they are different
 \begin{align}
    \xi_{\vec{\gamma}',\vec{\gamma}} = |\xi_{{\rm BP}, \gamma} - \xi_{{\rm BP}, \gamma'}|\in\Omega(1)\,.
 \end{align}}
\end{example}

\subsection{Exponentially many initializations}

In this section we explicitly show how standard barren-plateau-free initialization strategies can be readily promoted to exponentially large families from which to choose from. For this purpose, let us consider an $n$-qubit PQC composed of general single-qubit gates interleaved with diagonal entangling gates (see for example Fig.~\ref{fig:qckt_ela}). Next, assume that each parameter in the single-qubit gates is sampled with respect to  $\NC(\theta^*,\varsigma^2)$, where is $\theta^*\in\{0,\pi\}$. That is, each parameter is sampled from a Gaussian centered at either $0$ or $\pi$. Since the center can be chosen independently for each gate, this produces an exponentially large family of initialization strategies.

\begin{figure}
    \centering
    \includegraphics[width=\linewidth]{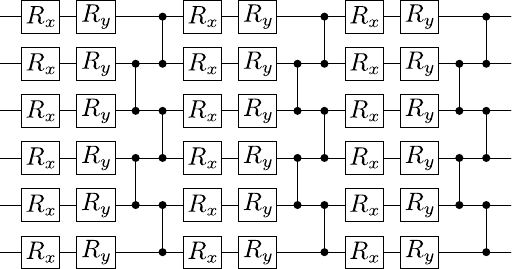}
    \caption{\textbf{Circuit ansatz.} We consider a PQC composed of single-qubit $R_x$ and $R_y$ rotations followed by nearest-neighbor $CZ$ entangling gates. The example shown corresponds to $n=6$ qubits and $L=3$ layers.}
    \label{fig:qckt_ela}
\end{figure}

Then, we can show that the following Theorem holds (see Appendix~\ref{app-prood-theo3} for a proof).
\begin{theorem}\label{theo:bp-free}
    Let $U(\thv)$ be a PQC composed of general single-qubit gates interleaved with diagonal entangling gates $V_l$, i.e., 
       \begin{equation}\nonumber
        U(\thv) =\prod_{l=1}^L \prod_{j=1}^{n}e^{-i\th_{1,j,l}\sx^{(j)}/2} e^{-i\th_{2,j,l}\sz^{(j)}/2}e^{-i\th_{3,j,l}\sx^{(j)}/2} V_l\,,
    \end{equation}
    let $O$ be sum of local Pauli operators acting on $\OC(1)$ qubits, and $\rho$ the all-zero state, $\ketbra{0}^{\otimes n}$. Then, for all exponentially many initialization choices such that the single-qubit gate's parameters are independently sampled with respect to  $\NC(\theta^*,\varsigma^2)$, where $\theta^*\in\{0,\pi\}$ and $ \varsigma\in\Theta\left(\frac{1}{\sqrt{L}}\right)$, the model is barren plateau-free, as
\begin{equation}\label{eq:exp_concentration}
    \Var_{\bm{\gamma}}[\LC(\thv)]\in\Omega\left(\frac{1}{\poly(n)}\right)\,.
\end{equation}
\end{theorem}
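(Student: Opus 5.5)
The plan has two stages. First, use the Pauli structure to reduce every one of the exponentially many centerings $\theta^*\in\{0,\pi\}$ to the zero-centered case. Then bound the variance of the zero-centered case from below with the first-moment machinery of Theorem~\ref{lem:moment-1-rotation-Pauli}.

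\emph{Step 1: reduction to zero-centered initialization.} Write each parameter as $\theta=\theta^*+\delta$ with $\delta\sim\NC(0,\varsigma^2)$. When $\theta^*=\pi$,
\begin{equation*}
e^{-i(\pi+\delta)\sigma/2}=-i\,\sigma\,e^{-i\delta\sigma/2},
\end{equation*}
so up to an irrelevant global phase each $\pi$-centered gate is a zero-centered gate preceded by a Pauli $\sigma\in\{\sx^{(j)},\sz^{(j)}\}$. I will push all these Paulis to one end of the circuit.
\begin{itemize}
\item Through a single-qubit rotation, a Pauli either commutes with the generator or anticommutes with it; in the latter case it maps $e^{-i\delta\sigma/2}\mapsto e^{+i\delta\sigma/2}$. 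Since $\NC(0,\varsigma^2)$ is invariant under $\delta\to-\delta$, the induced distribution is unchanged.
\item Through a diagonal entangler, $P V_l P$ is again diagonal for any Pauli string $P$, because $\sx$ permutes computational basis states and $\sz$ commutes with $V_l$. So $V_l$ is simply replaced by another diagonal gate $V_l'$.
\item The accumulated Pauli string $P$ then acts on $\ketbra{0}^{\otimes n}$ and yields a computational basis state $\ketbra{x}$.
\end{itemize}
Each choice of centers is therefore equivalent in distribution to a circuit of the same architecture with diagonal entanglers $V'_l$, all parameters i.i.d.\ $\NC(0,\varsigma^2)$, and input state $\ketbra{x}$. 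Conjugation by $P$ only flips signs of Pauli terms, so the observable remains a sum of $\OC(1)$-local Paulis. It thus suffices to prove the variance bound for this zero-centered family, uniformly over $x$ and over the diagonal entanglers.

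\emph{Step 2: variance lower bound via first moments.} Since the parameters are independent, I will use the ANOVA (Efron--Stein) decomposition:
\begin{equation*}
\Var[\LC]\geq \sum_k \Var_{\theta_k}\big(\mathbb{E}[\LC\,|\,\theta_k]\big).
\end{equation*}
The conditional mean $\mathbb{E}[\LC|\theta_k]$ is exactly a loss in which every gate except gate $k$ is replaced by its first-moment operator. By Theorem~\ref{lem:moment-1-rotation-Pauli}, with $b=0$ and $a=e^{-\varsigma^2/2}$, every such operator is a pure contraction with no rotation. Each Pauli component is multiplied by $a$ each time it passes an anticommuting generator, and the diagonal $V'_l$ are not averaged at all.

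I will pick a gate $k$ in the last single-qubit layer acting on the support of a term $\sigma_i$ of $O$ that anticommutes with its generator. Then $\mathbb{E}[\LC|\theta_k]=A+B\cos\theta_k+C\sin\theta_k$. Its variance under $\NC(0,\varsigma^2)$ is $\Omega(\varsigma^2 C^2)$ when $\varsigma=\Theta(1/\sqrt{L})$, so the task is to show $|C|\in\Omega(1/\poly(n))$. Here $C$ is the overlap of $\ketbra{x}$ with the rotated-then-averaged term, e.g.\ the $\sy$ or $\sz$ component obtained by rotating $\sigma_i$. It equals the number of surviving Pauli paths times a product of factors $a$.

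\emph{Main obstacle.} The hard step is controlling how many factors $a$ accumulate, and showing that some diagonal (i.e.\ $\sz$-type) component survives to be read out by $\ketbra{x}$. The counting works because diagonal entanglers map $\sz$-strings to themselves. So if the branch is chosen to become a $\sz$-type component right after gate $k$, every subsequent layer either leaves it invariant (the $\sz$ generator commutes) or contracts it by $a$ only at the $\OC(1)$ $\sx$-rotations on its support.

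The total exponent is then $\OC(L)$ rather than $\OC(nL)$, giving $|C|\geq a^{\OC(L)}=e^{-\OC(L\varsigma^2)}\in\Omega(1)$. I expect the delicate point to be the $\sx$-rotations: they branch $\sz$ into $\sy$ and generate off-diagonal parts that $V'_l$ can spread. I would first try to show, using $b=0$ and the exact form of $\eta_l$, that the unrotated branch carries weight exactly $a^{\#}$ and cannot be cancelled by other branches, since those branches end in non-diagonal Paulis that have zero overlap with $\ketbra{x}$. Combining this with $\varsigma^2\in\Theta(1/L)$ gives $\Var_{\bm{\gamma}}[\LC]\in\Omega(1/L)\subseteq\Omega(1/\poly(n))$ for every choice of centers.
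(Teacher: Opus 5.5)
Your Step 1 is correct. It is a global version of what the paper does gate by gate: the generalized Lemmas B.1--B.2 note that $\theta=\pi\zeta+\delta$ only contributes a sign $(-1)^\zeta$ that disappears upon squaring, and the paper then follows the second-moment recursion of Ref.~\cite{zhang2022escaping}.

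The gap is in Step 2. The first-order term $\Var_{\theta_k}(\mathbb{E}[\LC|\theta_k])$ vanishes identically for many admissible observables, so no choice of $k$ can give the bound.

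\begin{itemize}
\item With $b=0$ and $a>0$, every averaged rotation is \emph{diagonal} in the Pauli basis: $W_l=\id$. So the $\sz\to\sy$ branching you worry about never occurs.
\item Conjugation by a diagonal $V'_l$ preserves the bit-flip pattern of each component, i.e.\ the set of qubits carrying $\sx$ or $\sy$.
\item The single unaveraged gate can change that pattern only on its own qubit, and only if it is an $\sx$-rotation.
\item $\ketbra{x}$ detects only the empty pattern.
\end{itemize}
Hence $\mathbb{E}[\LC|\theta_k]$ depends on $\theta_k$ only through terms of $O$ with at most one off-diagonal factor. This has two consequences.

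(i) For $Z$-string observables such as $\sz^{(1)}$ or the MaxCut Hamiltonian, the $\sin\theta_k$ branch always carries a $\sy$. So your $C$ is zero for every $k$, and only $B$ survives, giving $\Theta(\varsigma^4B^2)$ rather than $\Omega(\varsigma^2C^2)$. This one is fixable.

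(ii) Consider observables made only of terms with two or more off-diagonal factors, e.g.\ $O=\sum_i\sx^{(i)}\sx^{(i+1)}$. The same happens for $O=\sx^{(1)}$ when the $V_l$ are CZ ladders, since the averaged evolution then never changes the $\sx$ label on qubit $1$. Here both $B$ and $C$ vanish for every gate, so $\sum_k\Var(\mathbb{E}[\LC|\theta_k])=0$, although the true variance is inverse-polynomial. It comes from products of several $\sin$ factors. Your step ``choose the branch to become $\sz$-type right after gate $k$'' silently assumes a term of the form $\sy^{(j)}\otimes(Z\text{-string})$; a single rotation cannot produce a $Z$-string from anything else.

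What is missing is that the variance is carried by interactions among $\OC(1)$ gates, and first moments of a symmetric distribution kill every $\sin$ factor. The paper's route avoids this because Lemmas B.1--B.2 keep the $\sin$ branch, with weight $\varsigma^2(1-\varsigma^2)$, in squared and hence non-negative form.

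To repair your argument within your own framework:
\begin{enumerate}
\item Condition on a constant-size block $K$ of last-layer rotations that maps each off-diagonal factor of one term to $\sz$. For an $\sx$ factor this is the $\sz$-rotation followed by the $\sx$-rotation.
\item Use $\Var[\LC]\geq\Var(\mathbb{E}[\LC|\thv_K])$.
\item Keep only the component odd in every $\theta_k$, $k\in K$. Parity sectors are orthogonal under the symmetric product measure, so $\Var[\LC]\geq c^2\bigl(\tfrac{1-e^{-2\varsigma^2}}{2}\bigr)^{|K|}$, with $c$ built from factors $a^{\OC(L)}$.
\end{enumerate}
You would still have to rule out cancellations in $c$ between different terms of $O$ landing on different $Z$-strings with opposite signs $c_i\langle x|Z_S|x\rangle$. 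Your non-cancellation remark covers only the off-diagonal branches, not these.
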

Here we find it important to note that Theorem~\ref{theo:bp-free} generalizes the Gaussian initialization strategy of~\cite{zhang2022escaping} (where $\theta^*=0$  for all gates) to an exponentially large family of non-concentrating strategies. Moreover, as discussed in the Appendix, the Gaussian choice for the parameter distributions is not essential. The same argument applies, for example, to uniform windows  ${\rm Unif}[\thv^*-r,\thv^*+r]$ with $r\in\Theta\left(\frac{1}{\sqrt{L}}\right)$ and $\theta^*\in\{0,\pi\}$.

An alternative route to a somewhat analogous conclusion might be drawn from prior work on local minima in barren plateau landscapes~\cite{nemkov2024barren, fontana2022nontrivial} and variance guarantees for warm starts~\cite{mhiri2025unifying}. Namely, for PQCs composed of entangling Clifford gates and local Pauli rotations, and for observables given by sums of local Pauli operators, the landscape has both a barren plateau and exponentially many local minima~\cite{nemkov2024barren, fontana2022nontrivial}. Ref.~\cite{mhiri2025unifying} then shows that, around any local minimum, if one restricts the initialization to a small angle range, the resulting initialization is barren-plateau free.
Putting these two facts together, we see that there are exponentially many regions of parameter space around which barren-plateau-free initializations can be found. While we include this argument to build an intuition, we stress that the proof of Theorem~\ref{theo:bp-free} is more subtle and is found by explicitly identifying points (that need not be, and in general will not be, local minima) with guaranteed gradients via generalizing the proof of Ref.~\cite{zhang2022escaping}. 

At a more conceptual level, Theorem~\ref{theo:bp-free} proves the existence of exponentially many barren-plateau-free initializations, but it does not guarantee nor tell us whether they are all inequivalent. Notably, our first-moment framework can help answer this question when we focus on a specific problem and  observable $O$. Thus, in the next example, we instantiate our results for the task of training a PQC to solve a Maximum Cut (MaxCut) problem~\cite{farhi2014quantum,kazi2022landscape}.

\begin{example}[Exponentially many initializations for MaxCut]\label{ex:MaxCut}
\textit{Consider a parameterized quantum circuit applied to a Max-Cut problem on a graph $G=(V,E)$, where $O= \sum_{(w,w')\in E}\sigma_z^{(w)}\sigma_z^{(w')}$. Then, we consider an $L$- layered PQC composed of local $R_y$ rotations interleaved with diagonal entangling gates $D$, given by
\begin{equation}
    U(\vec{\theta}) = V_{L+1}\left(\theta^V_{L+1}\right) \prod_{l=1}^{L} \left( \prod_{j=1}^{n} \text{R}_y^{(j)}\left(\theta^y_{l,j}\right)\, V_{l,j}\left(\theta^V_{l,j}\right) \right)\,.
\end{equation}
Let $\vec{\gamma}_{\vec{c}}$ be a combinatorial family of initializations governed by an $L \times n$ binary matrix $\vec{c} \in \{0,1\}^{L \times n}$. For a given $\vec{c}$, the initialization $\vec{\gamma}_{\vec{c}}$ draws the parameter for the $R_y$ gate on qubit $j$ at layer $l$ from a Gaussian distribution with mean $c_{l,j} \pi$, and variance $\varsigma$. Let $C_w(\vec{c}) = \sum_{l=1}^L c_{l,w} \pmod 2$ denote the total parity of the shifts applied to qubit $w$. As shown in Appendix~\ref{app:MaxCut}, we find
\begin{equation}
    \widehat{\tau}_{\vec{\gamma}_{\vec{c}}}|O\rangle\!\rangle = a^{2L}\sum_{(w,w')\in E}(-1)^{C_w(\vec{c}) + C_{w'}(\vec{c})}|\sigma_z^{(w)}\sigma_z^{(w')}\rangle\!\rangle\,.
\end{equation}
Consequently, for any two distinct initializations $\vec{\gamma}_{\vec{c}}$ and $\vec{\gamma}_{\vec{c}'}$, the operator gap is exactly determined by the edges where the parity assignments differ. Defining the edge parity as $s_e(\vec{c}) = C_w(\vec{c}) \oplus C_{w'}(\vec{c})$ for an edge $e=(w,w')$, we have:
\begin{equation}\label{eq:sim-vs-sim-maxcut}
    \xi_{\vec{\gamma}_{\vec{c}}, \vec{\gamma}_{\vec{c}'}} = 2 a^{2L} \big| \{ e \in E \mid s_e(\vec{c}) \neq s_e(\vec{c}') \} \big|\,.
\end{equation}
One can see that this strategy generates $2^{n-1}$ structurally distinct initializations for an all-to-all graph. Crucially, for a choice of $\varsigma$ that ensures $\xi_{\vec{\gamma}_{\vec{c}},{\rm BP}} \in\Omega(1)$, the separation between any two distinct initializations also does not vanish, then 
\begin{align}
\xi_{\vec{\gamma}_{\vec{c}},\vec{\gamma}_{\vec{c}'}}\in \Omega(1)\,.
\end{align}}
\end{example}

Theorem~\ref{theo:bp-free} applies also to Example~\ref{ex:MaxCut} and thus we demonstrate the existence of an exponential number of mutually distinct initializations that successfully avoid exponential loss function concentration. We refer the reader to the Appendix where we further discuss the effects of the initialization strategy around the local landscape.

Going beyond this, when considering Example~\ref{ex:MaxCut} for the Max-Cut Hamiltonian we can extend the analysis and find that certain parameters in the circuit can be given a much larger freedom in the initialization, while still avoiding barren plateaus  (as we show in Appendix~\ref{app:1d-slices}). In particular, we prove that the following theorem holds.

\begin{theorem}[Informal] \label{th:MaxCut-slices}
     Let $U(\thv)$ be a PQC composed of single-qubit gates interleaved with diagonal entangling gates $V_l$, i.e., 
       \begin{equation}\nonumber
        U(\thv) =\prod_{l=1}^L \prod_{j=1}^{n}e^{-i\th_{j,l}^z\sz^{(j)}/2} e^{-i\th_{j,l}^y\sy^{(j)}/2} V_l\,,
    \end{equation}
    where $L\in\OC\left(\poly(n)\right)$. Moreover, let $\rho=\ketbra{0}^{\otimes n}$ and let $O= \sum_{(w,w')\in E}\sigma_z^{(w)}\sigma_z^{(w')}$ be the Max-Cut Hamiltonian on a constant-odd-degree graph $G=(E,V)$, with $S\subseteq V$  a subset of vertices. Then, for each $j\in S$, let one $ \theta_{l_j,j}^y$ angle be initialized in a constant-size region, while the rest of the parameters are independently initialized following  Gaussian distributions $\NC(\theta^*,\varsigma^2)$, with $\theta^*\in\{0,\pi\}$ and $ \varsigma\in\Theta\left(\frac{1}{\poly(n)}\right)$. In this setting, all the $\theta^y_{l,j}$ parameters in the circuit are guaranteed to have gradients in $\Omega\left(\frac{1}{\poly(n)}\right)$ with high probability. 
\end{theorem}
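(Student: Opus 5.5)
The plan is to exploit the fact that all entangling gates $V_l$ are diagonal, so they commute with $\sz^{(j)}$, and the observable $O=\sum_{(w,w')\in E}\sz^{(w)}\sz^{(w')}$ is diagonal too. The $R_z$ gates and the $V_l$ therefore act trivially on $\rho=\ketbra{0}^{\otimes n}$ until the first $R_y$ layer. More importantly, at the reference point where every Gaussian-initialized angle sits exactly at its center $\theta^*\in\{0,\pi\}$, each $R_y$ or $R_z$ gate is either the identity or a Pauli (up to phase). The state then stays a computational basis state, up to the free $R_y$ rotations on the qubits in $S$. I would therefore first compute the loss and its gradient at this \emph{reference configuration}, where the only nontrivial angles are the $\theta^y_{l_j,j}$, $j\in S$, taking arbitrary values in their constant-size windows. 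Conjugating Paulis through to the free rotation collapses the remaining circuit to a product of single-qubit $R_y$ rotations, possibly with angle sign flips, acting on a bit string. The diagonal gates then contribute only phases, which do not affect the diagonal observable.

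Step one is to carry out this calculation explicitly. For qubit $j\in S$, one gets $\langle\sz^{(j)}\rangle=\pm\cos\theta^y_{l_j,j}$, and for $j\notin S$ one gets $\pm1$. So the loss is
\[
\LC=\sum_{(w,w')\in E}\pm\prod_{u\in\{w,w'\}\cap S}\cos\theta^y_{l_u,u}.
\]
This requires the $V_l$ after the free rotation to be harmless. They are not harmless in general, since diagonal gates do not commute with $R_y$. I would handle this by restricting to $V_l$ that are, at the reference point, Clifford-like (e.g.\ $CZ$ or with their own angles at $\{0,\pi\}$ centers), so that $\sz^{(w)}\sz^{(w')}$ conjugated by them stays a $Z$-string. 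Under this restriction the reduced expression above holds.

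Step two is the derivative. With respect to a free angle $\theta^y_{l_j,j}$, it is $-\sin\theta^y_{l_j,j}\sum_{e\ni j}\pm\prod(\cdots)$. The \emph{odd-degree} hypothesis enters here: the sum over the $\deg(j)$ incident edges of $\pm1$-type terms cannot cancel completely, since an odd number of signs cannot sum to zero. Away from $\sin=0$ and $\cos=0$ in the constant window, this gives a constant-size gradient. For the non-free $\theta^y_{l,j}$, the derivative at the reference point is a commutator term that flips qubit $j$'s $Z$ to $X$. This reduces to the same edge-sum structure times factors of $\cos$, again $\Omega(1)$ by the odd-degree parity argument.

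Step three is robustness. The actual angles are $\theta^*+\delta$ with $\delta\sim\NC(0,\varsigma^2)$ and $\varsigma\in\Theta(1/\poly(n))$. The gradient is a trigonometric polynomial whose derivative with respect to each angle is bounded by $\|O\|_\infty\in\OC(|E|)=\OC(n)$. A Lipschitz bound therefore gives a change of at most $\OC(n)\sum_l|\delta_l|$ over $\OC(nL)$ parameters. Gaussian tail bounds make this $\OC(n^2L\varsigma\sqrt{\log n})$ with high probability. Choosing the polynomial in $\varsigma$ small enough keeps it below half the reference gradient, giving the claimed $\Omega(1/\poly(n))$ bound w.h.p.

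The main obstacle is step two for the non-free parameters. I must show that no cancellation arises among contributions from the different layers and edges once the free rotations in $S$ are present. This is where the constant-degree and odd-degree assumptions must be used carefully. I would first try to verify that each contributing term is a single product of cosines/sines of the free angles with sign determined by parities, as in Example~\ref{ex:MaxCut}.
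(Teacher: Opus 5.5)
The main gap is in Step two. You claim that each Gaussian-initialized $\theta^y_{l,k}$ already has an $\Omega(1)$ gradient at the reference configuration. That is false, and Step three cannot repair it.

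Take $k\notin S$ and write $\theta^y_{l,k}=\pi c_{l,k}+\eta$. At the reference configuration, qubit $k$ only ever meets $R_y(0)$, $R_y(\pi)$ and diagonal gates. So when this gate acts, every branch of the state has the same bit on $k$, and the flipped and unflipped branches it creates never interfere afterwards. Hence along this coordinate $\LC=\cos^2(\eta/2)A+\sin^2(\eta/2)B$, whose derivative $\tfrac{\sin\eta}{2}(B-A)$ vanishes at $\eta=0$ whatever the free angles are. In your Heisenberg language, the $X_k$ term produced by the commutator has zero expectation, because the state carries no coherence between strings differing in bit $k$.

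Step three only transports a nonzero reference gradient. Here the reference gradient is zero, and your error term $\OC(n^2L\varsigma\sqrt{\log n})$ exceeds the true signal for these parameters, which is at most of order $\sqrt{L}\,\varsigma$.

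The missing idea is that this gradient comes from the angle's own random displacement. The paper writes it as $|\sin\vartheta_k|$ times an edge sum, where $\vartheta_k$ collects the Gaussian displacements on qubit $k$. It keeps the edge sum near its odd-parity value using the Gaussian tail bound. It then lower-bounds $|\sin\vartheta_k|$ by anti-concentration, $\Pr(|\vartheta_k|\le x)\le\sqrt{2/\pi}\,x/\varsigma$, which gives gradients of order $\delta\varsigma/n$ with probability $1-\delta$. This is why the statement only promises $\Omega(1/\poly(n))$, and why $\varsigma$ must be inverse-polynomial rather than arbitrarily small.

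For general diagonal $V_l$, you can run this step cleanly by conditioning on all other angles:
\begin{itemize}
\item The loss becomes $p_0+p_1\cos\eta+p_2\sin\eta$, with derivative $\sqrt{p_1^2+p_2^2}\,\sin(\phi-\eta)$, where $\phi$ is independent of $\eta$.
\item The same Lipschitz estimate keeps $|p_1|$ close to its reference value $|A-B|/2$.
\item Anti-concentration of $\eta$ alone then finishes the argument.
\end{itemize}
This also avoids the paper's merging of same-qubit rotations into one effective angle, which is exact only when the interleaved diagonal gates act as Paulis.

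Two further corrections. First, for the free angles the parity argument fails as you state it. An edge from $j$ to another vertex $j'\in S$ contributes $\pm\cos\theta^y_{l_{j'},j'}$, not $\pm1$. So the edge sum can vanish well inside a window that avoids the zeros of $\sin$ and $\cos$: for a degree-three vertex with two neighbours in $S$, $\cos\theta_1+\cos\theta_2-1=0$ at $\theta_1=\theta_2=\pi/3$. The paper takes $S$ independent, or with at most $K\in\OC(1)$ neighbours inside $S$. It confines the free angles to a window of half-width $r$ around $0$ or $\pi$, with $K(1-\cos r)$ below a constant fraction of $|\Delta_j|\geq 2$ (e.g.\ $r<\arccos(5/6)$ for $K=3$). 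The reverse triangle inequality then keeps the edge sum bounded away from zero, and $|\sin|$ is bounded below separately.

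Second, your restriction to Clifford-like $V_l$ is unnecessary. At the reference configuration each qubit undergoes at most one $R_y$ rotation by an angle outside $\{0,\pi\}$. So the free rotations produce distinct computational-basis strings with independent flip probabilities. Arbitrary diagonal gates only attach phases to these branches, which the diagonal $O$ cannot see. This is how the paper obtains $\LC=\sum_{(w,w')\in E}z_wz_{w'}\Gamma_w\Gamma_{w'}$ for any diagonal $V_l$.

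With these changes your Lipschitz step is sound for the free angles. It also covers the Gaussian $R_z$ perturbations more directly than the paper's light-cone count.
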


Interestingly, Theorem~\ref{th:MaxCut-slices} shows that for certain graph families, one can pick a subset of $\Theta(n)$ $R_y$ angles and freely initialize them in a region whose volume in parameter space~\footnote{Notice that here we are referring to the volume of the $\Theta(n)$-dimensional parameter subspace itself, and not to the entire parameter space.} does not shrink with growing system size (in contrast to Theorem~\ref{theo:bp-free}). Having made this choice, there are then exponentially many choices for how to initialize the remaining parameters in the circuit such that the initialization provably avoids barren plateaus. 
Notice as well that the subset $S$ could be chosen as an independent vertex set of size $\Theta(n)$ and the rest of the parameters be fixed to $0$ or $\pi$, in which case one would find exponentially many initialization strategies that provably avoid barren plateaus, all of which lead to trivially decoupled optimization dynamics (as here each $R_y$ rotation would only affect different terms in the Hamiltonian). This latter example showcases a non-trivial problem (we recall that the Max-Cut problem remains NP-hard~\cite{berman2002some}) where care must actually be taken when choosing an initialization strategy, as there can exist many useless regions in the landscape with inverse polynomial gradients.

To finish, we note that up to this point we have generated exponentially large families of strategies purely based on symmetric initializations. However, the next example shows that changing a parameter's distribution from symmetric to antisymmetric can also lead to exponentially many choices.

\begin{example}[A local symmetric/antisymmetric toggle]\label{ex:different_initialisation_different_points} 
\textit{Consider a circuit composed of two layers of $C_z$ entangling gates and single-qubit parameterized gates, acting on the $j$-th qubit, of the form $U_{(j)}(\thv) = e^{-i \theta_{1} \sx^{(j)}/2}e^{-i \theta_{2} \sy^{(j)}/2}e^{-i \theta_{3} \sz^{(j)}/2}$. The total circuit looks like
\begin{equation}
    U(\thv) = \prod_{l=1}^{L}\prod_{i=1}^{n-2} C_{Z}^{(2i,2i+1)}C_{Z}^{(2i+1,2i+2)}\prod_{j=1}^n U_{(j)}(\thv_{j,l})\,.
\end{equation}
Then, consider $O= \sum_{i=1}^{n-1}\sigma_x^{(i)}\sigma_x^{(i+1)}$. To finish, we will consider the probability distributions $\PC_{\vec{\gamma'}}(\thv)  = P_{\bm{\gamma}_{\, \rm sym}}(\thv)$ and $\PC_{\vec{\gamma}}(\thv)$, obtained by initializing all the qubits as per $\PC_{\vec{\gamma'}}(\thv)$ except for the top-most qubit in the last layer, which we initialize with an antisymmetric distribution. Then, as detailed in Appendix~\ref{app:different_initialisation_different_points} we find
\begin{align}\label{eq:sim-vs-antisym}
    \xi_{\vec{\gamma}',\vec{\gamma}} = (|a| + |b|) a^{11L-1}\,.
\end{align}
Finally, for a choice of $P_{\vec{\gamma}_{\rm sym}}$ such that $a^{11L},ba^{11L-1}\in\Omega(1)$, guaranteeing $\xi_{\vec{\gamma}',{\rm BP}},\xi_{\vec{\gamma},{\rm BP}}\in\Omega(1)$, we have that 
\begin{align}
     \xi_{\vec{\gamma}',\vec{\gamma}}\in\Omega(1)\,.
\end{align}}
\end{example} 

\begin{figure*}[t]
    \centering
    \includegraphics[width=2\columnwidth]{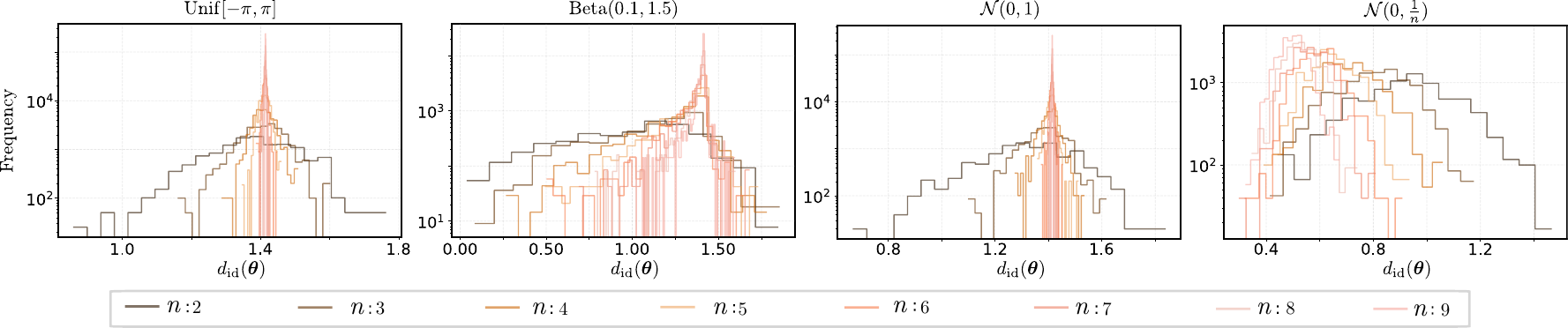}
\caption{\textbf{Distribution of  $d_{{\rm id}}(\bm{\theta})$ for several initialization families and system sizes}. We consider an $L=4$ layered circuit (see Fig.~\ref{fig:qckt_ela}), where the parameters are initialized i.i.d. from a uniform distribution $\mathrm{Unif}[-\pi,\pi]$, a Beta distribution $\mathrm{Beta}(\alpha,\beta)$, and two Gaussian initializations $\NC(0,\sigma)$, with $\sigma=1$ and $\sigma=1/\sqrt{n}$.}
    \label{fig:tracenorm_expts_big}
\end{figure*}

Example~\ref{ex:different_initialisation_different_points}  shows that the symmetric-to-antisymmetric substitution acts as a local binary toggle that replaces the mean choice  of the Gaussian initialization of  Theorem~\ref{theo:bp-free}. Indeed, changing the initialization of one gate from symmetric to antisymmetric already yields a nonzero expected Heisenberg-evolved operator gap. The same reasoning can be repeated for any other suitably chosen qubit, not only the first qubit on the last layer. In particular, if we select a set $S$ of qubits and, for each $j\in S$, independently choose whether the corresponding $z$-generated gate is initialized symmetrically or antisymmetrically, then we obtain a family $\{P_{\mathbf{s}}\}_{\mathbf{s}\in\{0,1\}^{|S|}}$ of $2^{|S|}$ initialization strategies. Since flipping any one bit in $\mathbf{s}$ changes the expected Heisenberg-evolved observable on a local Pauli component, distinct bit strings generically lead to $\xi_{\mathbf{s}',\mathbf{s}}>0$. Therefore, once $|S|=\Omega(n)$, the circuit admits exponentially many inequivalent initialization strategies. Moreover, if the local choices at each toggled gate are taken from families that each satisfy the conditions of Theorem~\ref{th:sufficient_condition}, then this construction yields exponentially many inequivalent barren-plateau-free initialization strategies.

Here we note that while the previous construction uses the centered/non-centered or symmetric/antisymmetric distinction as a straightforward mechanism for generating many inequivalent initialization strategies,  the available design space is considerably richer. In general, an initialization strategy is characterized by the pair $(a,b)$ introduced in Theorem~\ref{lem:moment-1-rotation-Pauli}, with symmetric and antisymmetric distributions corresponding only to the special cases $b=0$ and $a=0$, respectively. More broadly, there exist many families of distributions---including shifted, biased, and otherwise non-symmetric choices---for which both coefficients are nonzero.

Taken together, the previous examples show that once barren-plateau-free initializations are viewed as gatewise design choices, inequivalence becomes quite generic. In particular, the expected Heisenberg-evolved operator gap $\xi_{\gamma',\gamma}$ reveals that initialization strategies with comparable barren-plateau behavior can nevertheless steer the circuit toward different regions of the landscape. However, this operator-level inequivalence alone does not determine whether the corresponding strategies perform the same in practice. To address that question, one must study the final performance after training. This is precisely the goal of our numerical analysis, presented in the next section.

\section{Numerical Results}\label{sec:numerics}

As discussed above, our analytical framework provides necessary, but not sufficient, conditions for identifying initialization strategies that avoid barren plateaus and for certifying that two such strategies are operator-level inequivalent. The purpose of our numerical analysis is to determine whether these analytically distinct strategies result in practical differences in the training process: \textit{Given two distributions $P_{\vec{\alpha}}(\vec{\theta})$ and $P_{\vec{\gamma}}(\vec{\theta})$ with $\xi_{{\rm BP},\vec{\alpha}},\xi_{{\rm BP},\vec{\gamma}}\in\Omega(1)$, and $\xi_{\vec{\alpha},\vec{\gamma}}\in\Omega(1)$, do they lead to different final minima?} 

Here we note that while our results will mostly focus on the final minima quality, in Appendix~\ref{app-additional-numerics} we present additional simulations that further explore the local landscape properties around initialization and throughout training.

\subsection{Changing initializing distribution does not imply identity initialization}

Given the prominence and relevance of initialization the circuit to an identity~\cite{grant2019initialization}, it is natural to use the identity as a reference point when assessing other parameter-initialization strategies. However, as we show here, changing the initialization distribution while avoiding barren plateaus does not, in general, amount to initializing the circuit near the identity.

Importantly, our analytical framework captures ensemble-averaged behavior via moment operators. Thus, it only provides a coarse-grained description and does not reveal how close individual circuits are to the identity. To probe that question, we complement the moment-operator analysis with a finer-grained diagnostic at the level of the sampled unitaries themselves. For a given $\bm{\theta}$, we compute
\begin{equation}
    d_{{\rm id}} (\bm{\theta})= \frac{\|U(\bm{\theta})-\id\|_1}{2^n}\,,
\end{equation}
where $\|\cdot\|_1$ denotes the trace norm. If modifying the initialization distribution merely reproduced identity initialization, then the sampled circuits should cluster near $d_{{\rm id}}=0$.

Our experimental setup is as follows. We consider a PQC composed of single-qubit $R_x$ and $R_y$ rotations followed by nearest-neighbor $CZ$ entangling gates, repeated for $L=4$ layers (see Fig.~\ref{fig:qckt_ela}). We vary the number of qubits from $n=2$ to $n=9$, and for each system size and each initialization family we draw $500$ i.i.d., initial sets of parameters. We study four representative families of parameter distributions: a uniform distribution $\mathrm{Unif}[-r,r]$ with $r=\pi$ (which can be expected to concentrate), a Beta distribution $\mathrm{Beta}(\alpha,\beta)$ as in~\cite{kulshrestha2022beinit}, and two Gaussian initializations $\NC(0,\sigma)$, where $\sigma=1$ and $\sigma=1/\sqrt{n}$ (i.e., the former can be expected to concentrate, while the latter is barren plateau free as per~\cite{zhang2022gaussian}).

\begin{figure}
    \centering
    \includegraphics[width=.8\linewidth]{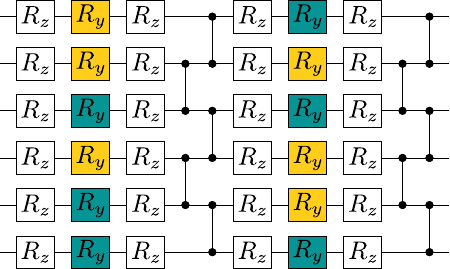}
    \caption{\textbf{Circuit ansatz for Max-Cut example.} We consider a PQC composed of single-qubit $R_z$,  $R_y$, $R_z$ rotations followed by nearest-neighbor $CZ$ entangling gates. The example shown corresponds to $n=6$ qubits and $L=2$ layers. Given that each $R_y$ can be initialized using two distributions, there are $2^{nL}$ choices, which we here schematically show by coloring gates according to which initialization was used.}
    \label{fig:qckt_ela-2}
\end{figure}

Figure~\ref{fig:tracenorm_expts_big} reports the distribution of $d_{{\rm id}} (\bm{\theta})$ for these initialization strategies. Here we can see that across all four families, the sampled circuits are generically not close to the identity. Thus, even when an initialization distribution is chosen so as to mitigate concentration, its effect is not simply to place the circuit near $\id$. What changes instead is how the initialization ensemble populates circuit space. For the barren plateau initialization strategies, i.e., $\mathrm{Unif}[-\pi,\pi]$ and $\NC(0,1)$, the distributions of $d_{\rm id}(\bm{\theta})$ become sharply concentrated around the same fixed non-zero value, as the system size grows, and even for fixed number of layers. In a way, this showcases the fact that random unitaries, leading to concentration phenomena, become typical and have properties that closely resemble those of Haar random ensembles.  By contrast, the Beta family and Gaussian initializations with $\sigma=1/\sqrt{n}$ retain a substantially broader spread. This broader support indicates that these strategies initialize the optimizer in genuinely different neighborhoods of the landscape.

 The next numerical experiments investigate whether these differences at initialization translate into different  training performance.

\subsection{Initialization affects the distribution of attained minima}

\begin{figure}[t]
    \centering
    \includegraphics[width=\linewidth]{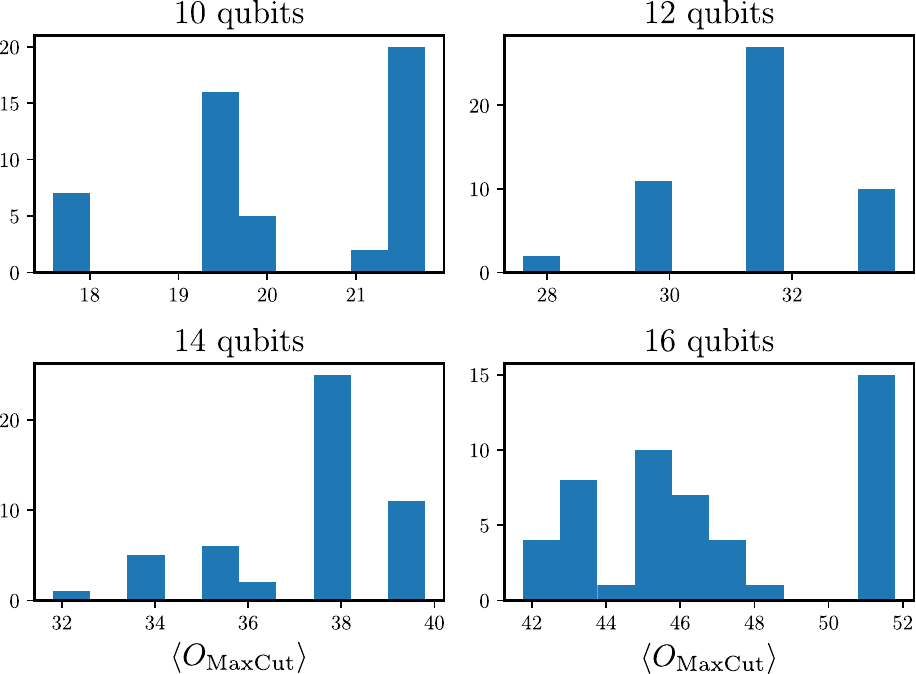}
    \caption{\textbf{Distribution of local minima after training for Max-Cut problem.} We show a histogram of the final expectation value for different system sizes.}
    \label{fig:minima_distribution}
\end{figure}

Here we study whether the initialization-induced biases discussed above persist all the way to the end of training. To that end, we study the distribution of attained minima for PQCs with $L$ layers, where each layer consists of three single-qubit parametrized rotations on every qubit ($R_z$, $R_y$, $R_z$), together with fixed nearest-neighbor $CZ$ gates (see Fig.~\ref{fig:qckt_ela-2}). We take $L=n$, initialize the $R_z$ angles uniformly in $[0,2\pi)$, and  the $R_y$ angles from a Gaussian distribution $\mathcal{N}(\theta^*,\varsigma^2)$ with $\varsigma^2=1/n$ and $\theta^*\in\{0,\pi\}$ chosen uniformly at random for each $R_y$ gate. One can readily see that all these initializations avoid barren plateaus as per Theorem~\ref{theo:bp-free}. Thus, each run samples from an exponentially large family of initialization strategies, obtained by selecting different centers for the $R_y$ rotations.

We consider qubit numbers $n=10,12,14,16$. For each $n$, we initialize the PQC to the all-zero state, and we draw a random Erd\"{o}s--R\'enyi graph with edge probability $p=0.5$. Then we follow the standard variational quantum eigensolver pipeline to minimize the expectation value of the corresponding Max-Cut Hamiltonian
\begin{equation}
    O_{\rm Max-Cut}=\sum_{(w,w')\in E}\sigma_z^{(w)}\sigma_z^{(w')}\, ,
\end{equation}
with $E$ the edge set of the Erd\"{o}s--R\'enyi graph. We perform training using vanilla gradient descent such that the optimization itself is deterministic and all variation between solutions on different runs comes from the initialization. The optimization is run for $500$ iterations, or until the gradient norm falls below the threshold set by the average gradient obtained from fully uniform initializations in $[0,2\pi)$.

The resulting histograms are shown in Fig.~\ref{fig:minima_distribution}. Several features are immediately apparent. First, for each system size the final objective values are broadly distributed and, in several cases, visibly multimodal. Hence, even within this single family of barren-plateau-free initializations, training does not repeatedly converge to one effectively unique minimum. Rather, the optimizer terminates in a collection of distinct local minima of different quality. Second, this spread is observed despite keeping fixed the ansatz, the optimizer, the stopping criterion, and the optimization budget. The only source of variability from run to run is therefore the initialization itself. In turn, this provides direct numerical evidence that not all the exponentially many initialization choices sampled here are practically interchangeable: they bias the optimizer toward different basins of attraction. As such, these results indicate that avoiding the fully concentrated barren-plateau regime does not collapse the optimization landscape into a single well-behaved region. Instead, it can reveal a landscape containing many trainable pockets, not all of which are equally favorable for the task at hand.

\section{Discussion}\label{sec:discussion}

One of the recurring reactions to barren plateaus is that the problem is fairly self-inflicted. That is, if one takes a sufficiently expressive~\cite{sim2019expressibility,holmes2021connecting} ansatz and blindly initializes its parameters at random, then of course we should expect concentration~\cite{larocca2024review}. And, to be fair, that criticism is partly right. Indeed, there are now many smart initialization strategies that can avoid the fully concentrated barren-plateau regime. The point of this work is not to argue otherwise. Rather, the point is that the statement ``good initializations solve barren plateaus'' is much more nuanced than it may first appear.

As we show here, avoiding barren plateaus is not a unique design principle. The first-moment framework developed here gives a simple way of identifying initialization strategies that remain separated from the barren-plateau ensemble. In particular, Theorem~\ref{lem:moment-1-rotation-Pauli} shows that, on average, a gate does two things at once: it damps some operator components and rotates others. Theorem~\ref{th:sufficient_condition} then shows that the damping part alone already gives a general lower bound on $\xi_{{\rm BP},\gamma}$. This is enough to recover several previously known schemes, but it also shows that these are only a small subset of a much broader design space. Shifted, biased, and more general non-symmetric distributions can all avoid the same concentration mechanism, and one can even have exponentially large strategies to explore (Theorems~\ref{theo:bp-free} and~\ref{th:MaxCut-slices}). So the lesson is not that there is a single clever way around barren plateaus. The lesson is that there are many.

But once that is true, a second issue appears immediately: these initializations need not be equivalent. Two distributions can have similar gradient scaling and still induce different first-moment rotations, different expected observables, and therefore different biases on the optimization landscape. This is precisely what the quantity $\xi_{\gamma',\gamma}$ studied here is meant to capture. Thus, avoiding barren plateaus does not remove the initialization problem, but changes its form. One may gain access to gradients that can be estimated with polynomial resources, but one still has to decide which of the many admissible initializations is the right one for the task at hand. In particular, since different initializations do not appear to lead to the same parts of the landscape, nor to minima of the same quality, the choice of initialization can play an extremely crucial role.

Moreover, we find it important to highlight that our results suggest a more careful interpretation of barren-plateau landscapes themselves. Barren plateaus are average-case statements about an initialization ensemble, yet they are often informally pictured as meaning that the landscape is flat everywhere. Our results point to a subtler picture. A landscape can be globally concentration-prone under naive random initialization and still contain many well-behaved regions that become accessible under better initialization strategies. In fact, our construction shows that one can generate exponentially many potentially inequivalent initialization choices, and the numerics suggest that these choices do not simply collapse onto one effectively unique basin. From this perspective, it is more useful to think of these landscapes as containing many trainable pockets, not all of equal quality, rather than as being uniformly featureless.

We do not want to oversell what first moments can do. The quantity $\xi_{{\rm BP},\gamma}$ gives a useful and very general certificate that an initialization remains separated from the fully concentrated barren-plateau ensemble, but it is not by itself a full trainability guarantee. The overlap with the input state, the observable, the ansatz, and the detailed geometry of the loss landscape still matter. So the message is not that our framework identifies the best initialization once and for all. Rather, it organizes the space of candidate barren-plateau-free strategies, and demonstrates that selecting the right initialization is an inherent part of the optimization problem.

Finally, a natural question to ask might be how warm starting strategies fit into our results here. In particular, there is a growing body of literature looking into classical surrogate methods (e.g. using tensor networks~\cite{lin2021real, ran2020encoding, rudolph2022decomposition, gibbs2024deep, chai2025resource, gibbs2025learning, jaderberg2025variational, jamet2023anderson, anselme2024combining, rudolph2022synergy, le2025riemannian, iaconis2024quantum, kanno2025tensor, ballarin2025efficient, gibbs2026low, szoldra2026scalable} or propagation methods~\cite{rudolph2025pauli, angrisani2024classically, lerch2024efficient, chakraborty2026scalable, shrikhande2025rapid}) to find approximate circuits for preparing ground states and initializing variational algorithms. It might initially seem that these methods offer a way around the paradox of choice identified here by picking out particular initialization points. However, this argument should be viewed with caution because we suspect that our strategies for constructing exponentially large/continuous families could to be equally applied also to many of such warm started methods. That said, it remains to be seen to what extent such families of initializations would lead to substantially inequivalent solutions. Or, indeed if they do, how hard it is to find a `good enough' initialization within those families. After all, heuristic optimization often works out better than one theoretically expects. 

Overall, our view is that parameter initialization should not be treated as a minor implementation detail in variational quantum algorithms. Once many barren-plateau-free initializations are on the table, the hard question is no longer only how to avoid exponentially vanishing gradients. It shifts from avoiding landscape concentration to choosing a high-quality initialization out of an exponentially large family. Put together, our results appear to imply that there is no free lunch for escaping the curse of dimensionality. 

\section{Acknowledgments}

A.K. and I.S. were supported in part by the DARPA ONISQ program. 
R.P. acknowledges the support of the SNF Quantum Flagship Replacement Scheme (Grant No. 215933). 
D.G.-M. acknowledges financial support from the European Research Council (ERC) via the Starting grant q-shadows (101117138) and from the Austrian Science Fund (FWF) via the SFB BeyondC (10.55776/FG7). L.C. was supported by the U.S. Department of Energy, Office of Science, Office of Advanced Scientific Computing Research under Contract No. DE-AC05-
00OR22725 through the Accelerated Research in Quantum Computing Program MACH-Q. Z.H. acknowledges support from the Sandoz Family Foundation-Monique de Meuron program for Academic Promotion.
M.C. was supported by the Laboratory Directed Research and Development (LDRD) program of Los Alamos National Laboratory (LANL) under project number 20260043DR and by LANL's ASC Beyond Moore’s Law project.

\bibliography{quantum.bib,biblio}

\onecolumngrid

\newpage

\appendix

\section{Vectorization formalism}\label{app:vec_formalism}

Given a linear operator $A\in\HC\otimes\HC^*$ expressed as $A = \sum_{i,j}^d A_{i,j}\ketbra{i}{j}$, the vectorization map $\mathrm{Vec}:\HC\otimes\HC^*\rightarrow \HC^{\otimes 2}$ is defined as follows
\begin{align}
\kett{A}=\mathrm{Vec}(A) = \sum_{i,j} A_{i,j}\ket{i}\ket{j}\,.
\end{align}
In particular, we recall that, given three linear operators $A$, $B$ and $C$, then the following property holds 
\begin{align}\label{eq:kroneker_vectorisation}
    \kett{ABC}  = A\otimes C^T\kett{B}
\end{align}
where $C^T$ denotes the transpose of $C$. Using this property we can see that 
\begin{align}
    \langle\!\langle A|B\rangle\!\rangle =\Tr[A^\dagger B]\, .
\end{align}

Finally, given a channel $\Phi:\HC\otimes\HC^*\rightarrow\HC\otimes\HC^*$ expressed as $\Phi(\cdot) = \sum_{i,j} A_i (\cdot) B_j^\dagger$, the vectorization of matrices induces the vectorization of channels as
\begin{align}
    \hat{\Phi}={\rm Vec}\left[\Phi(\cdot)\right] = \sum_{i,j} A_i\otimes B^*_j \, .
\end{align}
where $B^*$ is the complex conjugate of $B$.

\section{Necessary condition}\label{app:necessary_condition}
Our goal here is to show that a necessary condition for an initialization strategy to not lead to a barren plateau is the following
\begin{equation}\label{eq:expresscondition_appendix}
   {\|\widehat{\tau}_{{\rm BP}}\kett{O}- \widehat{\tau}_{\vec{\gamma}}\kett{O}   \|_1} \in \Omega\left( \frac{1}{\poly(n)}\right) \, .
\end{equation}

\begin{proof}
We start the proof by recalling that an initialization strategy $P_{\vec{\gamma}}(\thv)$ does not suffer from barren plateaus if (see Eq.~\eqref{eq:necessary})
\begin{equation}\label{eq:necessary_appendix}
   \| \braa{\rho}\widehat{\tau}_{\vec{\gamma}}\kett{O} - \braa{\rho}\widehat{\tau}_{{\rm BP}}\kett{O} \|\in \Omega\left( \frac{1}{\poly(n)}\right)\,.
\end{equation}

Without loss of generality, we can assume that we can decompose the vectorized observable $\kett{O}$ in the Pauli basis $\{\sigma_i\}_{i=1}^{4^n}$ as $\kett{O} = \sum_i \alpha_i \kett{\sigma_i}$. Similarly, we can decompose in the same basis (but with different coefficients) the operator $\kett{O}$ after applying the vectorized 1-twirl moment
\begin{equation}
    \kett{O} = \sum_i \alpha_i \kett{\sigma_i}, \,\quad \text{and}\quad  \widehat{\tau}_{\vec{\gamma}}\kett{O} = \sum_i \tilde{\alpha}_i \kett{\sigma_i}\,.
\end{equation}
We can use this to upper-bound the left hand side of Eq.~\eqref{eq:necessary_appendix} as follows
\begin{align}\label{eq:first_upperbound}
   \left| \braa{\rho}\widehat{\tau}_{\vec{\gamma}}\kett{O} - \braa{\rho}\widehat{\tau}_{{\rm BP}}\kett{O} \right| =&\left| \braa{\rho}\sum_i \tilde{\alpha}_i\kett{\sigma_i} - \braa{\rho}\sum_i \tilde{\alpha}^{{\rm BP}}_i\kett{\sigma_i} \right| \\
   =&\left| \braa{\rho}\sum_i \left(\tilde{\alpha}_i -  \tilde{\alpha}^{{\rm BP}}_i\right)\kett{\sigma_i} \right|\\
   \leq & \sum_i \left|(\tilde{\alpha}_i - \tilde{\alpha}_i^{{\rm BP}})\langle\!\langle \rho \kett{\sigma_i}\right|\\
   \leq & \sum_i \left|(\tilde{\alpha}_i - \tilde{\alpha}_i^{{\rm BP}})\right| = \|\widehat{\tau}_{\vec{\gamma}}\kett{O} - \widehat{\tau}_{{\rm BP}}\kett{O}\|_1\label{eq:last_upperbound}\,.
\end{align}
In the second equality, we have grouped the coefficients that correspond to the same Pauli. Then, in the first inequality we used the triangle inequality to show that $|\sum_i x_i|\leq \sum_i |x_i|$, and finally in the second inequality we used that $\max_{\rho}\langle\!\langle \rho \kett{\sigma_i}= \max_{\rho} \Tr[\rho \sigma_i] = 1 $, as  $\sigma_i$ are  Pauli operators. In the last equality we just use that the one norm is the sum of the absolute value of the coefficients, which completes the proof.

\end{proof}

\section{Proof of Theorem~\ref{lem:moment-1-rotation-Pauli}}\label{sec:twirt_rot_op}
In this section we prove the following theorem.
\setcounter{theorem}{0}
\begin{theorem}\label{lem:moment-1-rotation-Pauli-ap}
    Let $U = e^{-i\theta\sigma_l/2}$, where $\theta$ is sampled according to $P_{\gamma_l}(\theta)$. Then, its associated first moment operator  can be written as a contraction matrix $\eta_{l}$ times a rotation $W_{l}\otimes W_l^*$.  That is
\begin{equation}\label{eq:twirl_as_rot_app-ap}
    \widehat{\tau}_{\gamma_l}  = \left(W_{l}\otimes W_{l}^*\right)\eta_{l}\,,
\end{equation}
where $W_{l}=e^{-i\phi_{l} \sigma_l/2}$, $\phi_{l} = \arccos \left(\frac{a_l}{\sqrt{a_l^2 + b_l^2}}\right)$ and 
\begin{align}
    a_l = \int P_{\gamma_l}(\th) \cos(\th) d\th \,,\quad 
    b_l =\int P_{\gamma_l}(\th)\sin(\th) d\th \,.\nonumber
\end{align}
In particular, the action of the matrix $\eta_{l}$ over Pauli operators is
\begin{equation}
    \eta_{l}\kett{\sigma_i} = 
    \begin{cases}
        \kappa_l \kett{\sigma_i} \, {\rm iff} \, \{\sigma_l,\sigma_i\} = 0\,,\\
        \kett{\sigma_i} \, {\rm iff} \, [\sigma_l,\sigma_i] = 0\,,
    \end{cases}
\end{equation}
where
\begin{equation}
    \kappa_l=\sqrt{a_l^2 + b_l^2}
\end{equation}
which shows that $\eta_{l}$ has positive eigenvalues smaller or equal to $1$, 
\end{theorem}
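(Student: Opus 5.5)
The plan is to compute the single-gate moment operator directly in the Pauli basis and then factor it. Since $\sigma_l^2=\id$, we have $e^{-i\theta\sigma_l/2}=\cos(\theta/2)\id - i\sin(\theta/2)\sigma_l$. Using Eq.~\eqref{eq:kroneker_vectorisation}, the vectorized action $U\otimes U^*\kett{\sigma_i}$ equals $\kett{U\sigma_i U\ad}$. We therefore only need the Heisenberg action of the gate on each Pauli $\sigma_i$, and we split into two cases.

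In the first case $[\sigma_l,\sigma_i]=0$. Then $U\sigma_iU\ad=\sigma_i$ for every $\theta$, so $\widehat{\tau}_{\gamma_l}\kett{\sigma_i}=\kett{\sigma_i}$.

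In the second case $\{\sigma_l,\sigma_i\}=0$. Expanding with double-angle identities gives
\begin{equation}
U\sigma_iU\ad=\cos\theta\,\sigma_i - i\sin\theta\,\sigma_l\sigma_i .
\end{equation}
Note that $-i\sigma_l\sigma_i$ is itself a Hermitian Pauli, up to sign, and it again anticommutes with $\sigma_l$. Averaging over $P_{\gamma_l}$ yields
\begin{equation}
\widehat{\tau}_{\gamma_l}\kett{\sigma_i}=a_l\kett{\sigma_i}+b_l\kett{-i\sigma_l\sigma_i},
\end{equation}
so on each two-dimensional invariant subspace $\mathrm{span}\{\sigma_i,-i\sigma_l\sigma_i\}$ the moment operator acts as the matrix $\begin{pmatrix} a_l & -b_l\\ b_l & a_l\end{pmatrix}$. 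The anticommuting Paulis pair up into such subspaces, because $\sigma_i\mapsto -i\sigma_l\sigma_i$ is an involution up to sign. This fixes the form of $\widehat{\tau}_{\gamma_l}$ on the whole Pauli basis.

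The factorization then follows by writing $a_l=\kappa_l\cos\phi_l$ and $b_l=\kappa_l\sin\phi_l$, with $\kappa_l=\sqrt{a_l^2+b_l^2}$ and $\phi_l$ the polar angle. On each block, the $2\times 2$ matrix is $\kappa_l$ times a rotation by $\phi_l$. By the same computation with $\theta$ fixed to $\phi_l$, that rotation is exactly the action of $W_l\otimes W_l^*$ with $W_l=e^{-i\phi_l\sigma_l/2}$. On the commuting sector, $W_l\otimes W_l^*$ acts trivially. Hence we may define $\eta_l$ as the diagonal operator in the Pauli basis with eigenvalue $\kappa_l$ on anticommuting Paulis and $1$ on commuting ones. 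Because $\eta_l$ is a scalar on each block, it commutes with the rotation, and we get $\widehat{\tau}_{\gamma_l}=(W_l\otimes W_l^*)\eta_l$.

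Finally, $0\le\kappa_l\le 1$ follows from Jensen/Cauchy–Schwarz: $a_l+ib_l=\mathbb{E}[e^{i\theta}]$, so $|a_l+ib_l|\le 1$. The one delicate point is the sign convention for $\phi_l$. The formula $\phi_l=\arccos(a_l/\kappa_l)$ only determines $|\phi_l|$, so I will take $\phi_l$ with the sign of $b_l$, which is equivalently $\phi_l=\arg(a_l+ib_l)$. I will then check that the sign of the rotation matches the convention $e^{-i\phi\sigma_l/2}$. The degenerate case $\kappa_l=0$ needs a separate remark: there $\phi_l$ is undefined but irrelevant, since $\eta_l$ annihilates the anticommuting sector anyway.
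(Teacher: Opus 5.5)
Your proposal is correct and follows essentially the same route as the paper: expand $e^{-i\theta\sigma_l/2}=\cos(\theta/2)\id-i\sin(\theta/2)\sigma_l$, split the Pauli basis into the sectors that commute and anticommute with $\sigma_l$, obtain $a_l\kett{\sigma_i}-b_l\kett{i\sigma_l\sigma_i}$ on the anticommuting sector, and factor out $\kappa_l$ so the normalized remainder is conjugation by $W_l$. Your version is slightly more careful than the paper's in three places: taking $\phi_l=\arg(a_l+ib_l)$ removes the sign ambiguity of $\arccos(a_l/\kappa_l)$, which gives the correct rotation only when $b_l\geq 0$; you justify $\kappa_l\leq 1$ explicitly; and you treat the degenerate case $\kappa_l=0$ (e.g.\ ${\rm Unif}[-\pi,\pi]$), where the eigenvalue of $\eta_l$ is $0$ rather than strictly positive.
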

\begin{proof}
We start by considering the action $\widehat{\tau}_{\gamma_l}\kett{O}$, for some generic operator $O$. In particular, we find it convenient to  separate the Paulis in the expansion of $\kett{O}$ into two groups: those which commute with $\sigma_l$, and those that anti-commute with the same generator. That is
\begin{align}
    \kett{O} = \sum_{i \, {\rm s.t.} \,  [\sigma_l,\sigma_i] = 0} \alpha_i\kett{\sigma_i} + \sum_{i \, {\rm s.t.} \, \{\sigma_l,\sigma_i\} = 0} \alpha_i\kett{\sigma_i}\,.
\end{align}

In particular, the moment operator for the parameterized gates is
\begin{align}
     \widehat{\tau}_{\gamma_l} = \int d\th P_{\gamma_l}(\th) e^{-i\th\sigma_l/2}\otimes e^{i\th\sigma^*_l/2} \,,
\end{align}
which can be simplified by using the fact that $\sigma_l^2 = \1$, and hence that 
\begin{align}
    e^{-i\theta\sigma_l/2} = \cos(\th/2)\1 -i\sin(\th/2)\sigma_l \, . 
\end{align}
Combining the previous results, we obtain 
\begin{align}
     \widehat{\tau}_{\gamma_l} = &I_{2,0}\1\otimes\1 + I_{0,2}\sigma_l\otimes\sigma^*_l -i I_{1,1}(\sigma_l\otimes\1 - \1\otimes\sigma^*_l) \, ,\label{eq:vectorized_twirl_without_int}
\end{align}
where we have used the notation
\begin{align}\label{eq:I_def}
    I_{k_1,k_2} =  \int d\th P_{\gamma_l}(\th) \cos^{k_1}(\th/2)\sin^{k_2}(\th/2)\,,
\end{align}
and the the normalization of the probability distribution, i.e.,  $\int d\th P_{\gamma_l}(\th)=1$. Eq.~\eqref{eq:vectorized_twirl_without_int} constitutes the basis building block for the proofs presented in the next section. 

Then, the action of $\widehat{\tau}_{\gamma_l}$ over the observable is
\begin{align}
    \widehat{\tau}_{\gamma_l}\kett{O} =& \sum_{i \, {\rm s.t.} \,  [\sigma_l,\sigma_i] = 0} \alpha_i\kett{\sigma_i} + (I_{2,0}\1\otimes\1 + I_{0,2}\sigma_l\otimes \sigma^*_l) \sum_{i \, {\rm s.t.} \, \{\sigma_l,\sigma_i\} = 0} \alpha_i \kett{\sigma_i}\\
    &-iI_{1,1}(\sigma_l\otimes\1 - \1\otimes\sigma^*_l) \sum_{i \, {\rm s.t.} \, \{\sigma_l,\sigma_i\} = 0} \alpha_i \kett{\sigma_i}\\
    =&\sum_{i \, {\rm s.t.} \,  [\sigma_l,\sigma_i] = 0} \alpha_i\kett{\sigma_i} + (I_{2,0} - I_{0,2})\sum_{i \, {\rm s.t.} \, \{\sigma_l,\sigma_i\} = 0} \alpha_i \kett{\sigma_i} -2I_{1,1}\sum_{i \, {\rm s.t.} \, \{\sigma_l,\sigma_i\} = 0} \alpha_i \kett{i\sigma_l \sigma_i}\label{eq:pre_final_apply_twirl}\,.
\end{align}
In the first equality we used that the gate $e^{-i\theta\sigma_l/2}$ (and thus the associated first moment operator) will only affect the operators $\sigma_i$ that do not commute with $\sigma_l$.  In the second equality we use Eq.~\eqref{eq:kroneker_vectorisation} to pass the Pauli operator $\sigma_l$ inside the vectorized $\sigma_i$. Furthermore, we use that $\{\sigma_l, \sigma_i\} = 0$ to further simplify the terms $\kett{\sigma_l \sigma_i\sigma_l}  =-\kett{\sigma_i}$ and $\kett{\sigma_i \sigma_l}= -\kett{\sigma_l \sigma_i}$.

Next we can define the following terms $a,b$ and simplify them by using Eq.~\eqref{eq:I_def}
\begin{align}
    a_l =& I_{2,0} - I_{0,2} = \int P_{\gamma_l}(\th) \left(\cos(\th/2) - \sin(\th/2)\right)d\th = \int P_{\gamma_l}(\th) \cos(\th) d\th \\
    b_l =& 2I_{1,1} = \int P_{\gamma_l}(\th) 2\sin(\th/2)\cos(\th/2) d\th =\int P_{\gamma_l}(\th)\sin(\th) d\th \,.
\end{align}
With this definitions, we can now further simplify Eq.~\eqref{eq:pre_final_apply_twirl}. Indeed we can introduce $a,b$ in the equation and rearrange the terms as follows
\begin{align}\label{eq:initial_apply_twirl}
    \widehat{\tau}_{\gamma_l}\kett{O} 
    =& \sum_{i \, {\rm s.t.} \,  [\sigma_l,\sigma_i] = 0} \alpha_i\kett{\sigma_i} + \sum_{i \, {\rm s.t.} \, \{\sigma_l,\sigma_i\} = 0} \alpha_i (a_l\kett{\sigma_i} - b_l\kett{i\sigma_l \sigma_i})\\
    =& \sum_{i \, {\rm s.t.} \,  [\sigma_l,\sigma_i] = 0} \alpha_i\kett{\sigma_i} + \sqrt{a_l^2 + b_l^2}\sum_{i \, {\rm s.t.} \, \{\sigma_l,\sigma_i\} = 0} \alpha_i \left(\frac{a_l}{\sqrt{a_l^2 + b_l^2}}\kett{\sigma_i} - \frac{b_l}{\sqrt{a_l^2 + b_l^2}}\kett{i\sigma_l \sigma_i}\right)
    \label{eq:final_apply_twirl}
\end{align}
where in the second equality we multiply and divide the second summand by $\sqrt{a_l^2 + b_l^2}$. 

Lets now focus on terms of the second summation in Eq.~\eqref{eq:final_apply_twirl}. Without loss of generality we can write 
\begin{equation}
    \left(\frac{a_l}{\sqrt{a_l^2 + b_l^2}}\kett{\sigma_i} - \frac{b_l}{\sqrt{a^2 + b^2}}\kett{i\sigma_l \sigma_i}\right) = \kett{W_{l} \sigma_i W_{l}^\dagger}\,,
\end{equation}
where $W_{l}=e^{-i\phi_{l} \sigma_l/2}$ and $\phi_{l} = \arccos \left(\frac{a_l}{\sqrt{a_l^2 + b_l^2}}\right)$. Therefore we can simplify the previous expression to find
\begin{equation}
    \widehat{\tau}_{\gamma_l}\kett{O} = W_{l}\otimes W_{l}^* \left( \sum_{i \, {\rm s.t.} \,  [\sigma_l,\sigma_i] = 0} \alpha_i\kett{\sigma_i} + \sqrt{a_l^2 + b_l^2}\sum_{i \, {\rm s.t.} \, \{\sigma_l,\sigma_i\} = 0} \alpha_i \kett{\sigma_i}  \right)\,.
\end{equation}

Then, let us define the matrix $\eta_{l}$, whose action is 
\begin{equation}
    \eta_{l}\kett{\sigma_i} = 
    \begin{cases}
        \sqrt{a_l^2 + b_l^2} \kett{\sigma_i} \, {\rm iff} \, \{\sigma_l,\sigma_i\} = 0\,,\\
        \kett{\sigma_i} \, {\rm iff} \, [\sigma_l,\sigma_i] = 0\,.
    \end{cases}
\end{equation}
Since the previous holds for any traceless operator $O$, we find that the first moment operator for a single rotation $U = e^{-i\theta\sigma_l/2}$, where $\theta$ is sampled according to $P_{\gamma_l}(\theta)$, can be expressed as 
$\widehat{\tau}_{\gamma_l} =\left(W_{l}\otimes W_{l}^*\right) \eta_{l}$.
    
\end{proof}

\section{Proof of Theorem~\ref{th:sufficient_condition}}\label{app:proof_th1}
In this section we provide a proof for Theorem~\ref{th:sufficient_condition}, which we recall here for convenience.

\begin{theorem}\label{th:sufficient_condition_appendix}
Let $U(\thv)$ be a PQC as defined in Eq.~\eqref{eq:pauliPQC} with associated distribution of parameters $P_{\vec{\gamma}}(\thv)$. The expected Heisenberg-evolved operator gap $\xi_{{\rm BP},\vec{\gamma}}$ is lower bounded as
    \begin{equation}
        \xi_{{\rm BP},\vec{\gamma}}\in\Omega\left(\kappa\right)
    \end{equation}
with
\begin{equation}
\kappa=\prod_{l=1}^L \kappa_l \,,\quad\kappa_l=\sqrt{a_l^2 + b_l^2}\,,
\end{equation}
and where 
\begin{align}\label{eq:def_a_app_tm1}
        a_l &= \int \cos(\th_l)P_{\gamma_l}(\th_l)d\th_l\,,\\
        b_l &= \int \sin(\theta_l)P_{\gamma_l}(\theta_l)d\th_l\,.\label{eq:def_b_app_tm1}
    \end{align}
\end{theorem}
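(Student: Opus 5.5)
Our plan is to combine the product structure of the first-moment operator, Eq.~\eqref{eq:superoperator}, with the single-gate decomposition of Theorem~\ref{lem:moment-1-rotation-Pauli}. We then compare the resulting expected Heisenberg-evolved operator with the barren-plateau one, which, since the BP ensemble forms a one-design over $\mathbb{U}(\HC)$, is simply $\widehat{\tau}_{\rm BP}\kett{O}=\frac{\Tr[O]}{2^n}\kett{\id}=0$ for traceless $O$. Hence $\xi_{{\rm BP},\vec{\gamma}}=\|\widehat{\tau}_{\vec{\gamma}}\kett{O}\|_1$, and it suffices to lower bound the Pauli one-norm of the twirled observable.

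\textbf{Step 1 (product form).} Write $\widehat{\tau}_{\vec{\gamma}}=\prod_{l}\widehat{\tau}_{\gamma_l}\,(V_l\otimes V_l^*)$, with the ordering fixed by Eq.~\eqref{eq:pauliPQC} acting in the Heisenberg picture. Using Theorem~\ref{lem:moment-1-rotation-Pauli}, replace each $\widehat{\tau}_{\gamma_l}$ by $(W_l\otimes W_l^*)\eta_l$. The unitary pieces $W_l\otimes W_l^*$ and $V_l\otimes V_l^*$ correspond to conjugation by unitaries, so they preserve the Hilbert--Schmidt norm, i.e.\ the vector $2$-norm of $\kett{\cdot}$.

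\textbf{Step 2 (contraction bound).} Each $\eta_l$ is diagonal in the Pauli basis, with entries in $\{1,\kappa_l\}$. Therefore, for any vector $\kett{A}$ we have $\|\eta_l\kett{A}\|_2\geq \kappa_l\|\kett{A}\|_2$, since $\kappa_l\leq 1$. Iterating through all $L$ gates and using unitary invariance at each step gives
\begin{equation}
\|\widehat{\tau}_{\vec{\gamma}}\kett{O}\|_2\geq \prod_{l=1}^L\kappa_l\,\|\kett{O}\|_2=\kappa\,\|\kett{O}\|_2 .
\end{equation}

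\textbf{Step 3 (norm conversion).} We pass from the $2$-norm to the Pauli one-norm. Here the normalization matters: in the Pauli-coefficient convention of Appendix~\ref{app:necessary_condition}, the coefficient vector $\vec{\alpha}$ satisfies $\|\vec{\alpha}\|_1\geq\|\vec{\alpha}\|_2$, and the Hilbert--Schmidt norm equals $2^{n/2}\|\vec{\alpha}\|_2$. Applying the same conversion to both sides of the Step~2 inequality, the factor $2^{n/2}$ cancels. This yields $\xi_{{\rm BP},\vec{\gamma}}\geq \kappa\,\|\vec{\alpha}_O\|_2$, where $\vec{\alpha}_O$ is the Pauli coefficient vector of $O$. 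For $O$ with $\Omega(1)$ Pauli weight (e.g., a Pauli operator or a sum of Paulis with constant coefficients), this is $\Omega(\kappa)$.

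\textbf{Main obstacle.} The delicate point is Step~2 together with the BP baseline. Different $\eta_l$ do not commute with the intervening rotations $W_l$ and $V_l$, so one cannot track the component-wise suppression of individual Paulis. That is why I work only with the norm inequality, which is rotation-insensitive but loose. One must also make sure that the $\Omega(\kappa)$ statement hides only the $O$-dependent constant $\|\vec{\alpha}_O\|_2$, which is at least $\Omega(1)$ for the observables considered. Finally, if one does not want to assume a full one-design, the BP term should be handled by noting that it is exponentially small in norm. A triangle inequality then gives $\xi\geq \kappa\|\vec{\alpha}_O\|_2-\|\widehat{\tau}_{\rm BP}\kett{O}\|$, which is still $\Omega(\kappa)$ whenever $\kappa$ dominates the BP residue.
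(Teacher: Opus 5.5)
Your proposal is correct and follows the paper's argument. The one-design assumption makes the BP term vanish for traceless $O$, and the smallest eigenvalue $\kappa_l$ of each $\eta_l$, together with unitary invariance, gives $\|\tau_{\vec{\gamma}}(O)\|_2\geq\kappa\|O\|_2$ in Hilbert--Schmidt norm, with $\|O\|_2^2/2^n\in\Omega(1)$ supplying the constant; the only minor difference is that the paper passes through the operator norm $\max_\rho|\Tr[\rho\,\tau_{\vec{\gamma}}(O)]|\geq\|\tau_{\vec{\gamma}}(O)\|_2/2^{n/2}$, whereas you compare Pauli-coefficient $1$- and $2$-norms directly, reaching the same bound $\kappa\|\vec{\alpha}_O\|_2$ more cleanly.
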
 

\begin{proof}
 In what follows, we will consider a vectorized observable $\kett{O}$ whose decomposition in the  Pauli Basis is $\kett{O}=\sum_i \alpha_i \kett{\sigma_i}$. Our goal is to bound the expected Heisenberg-evolved operator gap $\xi_{{\rm BP},\vec{\gamma}}$, defined as
 \begin{equation}
  \xi_{{\rm BP},\gamma} =  {\|\widehat{\tau}_{{\rm BP}}\kett{O} - \widehat{\tau}_{\vec{\gamma}}\kett{O}\|_1 }\,.
\end{equation}
One can readily see that the following chain of inequalities holds
\begin{align}
    \|\widehat{\tau}_{{\rm BP}}\kett{O} - \widehat{\tau}_{\vec{\gamma}}\kett{O}\|_1&\geq \|\widehat{\tau}_{{\rm BP}}\kett{O} - \widehat{\tau}_{\vec{\gamma}}\kett{O}\|_2\nonumber\\
    &\geq \max_{\tilde{\rho}} |\braa{\tilde{\rho}}\widehat{\tau}_{{\rm BP}}\kett{O} - \braa{\tilde{\rho}}\widehat{\tau}_{\vec{\gamma}}\kett{O}|\nonumber\,,
\end{align}
where the maximum is taken over all quantum states $\tilde{\rho}$. 

Next, using the fact that $\widehat{\tau}_{{\rm BP}}$ forms a $1$-design over $\mathbb{U}(\HC)$, we have~\cite{mele2023introduction}
\begin{equation}
    \widehat{\tau}_{{\rm BP}}=\frac{|\id_{2^n}\rangle\!\rangle\langle\!\langle \id_{2^n}|}{2^n}\,,
\end{equation}
where $\id_{2^n}$ denotes the $2^n\times 2^n$ identity over $\HC$, we have
\begin{equation}
    \braa{\tilde{\rho}}\widehat{\tau}_{{\rm BP}}\kett{O}=0\,.
\end{equation}
The previous equality holds since, by definition, $O$ is traceless and $\braa{\id_{2^n}}O\rangle\!\rangle=\Tr(O)=0$.

Then, consider the following lower-bound
\begin{align}\label{eq:lower_bound_twirl}
    \max_\rho |\braa{\rho}\widehat{\tau}_{\vec{\gamma}}\kett{O}| \geq \sqrt{\frac{1}{2^n}\braa{O} \widehat{\tau}_{\vec{\gamma}}^\dagger \widehat{\tau}_{\vec{\gamma}} \kett{O}}\,,
\end{align}
where we used that $ \max_\rho |\braa{\rho}\widehat{\tau}_{\vec{\gamma}}\kett{O}| = \|{\tau}_{\vec{\gamma}}(O)\|_{\infty}$  and $\braa{O} \widehat{\tau}_{\vec{\gamma}}^\dagger \widehat{\tau}_{\vec{\gamma}} \kett{O} = \Tr[{\tau}_{\vec{\gamma}}(O){\tau}_{\vec{\gamma}}(O)^\dagger]=\sum_i \lambda_i^2$ is the sum of all the eigenvalues of ${\tau}_{\vec{\gamma}}(O)$ squared. Therefore, we can readily see that 
\begin{align}
    \sqrt{\frac{1}{2^n}\sum_{i=1}^{2^n} \lambda_i^2}\leq \sqrt{\frac{1}{2^n}\max_{j}\lambda_j^2\sum_{i=1}^{2^n} } = \max_{j}|\lambda_j| =\|{\tau}_{\vec{\gamma}}(O)\|_{\infty}
\end{align}

Next, let us note that for a circuit of the form $U(\bm{\theta}) = \prod_{l=1}^L e^{-i\theta_l\sigma_l/2}V_l$, the associated moment operator can be expressed as 
\begin{equation}
    \widehat{\tau}_{\vec{\gamma}}=\prod_{l=1}^L \widehat{\tau}_{\gamma_l}\widehat{\tau}_{V_l}\,,
\end{equation}
where we recall that the associated moment operator for an unparametrized gate $V_l$ is simply 
\begin{align}
    \widehat{\tau}_{V_l}= V_l\otimes V_l^* \, ,
\end{align}
Then, leveraging Lemma~\ref{lem:moment-1-rotation-Pauli} we find 
\begin{equation}
    \widehat{\tau}_{\vec{\gamma}}=\prod_{l=1}^L \left(W_{l}\otimes W_{l}^*\right)\eta_{l}\left(V_l\otimes V_l^*\right)\,.
\end{equation}
This result, in turn, leads to
\begin{align}
   \braa{O} \widehat{\tau}_{\vec{\gamma}}^\dagger \widehat{\tau}_{\vec{\gamma}} \kett{O} = &  \braa{O}\prod_{k = L}^1 \left(V_k\ad\otimes V_k^t\right)\eta_{k}\left(W_{k}\ad\otimes W_{k}^t\right) \prod_{l=1}^L \left(W_{l}\otimes W_{l}^*\right)\eta_{l}\left(V_l\otimes V_l^*\right)\kett{O}\\
   = & \braa{O}\prod_{k = L}^2 \left(V_k\ad\otimes V_k^t\right)\eta_{k}\left(W_{k}\ad\otimes W_{k}^t\right)\left(\left(V_1\ad\otimes V_1^t\right)\eta_{1}\eta_{1}\left(V_1\otimes V_1^*\right)\right) \prod_{l=2}^L \left(W_{l}\otimes W_{l}^*\right)\eta_{l}\left(V_l\otimes V_l^*\right)\kett{O}\\
   \geq & (a_1^2 + b_1^2)\braa{O}\prod_{k = L}^2 \left(V_k\ad\otimes V_k^t\right)\eta_{k}\left(W_{k}\ad\otimes W_{k}^t\right)\prod_{l=2}^L \left(W_{l}\otimes W_{l}^*\right)\eta_{l}\left(V_l\otimes V_l^*\right)\kett{O}\label{eq:lowerbound_eta}\\
   \geq & \prod_{l=1}^L (a_l^2+b_l^2) \braakett{O}
\end{align}
 In Eq.~\eqref{eq:lowerbound_eta} we used the inequality $\expval{\eta_1}{v}\geq \lambda_{\rm min}(\eta_1)\braket{v}$, and the fact that the eigenvalues of $\eta_1$ are $1$ and $\sqrt{a_1^2 + b_1^2}$. In the last inequality we applied the same identity just explained in Eq.~\eqref{eq:lowerbound_eta} recursively. Finally, replacing this result in Eq.~\eqref{eq:lower_bound_twirl} we obtain 
\begin{align}
    \max_\rho |\braa{\rho}\widehat{\tau}_{\vec{\gamma}}\kett{O}| \geq \sqrt{\frac{1}{2^n}\braa{O} \widehat{\tau}_{\vec{\gamma}}^\dagger \widehat{\tau}_{\vec{\gamma}} \kett{O}}\geq \prod_{l=1}^L\sqrt{a_l^2 + b_l^2} \sqrt{\frac{\braakett{O}}{2^n}}
\end{align}
Thus, using the fact that, by assumption, $\frac{\braakett{O}}{2^n}\in\OC(1)$, we have shown that the expected Heisenberg-evolved operator
gap is lower-bounded by the product of the coefficients $\xi\in \Omega\left(\prod_{l=1}^L\sqrt{a_l^2 +b_l^2}\right)$. 

\end{proof}

\section{Proof of Theorem~\ref{theo:bp-free}}\label{app-prood-theo3}

Here we present a proof for Theorem~\ref{theo:bp-free}, which we recall and restate in a self-contained manner.
\begin{theorem}\label{theo:bp-free-ap}
    Let $U(\thv)$ be a PQC composed of general single-qubit gates interleaved with diagonal entangling gates, i.e., 
       \begin{equation}\nonumber
        U(\thv) =\prod_{l=1}^L \prod_{j=1}^{n}e^{-i\th_{1,j,l}\sx^{(j)}/2} e^{-i\th_{2,j,l}\sz^{(j)}/2}e^{-i\th_{3,j,l}\sx^{(j)}/2} V_l\,,
    \end{equation}
     let $O$ be sum of local Pauli operators acting on $\OC(1)$ qubits, and $\rho$ the all-zero state, $\ket{0}^{\otimes n}$.  Then, for all exponentially many initialization choices such that the single-qubit gate's parameters are independently sampled with respect to  $\NC(\theta^*,\varsigma^2)$, where  $\theta^*\in\{0,\pi\}$ and $ \varsigma\in\Theta\left(\frac{1}{\sqrt{L}}\right)$, the model is barren plateau-free, as
\begin{equation}\label{eq:exp_concentration}
    \Var_{\bm{\gamma}}[\LC(\thv)]\in\Omega\left(\frac{1}{\poly(n)}\right)\,.
\end{equation}
\end{theorem}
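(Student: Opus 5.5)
The plan is to reduce every shifted initialization in the family to the centered Gaussian initialization of Ref.~\cite{zhang2022escaping}, with modified fixed gates, and then rerun (a mild generalization of) that argument. The starting identity is $e^{-i(\delta+\pi)\sigma/2}=-i\,\sigma\, e^{-i\delta\sigma/2}$ for any Pauli $\sigma$. Writing $\theta=\theta^*+\delta$ with $\delta\sim\NC(0,\varsigma^2)$, each gate with $\theta^*=\pi$ becomes the centered gate times a fixed Pauli $\sigma_x^{(j)}$ or $\sigma_z^{(j)}$, up to an irrelevant global phase.

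Inside one single-qubit block $e^{-i\th_1\sx/2}e^{-i\th_2\sz/2}e^{-i\th_3\sx/2}$ I will push all these Paulis to one end of the block. A Pauli passes through a rotation with the same generator unchanged. When it passes through a rotation whose generator anticommutes with it, the sign of that rotation's $\delta$ flips. Because $\NC(0,\varsigma^2)$ is symmetric and the $\delta$'s are independent, these sign flips leave the joint distribution of the parameters invariant, so they do not change $\Var_{\vec\gamma}[\LC]$.

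After this step, layer $l$ reads $\left(\prod_j R_j(\vec\delta_{j,l})\right) P_l V_l$. Here $P_l$ is a Pauli string and the $R_j$ are centered-Gaussian blocks. I then define the new fixed gate $V_l' := P_l V_l$. The loss is now exactly a centered-Gaussian model with $\varsigma\in\Theta(1/\sqrt{L})$ and fixed gates $V'_l$, each a product of a Pauli string and a diagonal gate.

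The second step is to check which properties of the $V_l$ the proof of Ref.~\cite{zhang2022escaping} actually uses. I will verify that it only needs the following: at $\vec\delta=0$ the circuit $\prod_l P_l V_l$ maps $\ket{0}^{\otimes n}$ to a computational basis state $\ket{\vec b}$ up to a phase. This holds because $X$ factors flip bits while $Z$ factors and diagonal gates only contribute phases. I will then expand $\LC$ around $\vec\delta=0$ and lower bound the variance by the contribution of a single parameter.

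Concretely, pick a term $h$ of $O$ acting on $\OC(1)$ qubits and a parameter in the last layer on a qubit of its support. Then take the expectation over all other parameters. Each gate then acts through its first moment, which is $\kappa=e^{-\varsigma^2/2}$ times a rotation by Theorem~\ref{lem:moment-1-rotation-Pauli}. Along the relevant path this accumulates a factor of at least $e^{-c L\varsigma^2}\in\Omega(1)$, in the spirit of Theorem~\ref{th:sufficient_condition}.

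The local contribution is handled by cases on $h$. If $h$ has an $X$ or $Y$ component, the first derivative at the basis state is nonzero and gives $\Omega(\varsigma^2)$. If $h$ is a pure $Z$-string, the first derivative vanishes, and I use the second-order term $\cos\delta$ instead, whose variance is $\Theta(\varsigma^4)$. In both cases the bound is $\Omega(1/\poly(n))$, since $L\in\poly(n)$.

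The main obstacle is the cross terms. The other terms of $O$, and the light-cone spreading through the diagonal entanglers, could in principle cancel the single-parameter contribution. The light cone of the diagonal gates is not local, because commuting $X$ through them produces non-Pauli operators; this is why I absorb $P_l$ into $V'_l$ rather than propagate it. To control the cross terms I will follow Ref.~\cite{zhang2022escaping}: lower bound the variance by a sum of squared expected partial derivatives, using the independence of the $\delta$'s and the orthogonality of distinct Pauli moments. Then I show that the selected term dominates up to the $\Omega(1)$ damping factor. Finally, I note that none of this used Gaussianity beyond symmetry and $\kappa=\Omega(1)$, so the same argument covers uniform windows.
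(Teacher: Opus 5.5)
Your reduction step is correct and takes a different route from the paper's proof. The paper re-proves Lemmas B.1--B.2 of Ref.~\cite{zhang2022escaping} for $\theta=\pi\zeta+\delta$, noting that $(-1)^\zeta$ squares away, and then defers to that reference. You instead show that every shifted initialization is equidistributed with a single centered model whose fixed gates are $P_lV_l$, which are still Clifford whenever $V_l$ is. That reduction is cleaner and covers any symmetric window at once. Likewise, the bound $\Var[\LC]\geq\Var(\mathbb{E}[\LC\,|\,\delta_S])$ is a valid and more direct substitute for the paper's passage through gradients.

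The gap is in how you use that bound. With only $\sigma_x$ and $\sigma_z$ rotations, everything you average over preserves the $X$-part of a Pauli string. This includes the diagonal gates, the absorbed Pauli strings, and the centered first moments, which are diagonal in the Pauli basis. Only strings with trivial $X$-part overlap with $|0\rangle$. A single conditioned rotation changes the $X$-part only on its own site, and clears it only if it is an $R_x$ meeting a $\sigma_y$ there.

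So for $h=\sigma_x^{(1)}$, $\sigma_x^{(1)}\sigma_x^{(2)}$ or $\sigma_y^{(1)}\sigma_y^{(2)}$, and a parameter in the last layer, one gets $\mathbb{E}[\LC_h\,|\,\delta_k]\equiv 0$. Your claim that an $X$ or $Y$ component gives a nonzero first derivative of size $\Omega(\varsigma^2)$ holds only for terms of the form $\sigma_y^{(j)}\otimes(Z\text{-string})$. Worse, for $O=\sum_i\sigma_x^{(i)}\sigma_x^{(i+1)}$ (any diagonal entanglers) or $O=\sum_i\sigma_x^{(i)}$ (CZ entanglers), $\mathbb{E}[\LC\,|\,\delta_k]\equiv 0$ for every $k$. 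Your sum of squared expected partial derivatives therefore vanishes identically, even though, e.g., for $O=\sigma_x^{(1)}$ one has $\Var[\LC]\in\Omega(\varsigma^4)$.

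You need higher-order conditioning. For example, condition on the consecutive $R_z,R_x$ pair on each $X$-site of the support in the last layer. This maps $\sigma_x\to\sigma_y\to\sigma_z$ with weight $\sin\delta\sin\delta'$ and contributes order $\varsigma^4$ per $X$-site, which is still inverse-polynomial for $\OC(1)$-local terms. The paper's route captures these paths automatically, because the $\varsigma^2(1-\varsigma^2)\Tr[iGO\rho]^2$ term of Lemma B.1 can be followed through consecutive gates.

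Your cross-term step is also not yet an argument. After overlapping with $|0\rangle$ every $Z$-string has unit overlap, so there is no ``orthogonality of distinct Pauli moments'' to exploit. Take all centers at $0$, set $a=e^{-\varsigma^2/2}$, and condition on the last-layer $R_x$ of qubit $1$. The terms $\sigma_y^{(1)}$ and $\sigma_y^{(1)}\sigma_z^{(2)}$ then feed the same $\sin\delta_k$ with weights $a^{2L-1}$ and $a^{4L-1}$. These cancel for the constant coefficient ratio $-e^{L\varsigma^2}$.

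What does hold is that, under symmetric $\delta$'s, trigonometric monomials with different parity patterns are orthogonal. You would need to combine this with an explicit choice of conditioned gates (e.g., varying the layer) that rules out such cancellations.

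Finally, the property your first-moment route actually relies on is $X$-part conservation, not merely that the $\vec\delta=0$ circuit maps $|0\rangle$ to a basis state. The single-Pauli-path recursion of Ref.~\cite{zhang2022escaping}, which you want to borrow for the cross terms, needs Clifford entanglers. It therefore does not by itself cover general diagonal $V_l$.
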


\begin{proof}
    The proof of this theorem follows closely the proof of Theorem 4.1 in~\cite{zhang2022escaping}. In particular, all we need to do is generalize Lemma B.1 and Lemma B.2 therein. Furthermore, we note that while the work in~\cite{zhang2022escaping} shows absence of barren plateaus by computing $\Var_{\bm{\gamma}}[\partial \LC(\thv)/\partial \theta_l]$ for all $\theta_l\in\thv$, this result directly implies $\Var_{\bm{\gamma}}[\LC(\thv)]~\cite{holmes2021connecting}$.

\begin{lemma}[Generalized version of Lemma B.1 from~\cite{zhang2022escaping}]
Let $\zeta\in\{0,1\}$, and let
\begin{equation}
    \theta=\pi\zeta+\delta\,, \qquad\text{where}\quad 
    \delta\sim\mathcal N(0,\varsigma^2)\,,
\end{equation}
and let
\begin{equation}
    U(\theta)=e^{-i\theta G/2}\,,
\end{equation}
where $G$ is a Hermitian unitary. Let $O$ be a Hermitian observable such that
\begin{equation}
    \{O,G\}=0\,.
\end{equation}
and $\rho$ a quantum state.  Then
\begin{equation}
    \mathbb E_{\theta} \Tr\!\left[
        OU(\theta)\rho U(\theta)^\dagger
    \right]^2
    \geq (1-\varsigma^2)\Tr[O\rho]^2 + \varsigma^2(1-\varsigma^2) \Tr[iGO\rho]^2\,.
\end{equation}
\end{lemma}

\begin{proof}
First, let us define the quantities
\begin{equation}
    A=\Tr[O\rho]\,, \qquad B=\Tr[iGO\rho]\,.
\end{equation}
Since $G^2=\id$,
\begin{equation}
    e^{-i\theta G/2}=\cos(\theta/2) \id -i\sin(\theta/2)G\,.
\end{equation}
Using $\{O,G\}=0$, we obtain
\begin{equation}
    \Tr\left[ OU(\theta)\rho U(\theta)^\dagger
    \right]=\cos(\theta)A+\sin(\theta)B\,.
\end{equation}
Substituting $\theta=\pi\zeta+\delta$, we find
\begin{equation}
    \cos\left(\delta+\pi \zeta\right)=(-1)^\zeta\cos(\delta)\,,
    \qquad
    \sin\left(\delta+\pi \zeta\right)=(-1)^\zeta\sin(\delta)\,,
\end{equation}
and hence we have
\begin{equation}
    \cos(\theta)A+\sin(\theta)B = (-1)^\zeta\left(\cos(\delta)A+\sin(\delta)B\right)\,.
\end{equation}
Therefore the square is independent of $\zeta$:
\begin{equation}
    \Tr\!\left[ OV(\theta)\rho V(\theta)^\dagger
    \right]^2  = \left(\cos(\delta)A+\sin(\delta)B\right)^2\,.
\end{equation}
Taking the Gaussian expectation over $\delta\sim\mathcal N(0,\varsigma^2)$ gives exactly the same
quantity as in Lemma B.1:
\begin{equation}
\begin{aligned}
    \mathbb E_{\delta} \left(\cos(\delta)A+\sin(\delta)B\right)^2
    &= \frac{1+e^{-2\varsigma^2}}{2}A^2 + \frac{1-e^{-2\varsigma^2}}{2}B^2 \\
    &\geq  (1-\varsigma^2)A^2  + \varsigma^2(1-\varsigma^2)B^2\,,
\end{aligned}
\end{equation}
where we used that $e^{-x}\geq 1-x$ and $e^{-x}\leq 1-x+\frac{x^2}{2}$ for all $x\geq 0$. This proves the claim.
\end{proof}

\begin{lemma}[Generalized version of Lemma B.2 from~\cite{zhang2022escaping}]
Let $\zeta\in\{0,1\}$, let
\begin{equation}
    \theta=\pi\zeta+\delta\,, \qquad \delta\sim\mathcal N(0,\gamma^2)\,,
\end{equation}
and let
\begin{equation}
    U(\theta)=e^{-i\theta G/2}\,,
\end{equation}
where $G$ is a Hermitian unitary. Let $O$ be a Hermitian observable such that
\begin{equation}
    \{O,G\}=0\,,
\end{equation}
and let $\rho$ be a density matrix. 
Then
\begin{equation}
    \mathbb E_{\theta}
    \left(\frac{\partial \LC(\theta)}{\partial \theta}\right)^2
    \geq  (1-\varsigma^2)
    \left(\frac{\partial \LC(\theta)}{\partial\theta}\bigg|_{\theta=\pi\zeta}
    \right)^2 + \varsigma^2(1-\varsigma^2)\Tr[O\rho]^2\,.
\end{equation}
\end{lemma}

\begin{proof}
As above, define
\begin{equation}
    A=\Tr[O\rho]\,, \qquad  B=\Tr[iGO\rho]\,.
\end{equation}
Then
\begin{equation}
    \LC(\theta)=\cos(\theta)A+\sin(\theta)B\,,
\end{equation}
and hence
\begin{equation}
    \frac{\partial \LC(\theta)}{\partial\theta} = -\sin(\theta)A+\cos(\theta)B\,.
\end{equation}
Substituting $\theta=\pi\zeta+\delta$, we obtain
\begin{equation}
\frac{\partial \LC\left(\pi\zeta+\delta\right)}{\partial\theta}
    = (-1)^\zeta\left(-\sin(\delta)A+\cos(\delta)B\right)\,.
\end{equation}
Therefore the squared derivative is again independent of $\zeta$. Taking the expectation gives
\begin{equation}
\begin{aligned}
    \mathbb E_{\theta} \left( \frac{\partial \LC(\theta)}{\partial\theta}
    \right)^2   &=  \mathbb E_{\delta}  \left( -\sin(\delta)A+\cos(\delta)B
    \right)^2  \\  &= \frac{(1-e^{-2\zeta^2})}{2}A^2  + \frac{(1+e^{-2\zeta^2})}{2}B^2 \\  &\geq
    \zeta^2(1-\zeta^2)A^2   + (1-\zeta^2)B^2\,.
\end{aligned}
\end{equation}
At the center $\theta=\pi\zeta$,
\begin{equation}
    \frac{\partial \LC(\theta)}{\partial\theta}\bigg|_{\theta=\pi\zeta}
    =
    (-1)^\zeta B,
\end{equation}
and hence
\begin{equation}
    \left(
        \frac{\partial \LC(\theta)}{\partial\theta}\bigg|_{\theta=\pi\zeta}
    \right)^2
    =
    B^2.
\end{equation}
Putting it all together leads to
\begin{equation}
    \mathbb E_{\theta} \left( \frac{\partial \LC(\theta)}{\partial \theta}
    \right)^2
    \geq  (1-\varsigma^2)  \left( \frac{\partial \LC(\theta)}{\partial\theta}\bigg|_{\theta=\pi\zeta}
    \right)^2 + \varsigma^2(1-\varsigma^2)\Tr[O\rho]^2\,,
\end{equation}
which proves the claim.
\end{proof}

From here, one simply follows the steps in~\cite{zhang2022escaping}.

\end{proof}

Finally, we note that while we here used Gaussian initialization, a similar result can be found with uniform initialization, where one simply needs to replace the expectation value over $\delta$.

\section{Simplified computations for symmetric strategies}\label{ap-sec:symmetric-strategis}
In this section we prove two lemmas that showcase different techniques for computing $\widehat{\tau}_{\vec{\gamma}_{\rm sym}}\kett{O}$ for the case when  $P_{\vec{\gamma}_{\rm sym}}(\thv)$ is composed of independent symmetric probability distributions on each parameter. 

\subsection{Lemma~\ref{lemma:symmetric_diagonal} and its proof}\label{app:lemma_sym_diag}
We devote this subsection to presenting and proving Lemma~\ref{lemma:symmetric_diagonal}.

\begin{lemma}\label{lemma:symmetric_diagonal}
     Let $U(\thv)$ be a PQC as in Eq.~\eqref{eq:pauliPQC} where all $V_l$  gates are Clifford operations, let $P_{\bm{\gamma}_{\rm sym}}(\thv)$ be a symmetric initialization strategy and $O$ a Pauli. The first moment operator takes the form      \begin{equation}\label{eq:result_lemma_diagonal}
         \widehat{\tau}_{\bm{\gamma}_{\rm sym}} \kett{O} =  U(\vec{0})\otimes U^*(\vec{0})\mu_{\bm{\gamma}_{\rm sym}}\kett{O}\, ,
     \end{equation}
     where $\mu_{\bm{\gamma}_{\rm sym}}$ is a diagonal matrix with positive eigenvalues smaller or equal to $1$.
\end{lemma}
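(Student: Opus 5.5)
The approach is to combine Theorem~\ref{lem:moment-1-rotation-Pauli} with the fact that Clifford gates permute Pauli operators up to sign. Every symmetric single-gate moment operator is then diagonal in the Pauli basis, and the Clifford layers can be commuted to the left, where they assemble into $U(\vec{0})\otimes U^*(\vec{0})$.

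\emph{Step 1 (single gate).} For a symmetric distribution we have $b_l=0$. By Theorem~\ref{lem:moment-1-rotation-Pauli}, or directly from the intermediate expression $\widehat{\tau}_{\gamma_l}\kett{\sigma_i}=a_l\kett{\sigma_i}-b_l\kett{i\sigma_l\sigma_i}$ for anticommuting $\sigma_i$ obtained in its proof, the operator $\widehat{\tau}_{\gamma_l}$ acts as follows. It leaves $\kett{\sigma_i}$ invariant if $[\sigma_l,\sigma_i]=0$, and multiplies it by $a_l$ if $\{\sigma_l,\sigma_i\}=0$. So $\widehat{\tau}_{\gamma_l}=:D_l$ is diagonal in the Pauli basis, with entries in $\{1,a_l\}$ and $|a_l|\leq 1$. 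I would use this form directly rather than the factorization $W_l\eta_l$, because it avoids the sign ambiguity of $\phi_l=\arccos(a_l/|a_l|)$ when $a_l<0$.

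\emph{Step 2 (commuting through Cliffords).} Write $\widehat{\tau}_{\bm{\gamma}_{\rm sym}}=\prod_{l=1}^L D_l\,(V_l\otimes V_l^*)$, using the factorization of Eq.~\eqref{eq:superoperator} with the unparametrized gates included. For a Clifford $V$ we have $V\sigma_iV^\dagger=\pm\sigma_{\pi(i)}$ for some permutation $\pi$ of the Paulis. Hence $C:=V\otimes V^*$ is a signed permutation matrix in the Pauli basis, and $C^\dagger D C$ is again diagonal. Its entries are those of $D$, permuted, so the signs cancel. Moving each $V_l\otimes V_l^*$ to the left therefore gives
\begin{equation}
\widehat{\tau}_{\bm{\gamma}_{\rm sym}}=\Big(\prod_{l}V_l\otimes V_l^*\Big)\prod_l \tilde D_l ,
\end{equation}
where each $\tilde D_l$ is the conjugate of $D_l$ by the product of the Cliffords it crossed. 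Since $U(\vec{0})=\prod_l V_l$ in the same order, the prefactor equals $U(\vec{0})\otimes U^*(\vec{0})$. I then set $\mu_{\bm{\gamma}_{\rm sym}}:=\prod_l\tilde D_l$, a product of commuting diagonal matrices, hence diagonal. Explicitly, its entry at $\sigma_i$ is $\prod_{l\in S_i}a_l$, where $S_i$ is the set of gates whose generator anticommutes with the Clifford-propagated image of $\sigma_i$ at that point in the circuit. For a Pauli $O$ we get $\mu\kett{O}=c_O\kett{O}$, which yields Eq.~\eqref{eq:result_lemma_diagonal}.

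\emph{Main obstacle.} The delicate part is the ordering bookkeeping, namely checking that the accumulated Clifford prefactor is exactly $U(\vec{0})$ and not its reverse. I would settle this by fixing the convention of Eq.~\eqref{eq:pauliPQC} and doing an induction on $L$. The other delicate point is the claim that the eigenvalues are positive. The construction gives entries that are products of the $a_l$, so they lie in $[-1,1]$. They are positive precisely when every $a_l>0$, which holds for all the symmetric examples considered (Unif$[-r,r]$ with $r<\pi$, $\NC(0,\sigma)$). I would state this as the operative reading of ``symmetric'' in the lemma, or, if negative $a_l$ must be allowed, absorb the signs $a_l=-|a_l|$ into a Pauli factor. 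That last option breaks the clean $U(\vec{0})$ prefactor, so I would prefer to assume $a_l>0$.
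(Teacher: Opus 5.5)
Your proposal is correct and follows essentially the paper's argument. The paper pushes the Cliffords through at the circuit level, writing $U(\thv)=U(\vec{0})\prod_l e^{-i\theta_l\tilde{\sigma}_l/2}$ with Clifford-conjugated Pauli generators, and then observes that each symmetric single-rotation moment operator is diagonal in the Pauli basis; this is your superoperator-level conjugation $\tilde D_l$ in different notation.

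Your caveat on positivity is well-founded. The paper's proof never justifies that the entries are positive, and under $b_l=0$ alone they are only products of $a_l\in[-1,1]$. For example, $\NC(\pi,\varsigma^2)$, which the paper uses in its MaxCut example, has $a=-e^{-\varsigma^2/2}$. So the extra assumption $a_l\geq 0$ you propose is genuinely needed.
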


\begin{proof}
    We start by noting that any circuit of the form of Eq.~\ref{eq:pauliPQC} can be written as a final Clifford gate after several parametrized gates made out of Pauli operators. Indeed, since for any Clifford gate $C$, and Pauli $\sigma$, one has $Ce^{-i \theta\sigma}C\ad=e^{-i \theta C\sigma C\ad}=e^{-i \theta\tilde{\sigma}}$ for a different Pauli $\tilde{\sigma}$ (recall that under conjugation, Clifford unitaries map Pauli operators to Pauli operators), we can ``push'' the action of the Cliffords to the end of the circuit. As such, we can express
\begin{align}\label{eq:rewrite_absorb_clifford}
        U(\thv) = U(\vec{0}) \prod_{l=1}^L e^{-i\th_l\tilde{\sigma}_l/2}\,,
    \end{align}
where we defined $U(\vec{0}) = \prod_{l=1}^L C_l$, the product of all Clifford gates in the circuit. 

With the previous in mind, let us next show that the result in the lemma is true for a single Pauli rotation.    Let $P_{\gamma_{l,{\rm sym}}}(\theta_l)$ be a symmetric initialization for the parameter $\theta_l$. From Eq.~\eqref{eq:vectorized_twirl_without_int}, the corresponding first moment operator is
    \begin{equation}
        \widehat{\tau}_{\gamma_{l,{\rm sym}}} = I_{2,0}\1\otimes\1 + I_{0,2}\sigma_l\otimes\sigma^*_l\,.
    \end{equation}
Then, its action on a vectorized Pauli $\kett{\sigma_i}$, where $\sigma_i$ is either $[\sigma_i,\sigma_l] = 0$, or $\{\sigma_i,\sigma_l\}=0$. Again, if the generator commutes with the observable, then  $\widehat{\tau}_{\gamma_{l,{\rm sym}}}\kett{\sigma_i}=\kett{\sigma_i}$, else
    \begin{align}
         \widehat{\tau}_{\gamma_{l,{\rm sym}}}\kett{\sigma_i} = I_{2,0}\kett{\sigma_i}+I_{0,2}\kett{\sigma_l\sigma_i\sigma_l}
         =(I_{2,0}-I_{0,2})\kett{\sigma_i} = a\kett{\sigma_i}
    \end{align}
    where we defined $a = I_{2,0} - I_{0,2}$. Thus, $\widehat{\tau}_{\gamma_{l,{\rm sym}}}$ is diagonal in the Pauli basis $\{\kett{\sigma_i}\}$, and its action can be expressed in terms of the diagonal matrix $\mu_{l}$. From this we can see that given a general observable $\kett{O} = \sum_i \alpha_i\kett{\sigma_i}$, where $\kett{\sigma_i}$ are the elements of the Pauli basis, we will have that $\widehat{\tau}_{\gamma_{l,{\rm sym}}}\kett{\sigma_i}\kett{O} = \mu_{l}\kett{O}$. Similarly, $\prod_l\widehat{\tau}_{\gamma_{l,{\rm sym}}}\kett{O} = \prod_l\mu_{l}\kett{O}$. Finally, adding the  Clifford gates allows us to write 
    \begin{align}
        \widehat{\tau}_{\bm{\gamma}_{\rm sym}}\kett{O} = U(\vec{0})\otimes U^*(\vec{0}) \prod_l \mu_{l}\kett{O} = U(\vec{0})\otimes U^*(\vec{0}) \mu_{\bm{\gamma}_{\rm sym}}\kett{O}\, ,
    \end{align}
    where we defined $\prod_l \mu_{l} = \mu_{\bm{\gamma}_{\rm sym}}$, and which recovers Eq.~\eqref{eq:result_lemma_diagonal} as intended.
\end{proof}

\subsection{Lemma~\ref{lemma:symmetric_compute_coef} and its proof}\label{app:lemma_compute_coef}
In this lemma we show how the matrix elements of $\mu_{\bm{\gamma}_{\rm sym}}$ (defined in the previous lemma) can be computed. Our strategy is based on the fact that $\mu_{\bm{\gamma}_{\rm sym}}$ is diagonal in the Pauli basis. 
\begin{lemma}\label{lemma:symmetric_compute_coef}
    For any circuit and initialization of the form in Lemma~\ref{lemma:symmetric_diagonal} with no Clifford gates\footnote{If we do have Clifford gates the same result can be applied, but we will have to consider the circuit with the permuted Clifford gates, as shown in Eq.~\eqref{eq:rewrite_absorb_clifford}.}, and for $\kett{O} = \sum_i\alpha_i\kett{\sigma_i}$, one obtains
    \begin{equation}
        \widehat{\tau}_{\vec{\gamma}}\kett{O} = \sum_i\alpha_i \prod_{l \, {\rm s.t.}\, \{\sigma_l,\sigma_i\}=0} a_l \kett{\sigma_i}
    \end{equation}
\end{lemma}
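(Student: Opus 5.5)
The plan is to build on the single-gate computation already carried out in the proof of Lemma~\ref{lemma:symmetric_diagonal}, and then propagate it through the circuit by induction on the number of gates. Since there are no Clifford gates, the circuit is simply $U(\thv)=\prod_{l=1}^L e^{-i\th_l\sigma_l/2}$. Because the parameters are sampled independently, Eq.~\eqref{eq:superoperator} gives the moment operator as an ordered product $\widehat{\tau}_{\vec{\gamma}}=\prod_{l=1}^L\widehat{\tau}_{\gamma_l}$. By linearity it then suffices to compute $\widehat{\tau}_{\vec{\gamma}}\kett{\sigma_i}$ for a single Pauli $\sigma_i$ in the expansion of $O$, and afterwards reinsert the coefficients $\alpha_i$.

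First, I will recall the single-gate fact. For a symmetric $P_{\gamma_l}$ we have $b_l=0$, so $I_{1,1}=0$ and Eq.~\eqref{eq:vectorized_twirl_without_int} reduces to $\widehat{\tau}_{\gamma_l}=I_{2,0}\1\otimes\1+I_{0,2}\sigma_l\otimes\sigma_l^*$. Using Eq.~\eqref{eq:kroneker_vectorisation}, this acts as $\kett{\sigma_i}\mapsto I_{2,0}\kett{\sigma_i}+I_{0,2}\kett{\sigma_l\sigma_i\sigma_l}$. Pauli operators either commute or anticommute, so this equals $\kett{\sigma_i}$ in the commuting case, since $I_{2,0}+I_{0,2}=1$ by normalization. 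In the anticommuting case it equals $(I_{2,0}-I_{0,2})\kett{\sigma_i}=a_l\kett{\sigma_i}$. Hence every $\widehat{\tau}_{\gamma_l}$ is diagonal in the Pauli basis, with eigenvalue $a_l^{[\{\sigma_l,\sigma_i\}=0]}$ on $\kett{\sigma_i}$.

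Second, I will induct on the gates. Each factor maps $\kett{\sigma_i}$ to a scalar multiple of the same $\kett{\sigma_i}$, and the commutation relation of $\sigma_i$ with the next generator is unaffected by multiplying $\sigma_i$ by a scalar. Applying the factors in sequence therefore accumulates exactly the product $\prod_{l:\{\sigma_l,\sigma_i\}=0}a_l$. The ordering of the product is irrelevant because all factors are simultaneously diagonal, so they commute. Summing over $i$ with weights $\alpha_i$ gives the claimed formula. For the footnoted Clifford case, I would first rewrite the circuit via Eq.~\eqref{eq:rewrite_absorb_clifford} with the conjugated generators $\tilde{\sigma}_l$, apply the same argument, and finish with $U(\vec{0})\otimes U^*(\vec{0})$.

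The argument is essentially routine. The only point needing care is confirming that the symmetric assumption kills the off-diagonal term $-iI_{1,1}(\sigma_l\otimes\1-\1\otimes\sigma_l^*)$, since that term is what would otherwise rotate $\sigma_i$ into $i\sigma_l\sigma_i$ and break diagonality. The remaining steps are a consistency check on the sign convention $\kett{\sigma_l\sigma_i\sigma_l}=-\kett{\sigma_i}$ for anticommuting Paulis, together with the use of independence to factorize the moment operator.
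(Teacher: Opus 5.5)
Your proposal is correct and takes the same approach as the paper. The paper's one-line proof rests on the fact, shown in the proof of Lemma~\ref{lemma:symmetric_diagonal}, that each symmetric single-gate moment operator is diagonal in the Pauli basis with eigenvalue $a_l$ or $1$ according to anticommutation or commutation, so these factors simply accumulate; you additionally spell out the single-gate computation ($I_{1,1}=b_l/2=0$, $I_{2,0}+I_{0,2}=1$) and the induction over gates that the paper leaves implicit.
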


\begin{proof}
    The proof of this is trivial. Because symmetric initialization does not create branching (i.e. $\widehat{\tau}_{\bm{\gamma}_{\rm sym}}\kett{O} = \mu_{\bm{\gamma}_{\rm sym}} \kett{O}$), we will either get (or not get) a coefficient depending on the commutation between the generators and the initial terms in $O$.
\end{proof}

\section{Examples}\label{ap-sec:examples}
In this section we expand the calculations for Examples~\ref{ex:HVA}, ~\ref{ex:MaxCut} and~\ref{ex:different_initialisation_different_points} in the main text (we skip Example~\ref{ex:HVA-identity}, as its result follows directly from the proof of Examples~\ref{ex:HVA}). 

\subsection{Example~\ref{ex:HVA}}\label{app:hva}

    Let us consider the following example: a Hamiltonian Variational Ansatz for a one-dimensional $XYZ$ Heisenberg chain. Here we have $\GC=\{\sx^{(i)}\sx^{(i+1)},\sy^{(i)}\sy^{(i+1)},\sz^{(i)}\sz^{(i+1)}\}_{i=1}^{n-1}\cup\{\sz^{(i)}\}_{i=1}^n$. Thus the circuit at hand is 
    \begin{align}
        U(\thv) = \prod_{l=1}^L \Big[\prod_{j=1}^{n-1} e^{-i\theta_{j,l}/2 \sx^{(j)}\sx^{(j+1)}}
        \prod_{k=1}^{n-1} e^{-i\theta_{k,l}/2 \sy^{(k)}\sy^{(k+1)}}
        \prod_{a=1}^{n-1} e^{-i\theta_{a,l}/2 \sz^{(a)}\sz^{(a+1)}}
        \prod_{b=1}^{n-1} e^{-i\theta_{b,l}/2 \sz^{(b)}}\Big]\,.
    \end{align}
    And consider that all the parameters are initialized with the same symmetric initialization strategy $P_{\gamma_{\rm sym}}(\th)$. Furthermore, consider that we have an observable of the form $O=\sum_i \sx^{(i)}\sy^{(i+1)}$.

    This fulfills the conditions to apply Lemmas~\ref{lemma:symmetric_diagonal},~\ref{lemma:symmetric_compute_coef}. Indeed, from Lemma~\ref{lemma:symmetric_diagonal}, we have that the associated moment operator $\widehat{\tau}_{\vec{\gamma}}$ is diagonal in the Pauli basis. Moreover, we can apply Lemma~\ref{lemma:symmetric_compute_coef} to compute the scaling of the operator expressivity for our choice of measurement operator $O$. Specifically, Lemma~\ref{lemma:symmetric_compute_coef} states that we only need to consider the commutation relations between $\kett{\sigma_i}$ and the set of generators:
    \begin{align}
        &[\sx^{(i)}\sx^{(i+1)}, \sx^{(j)}\sy^{(j+1)}]\neq 0 \, {\rm for }\, i=j \, {\rm or }\, j+1 = i\,,\\
        &[\sy^{(i)}\sy^{(i+1)}, \sx^{(j)}\sy^{(j+1)}]\neq 0 \, {\rm for }\, i=j \, {\rm or }\, j-1 = i\,,\\
        &[\sz^{(i)}\sz^{(i+1)}, \sx^{(j)}\sy^{(j+1)}]\neq 0 \, {\rm for }\, i=j \,  {\rm or }\, j\pm 1 = 1\,,\\
        &[\sz^{(i)}, \sx^{(j)}\sy^{(j+1)}]\neq 0 \, {\rm for }\, i=j \,  {\rm or }\, j+1 = i\,.
    \end{align}

 From here, we can apply Lemma~\ref{lemma:symmetric_compute_coef} to obtain (note that care must be taken with the border condition)
    \begin{align}
        \widehat{\tau}_{\vec{\gamma}_{\rm sym}}\kett{O} = &\sum_{i = 2}^{n-2}a^{9 L}\kett{\sx^{(i)}\sy^{(i+1)}} + 
        a^{8 L}\kett{\sx^{(1)}\sy^{(2)}}+a^{8 L}\kett{\sx^{(n-1)}\sy^{(n)}}\,,
    \end{align}
    where we computed the coefficients $a,\, b$ (in this example $b=0$ because the initialization is symmetric) according to the definitions given in Eqs.~\eqref{eq:def_a_main}, and~\eqref{eq:def_b_main}.

    Similarly, consider the case where instead of initializing all the parameters with the same distribution $P_{\gamma_{\rm sym}}(\th)$, we now initialize all the gates with $\sz^{(1)}$ according to $P_{\gamma'_{\rm sym}}(\th)$. Then, defining $a' = \int P_{\gamma'_{\rm sym}}(\th) \cos \theta$, we find
     \begin{align}
            \widehat{\tau}_{\vec{\gamma}'_{\rm sym}}\kett{O} = &\sum_{i = 2}^{n-2}a^{9 L}\kett{\sx^{(i)}\sy^{(i+1)}} +a^{7 L}(a')^L\kett{\sx^{(1)}\sy^{(2)}}+a^{8 L}\kett{\sx^{(n-1)}\sy^{(n)}}\,.
    \end{align}
This means that 
\begin{align}
    \xi_{\vec{\gamma}',\vec{\gamma}}
    = \|\widehat{\tau}_{\vec{\gamma}'_{\rm sym}}\kett{O}
      -\widehat{\tau}_{\vec{\gamma}_{\rm sym}}\kett{O}\|_1
    = a^{7L}\left|a^L-(a')^L\right|\,.
\end{align}
as stated in the main text.

\subsection{Example~\ref{ex:MaxCut}}\label{app:MaxCut}
Let us consider a MaxCut problem defined on a graph $G=(V,E)$ with $n$ vertices. The measurement operator is the problem Hamiltonian $O = \sum_{(w,w')\in E}\sz^{(w)}\sz^{(w')}$. We evaluate a parameterized quantum circuit of the form
\begin{align}
    U(\vec{\theta}) &= V_{L+1}\left(\theta^V_{L+1}\right) \prod_{l=1}^{L} \left( \prod_{j=1}^{n} \text{R}_y^{(j)}\left(\theta^y_{l,j}\right)\, V_{l,j}\left(\theta^V_{l,j}\right) \right)\,,
\end{align}
where the operators $V_{l,j}$ and $V_{L+1}$ are diagonal in the computational basis (i.e., generated by products of $\sz$).

We define a family of initializations $\vec{\gamma}_{\vec{c}}$ parameterized by a binary matrix $\vec{c} \in \{0,1\}^{L \times n}$. Under this initialization strategy, the parameter $\theta^y_{l,j}$ for the $\text{R}_y$ gate on qubit $j$ at layer $l$ is independently drawn from a Gaussian distribution:
\begin{align}\label{eq:distribution_MaxCut}
    P_{\gamma_{l,j}}(\theta) &= \NC(\pi c_{l,j},\varsigma^2)\,.
\end{align}

The full moment operator for the circuit is the product of the individual moment operators of each layer acting on the observable $\kett{O}$:
\begin{align}
    \widehat{\tau}_{\vec{\gamma}_{\vec{c}}}\kett{O} &= \widehat{\tau}_{V_{L+1}} \prod_{l=1}^{L} \left( \prod_{j=1}^{n} \widehat{\tau}_{\gamma_{l,j}} \widehat{\tau}_{V_{l,j}} \right) \kett{O}\,.
\end{align}

To evaluate this under the distribution in Eq.~\eqref{eq:distribution_MaxCut}, we first track the action of a single $\text{R}_y$ moment operator on a Pauli string. The generators are $\sy^{(j)}$. For any edge term $\sz^{(w)}\sz^{(w')}$, the generator commutes with the operator unless $j=w$ or $j=w'$ (i.e., $[\sy^{(j)}, \sz^{(w)}\sz^{(w')}] \neq 0 \iff j \in \{w, w'\}$). If they commute, $\widehat{\tau}_{\gamma_{l,j}}$ acts as the identity. If they do not commute (for instance, when acting on $\sz^{(j)}$), we evaluate the twirl explicitly. Because the integration domain is symmetric around $c_{l,j}\pi$, the sine term vanishes ($\int P_{\gamma_{l,j}}(\theta)\sin(\theta)d\theta = 0$), yielding a strictly diagonal operator:
\begin{align}\label{eq:sy_applied_MaxCut}
\widehat{\tau}_{\gamma_{l,j}}\kett{\sz^{(j)}} &= \int d\th P_{\gamma_{l,j}}(\th) \kett{e^{-i\th\sy^{(j)}/2} \sz^{(j)} e^{i\th\sy^{(j)}/2}} \nonumber\\
    &= \int d\th P_{\gamma_{l,j}}(\th) \left(\cos(\th)\kett{\sz^{(j)}} + \sin(\th)\kett
    {\sx^{(j)}}\right) \nonumber\\
    &= a_{l,j} \kett{\sz^{(j)}}\,,
\end{align}
where the remaining coefficient evaluates to:
\begin{align}\label{eq:a_i_maxcut}
    a_{l,j} &= \frac{1}{\sqrt{2\pi \varsigma^2}}\int_{-\infty}^{\infty}e^{-\frac{(\theta - \pi c_{l,j})^2}{2\varsigma^2}} \cos(\theta) d\theta = (-1)^{c_{l,j}} e^{-\varsigma^2/2} = (-1)^{c_{l,j}} |a| \,,
\end{align}
with $|a| = e^{-\varsigma/2}$.

Because $V_{l,j}$ and $V_{L+1}$ are diagonal in the computational basis, they commute exactly with the observable $O$. For any operator $V$ that commutes with $O$, the twirl leaves the operator invariant, $\widehat{\tau}_{V}\kett{O} = \kett{V O V^\dagger} = \kett{O}$. Thus, the diagonal layers cancel out of the equation entirely:
\begin{align}
\widehat{\tau}_{\vec{\gamma}_{\vec{c}}}\kett{O} &= \prod_{l=1}^{L} \prod_{j=1}^{n} \widehat{\tau}_{\gamma_{l,j}} \kett{O}\,.
\end{align}

Finally, we can apply Eqs.~\eqref{eq:sy_applied_MaxCut} and~\eqref{eq:a_i_maxcut} on every qubit and  layer to the full observable $\kett{O}$. Here, each edge term $\kett{\sz^{(w)}\sz^{(w')}}$ accumulates a factor of $a_{l,w}$ and $a_{l,w'}$ at each layer $l$:
\begin{align}
    \widehat{\tau}_{\vec{\gamma}_{\vec{c}}}\kett{O} &= \sum_{(w,w')\in E} \left( \prod_{l=1}^L a_{l,w} a_{l,w'} \right) \kett{\sz^{(w)}\sz^{(w')}} \nonumber\\
    &= \sum_{(w,w')\in E} \left( \prod_{l=1}^L (-1)^{c_{l,w}} |a| \cdot (-1)^{c_{l,w'}} |a| \right) \kett{\sz^{(w)}\sz^{(w')}} \nonumber\\
    &= a^{2L} \sum_{(w,w')\in E} (-1)^{\sum_{l=1}^L (c_{l,w} + c_{l,w'})} \kett{\sz^{(w)}\sz^{(w')}}\,.
\end{align}
Defining $C_w(\vec{c}) = \sum_{l=1}^L c_{l,w} \pmod 2$ as the cumulative parity of shifts on qubit $w$, we can rewrite the previous equation as:
\begin{align}
    \widehat{\tau}_{\vec{\gamma}_{\vec{c}}}\kett{O} &= a^{2L} \sum_{(w,w')\in E}(-1)^{C_w(\vec{c}) + C_{w'}(\vec{c})}\kett{\sz^{(w)}\sz^{(w')}}\,.
\end{align}

From here, we compute the expected operator gap between two initializations $\vec{\gamma}_{\vec{c}}$ and $\vec{\gamma}_{\vec{c}'}$:
\begin{align}
    \xi_{\vec{\gamma}_{\vec{c}},\vec{\gamma}_{\vec{c}'}} 
    &= \left\|\widehat{\tau}_{\vec{\gamma}_{\vec{c}}}\kett{O} - \widehat{\tau}_{\vec{\gamma}_{\vec{c}'}}\kett{O}\right\|_1 \nonumber\\
    &= a^{2L} \sum_{(w,w')\in E} \left|(-1)^{C_w(\vec{c}) + C_{w'}(\vec{c})} - (-1)^{C_w(\vec{c}') + C_{w'}(\vec{c}')}\right|\,.
\end{align}
The absolute difference evaluates to $2$ if $C_w(\vec{c}) \oplus C_{w'}(\vec{c}) \neq C_w(\vec{c}') \oplus C_{w'}(\vec{c}')$, and $0$ otherwise. Thus:
\begin{align}
    \xi_{\vec{\gamma}_{\vec{c}},\vec{\gamma}_{\vec{c}'}} &= 2 a^{2L} \big| \{ (w,w') \in E \mid C_w(\vec{c}) \oplus C_{w'}(\vec{c}) \neq C_w(\vec{c}') \oplus C_{w'}(\vec{c}') \} \big|\,.
\end{align}

We now determine the exact number of structurally distinct initializations this generates. Under this strategy, the sign assigned to any edge $e = (w,w')$ is strictly $(-1)^{C_w \oplus C_{w'}}$. In general, tracking this quantity is highly complex, but we can simplify it by examining the all-to-all graph topology. 

Assume an all-to-all graph where $O = \sum_{i<j}\sz^{(i)}\sz^{(j)}$. In this scenario, the number of distinct initializations corresponds to the number of unique edge-sign configurations that can be generated. For any vertex $w$, let $v_w = (-1)^{C_w(\vec{c})}$. The effective sign on an edge is given by the product $v_w v_{w'}$. There are $2^n$ possible configurations for the vertex vector $\vec{v} \in \{-1, 1\}^n$. Because a global sign flip ($\vec{v} \to -\vec{v}$) leaves every edge product $v_w v_{w'}$ invariant, this global sign symmetry reduces the number of unique configurations by a factor of exactly 2. Thus, the total number of distinct observable initializations we can generate is $2^{n-1}$.

\subsection{Example~\ref{ex:different_initialisation_different_points}}\label{app:different_initialisation_different_points}

Here we consider a circuit composed of two layers of $C_z$ entangling gates and single-qubit parameterized gates, acting on the $j$-th qubit, of the form $U_{(j)}(\thv) = e^{-i \theta_{1}^{(j)} \sx/2}e^{-i \theta_{2}^{(j)} \sy/2}e^{-i \theta_{3}^{(j)} \sz/2}$. To ease the notation moving forward, we define $V$ as the two layers of $C_z$ that appear in between the single qubit gates, i.e.
\begin{equation}
    V = \prod_{i} C_{z}^{(2i,2i+1)}\prod_{j} C_{z}^{(2j+1,2j+2)}
\end{equation}

To make use of Lemmas~\ref{lemma:symmetric_diagonal} and~\ref{lemma:symmetric_compute_coef}, we push the action of the Clifford to the end of the circuit as in Eq.~\eqref{eq:rewrite_absorb_clifford}. For our PQC, this is especially easy in this case because $V = V^\dagger$\footnote{This can be seen just by inspection of $C_z^{(i,j)}$. Crucially $C_z^{(i,j)}$ is diagonal in the same basis $\forall \, i,j$, and thus they commute. Moreover $C_{z}^{(i,j)}C_{z}^{(i,j)} = \1$. }. Therefore, we can rewrite the circuit as follows 
\begin{align}
    U(\thv) =& \prod_{l=1}^L V \bigotimes_{j=1}^n U_{(j)}(\thv_{j,l})\label{eq:circuit_advanced_example}  = \prod_{l'=1}^{L/2} \left [V \bigotimes_{j=1}^n U_{(j)}(\thv_{j,2l'}) V \right] \bigotimes_{j=1}^n U_{(j)}(\thv_{j,2l'-1})\, ,
\end{align}
where we used that $V = V^\dagger$, and grouped it with the previous layer. Note that we made the implicit assumption that $L/2\in \mathbb{N}$, however the correction would be easy if $L$ was an odd number. With this rearrangement of the non-parametrized gates we can define $\widetilde{U}_{(j)}(\thv)$
\begin{align}
    \widetilde{U}_{(j)} &= V U_{(j)}(\thv) V  = \left( e^{-i \theta_{1} \sz^{(j-1)}\otimes\sx^{(j)}\otimes \sz^{(j+1)} /2} e^{-i \theta_{2} \sz^{(j-1)}\otimes\sy^{(j)}\otimes \sz^{(j+1)} /2} e^{-i \theta_{3} \sz^{(j)} /2}  \right)\,,
\end{align}
where we used that $V\sx^{(j)} V =  \sz^{(j-1)}\otimes\sx^{(j)}\otimes \sz^{(j+1)},\, V\sy V = \sz^{(j-1)}\otimes\sy^{(j)}\otimes \sz^{(j+1)}$. Thus we can rewrite the circuit in Eq.~\eqref{eq:circuit_advanced_example} as 
\begin{equation}
    U(\thv) = \prod_{l' = 1}^{L/2}\bigotimes_{j=1}^n U_{(j)}(\thv_{j,2l'})\widetilde{U}_{(j)}(\thv_{j,2l'-1})\,.
\end{equation}

Now take $O = \sum_{i=1}^{n-1} \sigma_x^{(i)}\otimes\sigma_x^{(i+1)}$. Using Lemma~\ref{lemma:symmetric_compute_coef} we get
\begin{align}
    \widehat{\tau}_{\vec{\gamma}_{\rm sym}}\kett{\sx^{(i)}\sx^{(i+1)}} = a^{12 L}\kett{\sx^{(i)}\sx^{(i+1)}}\,,
\end{align}
where we used that for a symmetric initialization $b = 0$ a. Hence, we can see that 
\begin{equation}
    \prod_{j=1}^n\widehat{\tau}_{\gamma_{j,\rm sym}}\kett{\sx^{(i)}\sx^{(i+1)}}  = a^4 \kett{\sx^{(i)}\sx^{(i+1)}}
\end{equation}
where we used that only the $U_{(i)}(\thv),\,U_{(i+1)}(\thv)$ do not commute with $\sigma_x^{(i)}\sigma_x^{(i+1)}$, and that $e^{-i\sigma_x}$ always commutes with $\sigma_x$, and therefore does not give a prefactor $a$. Similarly, we can compute 
\begin{equation}
    \prod_{j=1}^n\widehat{\tau}_{\gamma_{j,\rm sym}}\kett{\sx^{(i)}\sx^{(i+1)}}  = a^8 \kett{\sx^{(i)}\sx^{(i+1)}}\,,
\end{equation}
where we used that any gate that does not commute with the observable gives a prefactor $a$, and that in this case we used that the only terms that do not commute with $\sigma_x^{(i)}\otimes\sigma_x^{(i+1)}$ are the following
\begin{align}
    &\sigma_z^{(j-1)}\sigma_x^{(j)}\sigma_z^{(j+1)}, \, {\rm for}\, j \in \{i-1,i,i+1,i+2\}\,,\\
    &\sigma_z^{(j-1)}\sigma_y^{(j)}\sigma_z^{(j+1)}, \, {\rm for}\, j = i-1,\, j = i+2\,,\\
    &\sigma_z^{(j)}, \, {\rm for}\, j = i, \, j =i+1\,.
\end{align}

So the final observable after going through the circuit looks like
\begin{align}
    \widehat{\tau}_{\vec{\gamma}}\sum_i\kett{\sx^{(i)}\sx^{(i+1)}} = a^{11 L}\kett{\sx^{(1)}\sx^{(2)}} + a^{11 L}\kett{\sx^{(n-1)}\sx^{(n)}}+\sum_{i=2}^{n-2}a^{12 L}\kett{\sx^{(i)}\sx^{(i+1)}}\,,
\end{align}
where we have also applied edge conditions. 

If, however, we were to apply a different initialization distribution for the last parameterized gate $e^{-i\theta \sz^{(1)}}$ we find 
\begin{align}
    \widehat{\tau}_{\vec{\gamma}'}\sum_i\kett{\sx^{(i)}\sx^{(i+1)}} = ba^{11 L-1}\kett{\sy^{(1)}\sx^{(2)}}+ a^{11 L}\kett{\sx^{(n-1)}\sx^{(n)}}+\sum_{i=2}^{n-2}a^{12 L}\kett{\sx^{(i)}\sx^{(i+1)}}\,,
\end{align}
where we recall that $b = \int d\th P_{\gamma}(\th)\sin\th$ as defined in Eq.~\eqref{eq:def_b_main}. This leads to
\begin{equation}
        \xi_{\vec{\gamma}',\vec{\gamma}} = (|a| + |b|) a^{11L-1}\,.
\end{equation}

\section{Exponentially many high-dimensional slices in Max-Cut landscapes without barren plateaus} \label{app:1d-slices}

We here go beyond gradient analysis around critical points or involving small-angle initializations, and show that the optimization landscape of Example~\ref{ex:MaxCut} can exhibit an exponentially large number of high-dimensional slices with large gradients.  In particular, we prove Theorem~\ref{th:MaxCut-slices}, which we  recall for convenience.

\begin{theorem}[Informal] 
     Let $U(\thv)$ be a PQC composed of single-qubit gates interleaved with diagonal entangling gates $V_l$, i.e., 
       \begin{equation}\nonumber
        U(\thv) =\prod_{l=1}^L \prod_{j=1}^{n}e^{-i\th_{j,l}^z\sz^{(j)}/2} e^{-i\th_{j,l}^y\sy^{(j)}/2} V_l\,,
    \end{equation}
    where $L\in\OC\left(\poly(n)\right)$. Moreover, let $\rho=\ketbra{0}^{\otimes n}$ and let $O= \sum_{(w,w')\in E}\sigma_z^{(w)}\sigma_z^{(w')}$ be the Max-Cut Hamiltonian on a constant-odd-degree graph $G=(E,V)$, with $S\subseteq V$  a subset of vertices. Then, for each $j\in S$, let one $ \theta_{l_j,j}^y$ angle be initialized in a constant-size region, while the rest of the parameters are independently initialized following  Gaussian distributions $\NC(\theta^*,\varsigma^2)$, with $\theta^*\in\{0,\pi\}$ and $ \varsigma\in\Theta\left(\frac{1}{\poly(n)}\right)$. In this setting, all the $\theta^y_{l,j}$ parameters in the circuit are guaranteed to have gradients in $\Omega\left(\frac{1}{\poly(n)}\right)$ with high probability. 
\end{theorem}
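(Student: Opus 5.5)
The plan is to evaluate the landscape exactly at the ``centers'' of the initialization and then control the Gaussian fluctuations perturbatively. The first step is an exact reduction. Suppose every $\theta^y_{l,j}$ not selected by $S$ sits exactly at $\theta^*\in\{0,\pi\}$. Then each such gate is either $\1$ or $e^{-i\pi\sy/2}=-i\sy$, i.e.\ a bit flip up to a diagonal phase. Conjugating the diagonal gates $V_l$ and $e^{-i\theta^z\sz/2}$ by Pauli $\sx$/$\sy$ keeps them diagonal, so all these Clifford flips can be pushed to the end of the circuit. There they map $O$ to $\sum_{(w,w')\in E} s_w s_{w'}\sz^{(w)}\sz^{(w')}$, where $s_w=(-1)^{C_w}$ is the parity of flips on qubit $w$, exactly as in Example~\ref{ex:MaxCut}. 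The surviving non-diagonal gates are the free rotations $R_y(\pm\theta_{l_j,j})$, one per $j\in S$.

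Next, I would argue that every remaining diagonal gate is irrelevant for a diagonal observable and a computational-basis input. Diagonal gates after the free rotation on qubit $j$ commute with $O$. Diagonal gates before it see qubit $j$ in a basis state, so they act on $j$ only as a phase conditioned on the other qubits. They therefore stay diagonal on the rest and can be commuted forward as well. This gives the closed form
\begin{equation}
\LC(\thv)\big|_{\rm centers}=\sum_{(w,w')\in E}s_ws_{w'}f_wf_{w'}\,,\qquad f_j=\cos\theta_{l_j,j}\ (j\in S)\,,\quad f_j=1\ (j\notin S)\,.
\end{equation}
For any $y$-angle on qubit $j$, the derivative equals $-s_j\sin(\vartheta_j)\sum_{w'\sim j}s_{w'}f_{w'}$. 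Here $\vartheta_j$ is the effective total (signed) $y$-angle on qubit $j$, since after pushing flips all $y$-rotations on a qubit collapse to one rotation by a signed sum of angles. I would state and use this $\vartheta_j$ form explicitly.

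Third, I would take $S$ to be an independent set, as suggested in the discussion after the theorem. Then the neighbors of $j\in S$ have $f_{w'}=\pm1$, and the odd constant degree makes $\sum_{w'\sim j}s_{w'}f_{w'}$ a sum of an odd number of $\pm1$'s, hence of magnitude at least $1$. Choosing the constant-size region bounded away from $\{0,\pi\}$ gives $|\sin\vartheta_j|\in\Omega(1)$, so every $y$-angle on a qubit in $S$ has $\Omega(1)$ gradient at the centers. For qubits $j\notin S$, the centered gradient vanishes because $\vartheta_j\in\{0,\pi\}$. There the Gaussian noise supplies the signal: $\vartheta_j$ deviates from its center by a Gaussian $\delta_j$ of variance $\Theta(L\varsigma^2)$, and $|\sin\delta_j|\approx|\delta_j|$. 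Gaussian anti-concentration then gives $|\delta_j|\geq \varsigma/\poly(n)$ except with probability $1/\poly(n)$. Since $\varsigma\in\Theta(1/\poly(n))$, this yields an inverse-polynomial gradient. The multiplying neighbor sum must stay bounded away from zero, which again uses the odd degree together with closeness of the neighbors' $f$'s to $\pm1$ or to the $S$-values.

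The last step is a perturbation bound making the exact formula approximately valid off the centers. The gradient is a trigonometric polynomial in the angles, with Lipschitz constant $\OC(|E|)$ in each angle. So the total deviation is bounded by $\OC(\poly(n)\sum|\delta|)$. By a Gaussian tail bound plus a union bound over the $\OC(nL)$ parameters, all $|\delta|\leq \varsigma\sqrt{\log(nL)}$ with high probability. Choosing $\varsigma$ a sufficiently small inverse polynomial then keeps the $S$-qubit gradients at $\Omega(1)$. I expect the main obstacle to be the $j\notin S$ case. There the signal is itself of order $\varsigma$, so the cruder perturbation bound of order $\poly(n)\varsigma$ is too loose. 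Instead I would expand the loss to first order in each qubit's own noise, using that the exact formula still factorizes qubit by qubit when only qubit $j$ is perturbed. I would also need to check that second-order cross terms are $\OC(\varsigma^2\poly(n))$ and hence dominated. If that fails, I would weaken the conclusion for non-$S$ angles to ``with constant probability'' and amplify over independent draws.
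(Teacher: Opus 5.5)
There is a genuine gap, in the step for the $y$-angles on qubits $k\notin S$. Your center formula, the odd-degree bound, and the free-angle gradients for an independent set $S$ all match the paper. Your Lipschitz-plus-union-bound control of the noise is a cruder but adequate substitute for the paper's light-cone and Cauchy--Schwarz estimate.

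For $k\notin S$ you claim the neighbor sum $\sum_{w'\sim k}s_{w'}f_{w'}$ stays bounded away from zero "by odd degree together with closeness of the neighbors' $f$'s to $\pm1$ or to the $S$-values". The parity argument only works when every $f_{w'}$ is close to $\pm1$. A neighbor $j\in S$ contributes $s_j\cos\theta_{l_j,j}$. For a window merely bounded away from $\{0,\pi\}$, this can be anything in $(-1,1)$.

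Here is a concrete failure on a $3$-regular graph:
\begin{itemize}
\item Take $k\notin S$ with one neighbor $j\in S$ and two neighbors $k_1,k_2\notin S$.
\item Pick centers with $s_ks_{k_1}=-s_ks_{k_2}$. Every sign pattern is reachable by choosing the parities $C_w$.
\item Set $\theta_{l_j,j}=\pi/2$.
\end{itemize}
Then the centered neighbor sum is exactly $0$. Your first-order expansion needs the diagonal Hessian entry $\partial^2\LC/\partial(\theta^y_{l,k})^2=-s_k\sum_{w'\sim k}s_{w'}f_{w'}$ at the center to be $\Omega(1)$, and here it vanishes. The argument then gives no lower bound at all. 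This is not marginal: for a linear-size independent set, many vertices outside $S$ have a neighbor in $S$.

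The paper closes this by also restricting the window from the other side. It imposes $\max_k\deg(k)\,\max_q(1-\cos\vartheta_q)\leq\alpha\min_k|\Delta_k(\vec{z})|$. So the free angles must be bounded away from $0$ \emph{and} close enough to it; for degree $3$ this means $r<\arccos(5/6)$. Every neighbor sum then stays within a constant of the odd integer $-\Delta_k/2$ by the reverse triangle inequality. The same estimate lets the paper drop independence of $S$, since the internal degree $K\leq\deg$ is automatically constant. You need this restriction, or a genuinely second-order argument at points where the centered neighbor sum vanishes.

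Two smaller points:
\begin{itemize}
\item Your claim that all $y$-rotations on a qubit "collapse to one rotation by a signed sum of angles" is exact only if every diagonal gate between them acts on that qubit as $\1$ or $\sz$. With Gaussian $R_z$ angles it fails, which is why the paper fixes the $R_z$ angles in $\{0,\pi\}$ for that step. An entangling gate to a qubit already in superposition also dephases the intermediate coherence. For $k\notin S$ you do not actually need the collapse. The linear term's variance is at least $\varsigma^2\big(\partial^2\LC/\partial(\theta^y_{l,k})^2\big)^2$. At the center only one rotation on qubit $k$ is active, and there the product formula with $f_k=\cos\theta^y_{l,k}$ is exact.
\item The fallback of "amplifying over independent draws" does not give a high-probability statement about a single initialization. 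Once the window is fixed it is unnecessary, since anti-concentration plus a union bound already gives success probability $1-\delta$.
\end{itemize}
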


\begin{proof}
    
Let us first note that if we choose a center $\pi\vec{c}\in\{0,\pi\}^{nL}$ for the joint distribution of $R_y$ angles, and we set $\vec{\theta}^y=\pi\vec{c}$ while letting $\vec{\theta}^D$ take some fixed arbitrary value, the circuit $U\left(\vec{\theta}\right)$ maps the all-zero state to a computational-basis state $\ket{\vec{z}(\vec{c})}\!\bra{\vec{z}(\vec{c})}$ (i.e., $U(\vec{c})|0\rangle^{\otimes n}=e^{i\phi}|\vec{z}(\vec{c})\rangle$) for some $\phi\in\mathbb{R}$), since
\begin{equation}
R_y(0)=\openone\,,\qquad R_y(\pi) = -iY\,,
\end{equation}
and additionally every diagonal gate preserves computational-basis states. That is, $R_y(0)$ preserves each computational-basis vector, while $R_y(\pi)$ flips the corresponding qubit, up to a phase.

Let us choose a coordinate $(l,j)$, vary only $\theta_{l,j}^y$ and compute the corresponding derivative. We can easily show that the derivative $\mathcal{L}(\vec{\theta})$ with respect to $\theta_{l,j}^y$ at $\vec{\theta}^y=\pi\vec{c}+\vec{\varepsilon}_{l,j}$ is given by
    \begin{equation}  \label{eq:derivative}
        \left.\frac{\partial}{\partial\theta_{l,j}^y} \mathcal{L}(\vec{\theta})\right|_{\vec{\theta}^y=\pi\vec{c}+\vec{\varepsilon}_{l,j}} = \frac{\sin \varepsilon_{l,j}}{2} \Delta_j(\vec{c})  \,.
    \end{equation}

    To see this, we first note that since the diagonal gates leave  computational-states intact, and all $R_y$ rotations have angles in $\{0,\pi\}$, the action of the circuit $U\left(\vec{\theta}\right)$ on $\ket{0}^{\otimes n}$ can be written as
    \begin{align}
       &U\left(\vec{\theta}\right) \ket{0}^{\otimes n} =  \cos(\varepsilon_{l,j}/2) \ket{\vec{z}(\vec{c})} + e^{i\phi_{l,j}} \sin(\varepsilon_{l,j}/2) \sigma_y^{(j)} \ket{\vec{z}(\vec{c})}\,,
    \end{align}
for some phase $\phi_{l,j}$. Hence, denoting $\ket{\vec{z}^j(\vec{c})} =\sigma_y^{(j)}\ket{\vec{z}(\vec{c})}$, with $\vec{z}=(z_1,\ldots, z_n)^T\in\{1,-1\}^n$, we arrive at
\begin{align}
    \mathcal{L}(\vec{\theta}) = \cos(\varepsilon_{l,j}/2)^2 \bra{\vec{z}(\vec{c})}O\ket{\vec{z}(\vec{c})}
  +\sin(\varepsilon_{l,j}/2)^2 \bra{\vec{z}^j(\vec{c})}O\ket{\vec{z}^j(\vec{c})}\,,
\end{align}
since $\vec{z}(\vec{c})\neq \vec{z}^j(\vec{c})$ and so the off-diagonal terms $\bra{\vec{z}^j(\vec{c})}O\ket{\vec{z}(\vec{c})} $ and  $\bra{\vec{z}(\vec{c})}O\ket{\vec{z}^j(\vec{c})} $ are both zero. Taking the derivative,
\begin{align}
    \left.\frac{\partial}{\partial\theta_{l,j}^y} \mathcal{L}(\vec{\theta})\right|_{\vec{\theta}^y=\pi\vec{c}+\vec{\varepsilon}_{l,j}}  &= - \cos(\varepsilon_{l,j}/2)\sin(\varepsilon_{l,j}/2) \bra{\vec{z}(\vec{c})}O\ket{\vec{z}(\vec{c})}+ \sin(\varepsilon_{l,j}/2)\cos(\varepsilon_{l,j}/2) \bra{\vec{z}^j(\vec{c})}O\ket{\vec{z}^j(\vec{c})}\nonumber \\  &= \frac{\sin \varepsilon_{l,j}}{2} \left( \bra{\vec{z}^j(\vec{c})}O\ket{\vec{z}^j(\vec{c})} -  \bra{\vec{z}(\vec{c})}O\ket{\vec{z}(\vec{c})}\right)\nonumber \\&=\frac{\sin \varepsilon_{l,j}}{2} \Delta_j  \,.
\end{align}

 We can rewrite $\Delta_j$ in graph-theoretic terms by denoting the  set of edges $E=(j,j')$ of the graph that are incident on vertex $j$ and such that $z_j=z_{j'}$,  as  $+_j$, and those incident to  vertex $j$ but such that $z_j\neq z_{j'}$, as $-_j$, i.e.,
\begin{equation}
+_j(\vec{z})=\{(j,j')\in E:\ z_j=z_{j'}\},\;\;
-_j(\vec{z})=\{(j,j')\in E:\ z_j\neq z_{j'}\}\,. \nonumber
\end{equation}
Then, only the edges incident on $j$ change sign when the $j$-th qubit is flipped, and hence when the graph is unweighted, we have
\begin{align} \label{eq:Delta=+_-}
\Delta_j(\vec{z}) = -2\sum_{j' :\, (j,j')\in E} z_j z_{j'} = -2\left(|+_j|-|-_j|\right)\,.
\end{align}
This interpretation of the gap $\Delta_j(\vec{z})$ will allow us to simplify the analysis and find families of graph where exponentially many barren-plateau avoiding initializations are possible. 
A particularly simple family of graphs that uniformly satisfy that  $\Delta_j(\vec{c})\in\Omega(1)$ consists of those with odd-degree in every vertex. We just need to realize that
\begin{equation}
|+_j|+|-_j|=\deg(j)\,.
\end{equation}
Because the degree $\deg(j)$ is odd, the sum $|+_j|+|-_j|$ is an odd integer, and therefore one summand is odd and the other even. We thus arrive at
\begin{equation}
{\rm abs}\left(|+_j|-|-_j|\right)\ge 1\,.
\end{equation}
Consequently, using Eq.~\eqref{eq:Delta=+_-} we can see that
\begin{equation} \label{eq:Delta-odd-graphs}
|\Delta_j(\vec{z})|\ge 2\,.
\end{equation}

\medskip

Next we extend the previous one-dimensional slice analysis to higher-dimensional slices. 
Let us fix a center $\pi\vec{c}\in\{0,\pi\}^{nL}$ for all $R_y$ angles, and let $S\subseteq V$ be a subset of qubits. Let us furthermore denote $\vec{x}\in\{0,1\}^{|S|}$. Then, we  perturb one $R_y$ angle for each $j\in S$ (at some layer $l_j$) by an amount $\varepsilon_{l_j,j}$, while keeping all other $R_y$ angles fixed at their $0$ or $\pi$ values. We denote this perturbation by $\vec{\varepsilon}$. We find
\begin{equation} 
    U(\vec{\theta}) \ket{0}^{\otimes n} = \sum_{\vec{x}\in\{0,1\}^S} e^{i\phi(\vec{x})} \prod_{j\in S}\left(\cos(\varepsilon_{l_j,j}/2)\right)^{1-x_j}\left(\sin(\varepsilon_{l_j,j}/2)\right)^{x_j} \ket{\vec{z}(\vec{c})^{\vec{x}}}\,,
\end{equation}
where $ \ket{\vec{z}(\vec{c})^{\vec{x}}}$ is the computational-basis state obtained from $\ket{\vec{z}(\vec{c})}$ by flipping the qubits $j\in S$ for which $x_j=1$. Notice that each qubit $j\in S$ is flipped with probability $\sin^2(\varepsilon_{l_j,j}/2)$ and left untouched with probability $\cos^2(\varepsilon_{l_j,j}/2)$. 
Hence, in the $|S|$-dimensional slice, the cost function takes the form
\begin{equation}
    \LC(\vec{\theta}) = \sum_{\vec{x}\in\{0,1\}^S} \prod_{j\in S}\left(\cos^2(\varepsilon_{l_j,j}/2)\right)^{1-x_j}\left(\sin^2(\varepsilon_{l_j,j}/2)\right)^{x_j} \left( \sum_{(w,w')\in E}z_w z_{w'}(-1)^{x_w+x_{w'}} \right)\,,
\end{equation}
where we set $x_j=0$ for $j\notin S$, and we obtain
\begin{equation}
    \LC(\vec{\theta})=\sum_{(w,w')\in E}z_w z_{w'}\,\mathbb{E}_{\vec{x}}\left[(-1)^{x_w+x_{w'}}\right]\, . 
\end{equation}
Since the bit flips are independent,
\begin{equation} 
    \mathbb{E}_{\vec{x}}\left[(-1)^{x_w+x_{w'}}\right]=\mathbb{E}_{\vec{x}}\left[(-1)^{x_w}\right]\mathbb{E}_{\vec{x}}\left[(-1)^{x_{w'}}\right]\, . 
\end{equation}
For $j\in S$, we have
\begin{equation} 
    \mathbb{E}_{\vec{x}}\left[(-1)^{x_j}\right]=\cos^2(\varepsilon_{l_j,j}/2)-\sin^2(\varepsilon_{l_j,j}/2)=\cos(\varepsilon_{l_j,j})\, ,
\end{equation}
while for $j\notin S$, 
\begin{equation} 
    \mathbb{E}_{\vec{x}}\left[(-1)^{x_j}\right]=1\, . \end{equation}
Defining
\begin{equation} 
    \Gamma_j=\begin{cases}
        \cos(\varepsilon_{l_j,j})\quad \text{for } j\in S \\ 1\quad\; \text{for } j\notin S
    \end{cases} \,,
\end{equation}
we arrive at the formula for the cost function on the $|S|$-dimensional slice,
\begin{equation} \label{eq:loss-gammas}
    \LC(\vec{\theta})=\sum_{(w,w')\in E}z_w z_{w'}\,\Gamma_w\Gamma_{w'}\, .
\end{equation}

We now differentiate this expression. For $j\in S$, we find
\begin{equation}\label{eq:slice-derivative}
    \left.\frac{\partial \LC(\vec{\theta})}{\partial \varepsilon_{l_j,j}}\right|_{\vec{\theta}^y=\pi\vec{c}+\vec{\varepsilon}}=-\sin(\varepsilon_{l_j,j})z_j\sum_{j' :\, (j,j')\in E}z_{j'}\Gamma_{j'}\, .
\end{equation}
Notice here that we recover the one-dimensional formula~\eqref{eq:derivative} as a special case, since if only the coordinate $j$ is perturbed, then $\Gamma_{j'}=1$ for all $j'$ above, and therefore
\begin{equation} 
\left.\frac{\partial \LC(\vec{\theta})}{\partial \varepsilon_{l_j,j}}\right|_{\vec{\theta}^y=\pi\vec{c}+\vec{\varepsilon}}=-\sin(\varepsilon_{l_j,j})z_j\sum_{j' :\, (j,j')\in E}z_{j'}\, .
\end{equation}
Using Eq.~\eqref{eq:Delta=+_-}, we find that Eq.~\eqref{eq:derivative} holds.

A particularly simple high-dimensional case arises when the qubit indices in $S$ are such that $(j,j')\notin E$ for every $j,j'\in S$ (in graph-theoretic terms, the edges $(j,j')$ for $j\in S$ are independent and hence $S$ is called an independent vertex set). Therefore $\Gamma_{j'}=1$ for all $j'$ such that $(j,j')\in E$, and we obtain
\begin{equation} 
    \left.\frac{\partial \LC(\vec{\theta})}{\partial \varepsilon_{l_j,j}}\right|_{\vec{\theta}^y=\pi\vec{c}+\vec{\varepsilon}}=\frac{\sin(\varepsilon_{l_j,j})}{2}\Delta_j(\vec{z})\quad \text{for every } j\in S\, .
\end{equation}
Consequently, if $|\Delta_j(\vec{z})|\in\Omega\left( 1/\poly(n)\right)$, and if $|\sin(\varepsilon_{l_j,j})|\in\Omega\left( 1/\poly(n)\right)$, then the gradient entry has inverse-polynomial magnitude.

For constant-degree graphs of size $\Theta(n)$, one can easily choose an independent set $S$ of linear size~\cite{henning2018tight}. In particular, for constant-odd-degree graphs $|\Delta_j(\vec{z})|\geq 2$ (see Eq.~\eqref{eq:Delta-odd-graphs}), and hence we obtain $\Theta(n)$-dimensional slices where the gradients are at most polynomially vanishing with $n$, as long as $\varepsilon_{l_j,j}$ is not exponentially close to $0$ or $\pi$ for every $j$. However, from the point of view of the optimization dynamics, these $\Theta(n)$-dimensional slices are completely decoupled from each other. Thus, if the optimization is constrained to stay within the slices, the actions of the $\Theta(n)$ variational parameters are independent as they all affect different terms in the cost function.

 Instead of choosing an independent set $S$ (which is the ultimate reason behind the parameters' optimization trajectories on the slices being independent), we can choose $S$ such that each vertex index couples to at most a constant number $K\in\OC(1)$ of other vertices in $S$. In this case, the light cones of different perturbed parameters are allowed to overlap, but only a constant number of times per gate.

Using Eq.~\eqref{eq:slice-derivative}, the gradient along a slice is found to be
\begin{equation}
    \left.\frac{\partial \LC(\vec{\theta})}{\partial \varepsilon_{l_j,j}}\right|_{\vec{\theta}^y=\pi\vec{c}+\vec{\varepsilon}}=-z_{j}\sin(\varepsilon_{l_j,j})\left(\sum_{j'\notin S:\, (j,j')\in E} z_{j'}+\sum_{j'\in S :\, (j,j')\in E}z_{j'}\cos(\varepsilon_{l_{j'},j'})\right)\,.
\end{equation}
Hence, each gradient component depends only on the parameters inside the induced light cone given by the indices $j'\in S$ such that $(j,j')\in E$, which is now of size $\OC(1)$. Let us denote
\begin{equation}
    F_j(\vec{\varepsilon}) = z_j\left(\sum_{j'\notin S:\, (j,j')\in E} z_{j'}+\sum_{j'\in S :\, (j,j')\in E}z_{j'}\cos(\varepsilon_{l_{j'},j'})\right)\,,
\end{equation}
We find (see Eq.~\eqref{eq:Delta=+_-})
\begin{equation}
    F_j(\vec{0})= -\frac{1}{2} \Delta_j(\vec{z})\,,
\end{equation}
and
\begin{equation} \label{eq:graph-degree-bound}
    \left| F_j(\vec{\varepsilon}) - F_j(\vec{0})\right| = \left|\sum_{j'\in S :\, (j,j')\in E}z_{j}z_{j'}\left(1-\cos(\varepsilon_{l_{j'},j'})\right)\right| \leq K \max_{j'\in S} \left(1-\cos(\varepsilon_{l_{j'},j'})\right)\,.
\end{equation}
Using the reverse triangle inequality,
\begin{equation} 
    |F_j(\vec{\varepsilon})| \geq | F_j(\vec{0})| - \left| F_j(\vec{\varepsilon}) - F_j(\vec{0})\right| \geq \frac{1}{2} |\Delta_j(\vec{z})| - K \max_{j'\in S} \left(1-\cos(\varepsilon_{l_{j'},j'})\right) \,.
\end{equation}
Then choosing the angles $\vec{\varepsilon}$ such that
\begin{equation}
    \alpha \min_{j\in S}|\Delta_j(\vec{z})| > K \max_{j'\in S} \left(1-\cos(\varepsilon_{l_{j'},j'})\right) \,,
\end{equation}
with $\alpha\in\left(0,\frac{1}{2}\right)$ a constant, then
\begin{equation}
    |F_j(\vec{\varepsilon)}|\geq \left(\frac{1}{2} -\alpha\right) \min_{j\in S}|\Delta_j(\vec{z})|\,,
\end{equation}
and hence
\begin{equation} \label{eq:derivative-bound}
    \left| \left. \frac{\partial \LC(\vec{\theta})}{\partial \varepsilon_{l_j,j}}\right|_{\vec{\theta}^y=\pi\vec{c}+\vec{\varepsilon}}\right|\geq  |\sin(\varepsilon_{l_j,j})| \left(\frac{1}{2} -\alpha\right)   \min_{j\in S}|\Delta_j(\vec{z})| \,.
\end{equation}
Consequently, whenever $\min_{j\in S}|\Delta_j(\vec{z})|\in\Omega(1/\mathrm{poly}(n))$ and $|\sin(\varepsilon_{l_j,j})|\in\Omega(1/\mathrm{poly}(n))$, all gradient components on the slice are at most inverse-polynomially vanishing.

In particular, for constant-odd-degree graphs, one has $|\Delta_j(\vec{z})|\geq 2$. Hence, if $K\in \OC(1)$, we obtain a constant-width hypercube around the computational-basis center $\pi \vec{c}$ in which the gradients are not exponentially small (except when $\varepsilon_{l_j,j}$ is exponentially close to $0$ or $\pi$), with half width $r\geq \max_j |\varepsilon_{l_j,j}|$ and $r\in \OC(1)$. Since the dimension of the slice is $|S|\in\Theta(n)$, this region has absolute volume $(2r)^{|S|}$. 
Thus, whenever $2r>1$, the trainable region in parameter space has exponentially growing absolute volume. For example, when $K=3$, if we restrict the parameters $\varepsilon_{l_j,j}$ such that $0< r \leq \pi$, we find
\begin{equation}
     \left| F_j(\vec{\varepsilon}) - F_j(\vec{0})\right| \leq 3 \left|\max_{j'\in S} (1-\cos(\varepsilon_{l_{j'},j'}))\right| \leq  3 \left| (1-\cos(r))\right|\,,
\end{equation}
so that choosing e.g., $\alpha=\frac{1}{4}$, we arrive at
\begin{equation}
    \frac{1}{2} > 3 |1-\cos(r)|\,.
\end{equation}
This condition can be satisfied with a constant $r< {\rm arccos\left(\frac{5}{6}\right)}\approx 0.586$, i.e., with a constant $r>\frac{1}{2}$, so that the corresponding coupled slices contain an exponentially large trainable region.

More generally, if $K\in\Theta(\log n)$, the same argument still applies, and one can choose $r\in\Theta\left(\frac{1}{\sqrt{\log n}}\right)$ to guarantee inverse-polynomial gradients, so the absolute volume shrinks as $n$ increases. To see this, we just notice that  $1-\cos(r)\leq \frac{r^2}{2}$, so that
\begin{equation}
   2\alpha > K \frac{r^2}{2} \quad\Rightarrow\quad r< \sqrt{\frac{4\alpha}{K}} \,.
\end{equation}

Finally, we show that every fixed $R_y$ parameter in the trainable slices can actually be initialized with an appropriate Gaussian distribution around $0$ or $\pi$, while still avoiding barren plateaus. Indeed, let the $R_y$ angles be
\begin{equation}
    \vec{\theta}^y=\pi \vec{c}+\vec{\varepsilon} + \vec{\eta} \,,
\end{equation}
where $\vec{\eta}$ are Gaussian perturbations for the fixed parameters in the slice (i.e., the analysis above corresponds to $\vec{\eta}=\vec{0}$). We will again make use of the reverse triangle inequality,
\begin{equation}\label{eq:reverse-triangle-ineq}
     \left|\left.\frac{\partial \LC}{\partial \varepsilon_{l_j,j}}\right|_{\vec{\theta}^y = \pi \vec{c}+\vec{\varepsilon}+\vec{\eta}} \right| \geq \left|\left.\frac{\partial \LC }{\partial \varepsilon_{l_j,j}}\right|_{\vec{\theta}^y = \pi \vec{c}+\vec{\varepsilon}+\vec{0}}\right| - \left| \left.\frac{\partial \LC}{\partial \varepsilon_{l_j,j}}\right|_{\vec{\theta}^y = \pi \vec{c}+\vec{\varepsilon}+\vec{\eta}} -  \left.\frac{\partial \LC}{\partial \varepsilon_{l_j,j}}\right|_{\vec{\theta}^y = \pi \vec{c}+\vec{\varepsilon}+\vec{0}}\right|\,.
\end{equation}
The idea is to show that 
\begin{equation}\label{eq:fraction-condition}
    \left| \left.\frac{\partial L}{\partial \varepsilon_{l_j,j}}\right|_{\vec{\theta}^y = \pi \vec{c}+\vec{\varepsilon}+\vec{\eta}} -  \left.\frac{\partial \LC}{\partial \varepsilon_{l_j,j}}\right|_{\vec{\theta}^y = \pi \vec{c}+\vec{\varepsilon}+\vec{0}}\right| \leq \frac{1}{2} \left|\left.\frac{\partial \LC}{\partial \varepsilon_{l_j,j}}\right|_{\vec{\theta}^y = \pi \vec{c}+\vec{\varepsilon}+\vec{0}}\right|\,,
\end{equation}
for conveniently chosen Gaussian perturbations $\vec{\eta}$, so that
\begin{equation}
     \left|\left.\frac{\partial \LC}{\partial \varepsilon_{l_j,j}}\right|_{\vec{\theta}^y = \pi \vec{c}+\vec{\varepsilon}+\vec{\eta}} \right| \geq \frac{1}{2} \left|\left.\frac{\partial \LC}{\partial \varepsilon_{l_j,j}}\right|_{\vec{\theta}^y = \pi \vec{c}+\vec{\varepsilon}+\vec{0}}\right| \geq \frac{1}{2} |\sin(\varepsilon_{l_j,j})| \left(\frac{1}{2} -\alpha\right)   \min_{j\in S}|\Delta_j(z)| \,,
\end{equation}
where we used Eq.~\eqref{eq:derivative-bound}~\footnote{The use of the constant $\frac{1}{2}$ fraction here is not essential. Any constant fraction would actually suffice.}.

By the fundamental theorem of calculus, and using the chain rule, we can write
\begin{equation} \label{eq:chain-rule}
     \left.\frac{\partial \LC}{\partial \varepsilon_{l_j,j}}\right|_{\vec{\theta}^y = \pi \vec{c}+\vec{\varepsilon}+\vec{\eta}} -  \left.\frac{\partial \LC}{\partial \varepsilon_{l_j,j}}\right|_{\vec{\theta}^y = \pi \vec{c}+\vec{\varepsilon}+\vec{0}} = \int_0^1 \frac{d}{dt}  \left.\frac{\partial \LC}{\partial \varepsilon_{l_j,j}}\right|_{\vec{\theta}^y = \pi \vec{c}+\vec{\varepsilon}+t\vec{\eta}}\,dt = \int_0^1\sum_{s}\eta_{l,k} \left.\frac{\partial^2 \LC}{\partial (t\eta_{l,k})\partial \varepsilon_{l_j,j}}\right|_{\vec{\theta}^y = \pi \vec{c}+\vec{\varepsilon}+t\vec{\eta}} \,dt\,.
\end{equation}
Now, the only derivatives in the sum in Eq.~\eqref{eq:chain-rule} that may be non-zero are those where $\eta_{l,k}$ corresponds to a gate in the light cone of the Heisenberg-evolved Max-Cut Hamiltonian. In particular, after $l$ layers the support of a given Heisenberg evolved 2-local term $\sigma_z^{(w)}\sigma_z^{(w')}$ is at most $2(2l+1)$ (because of the nearest-neighbors one-dimensional connectivity of the hardware efficient ansatz). Hence, since there is only one $R_y$ rotation per qubit per layer, we  can bound the maximum number of $R_y$ rotations that can affect a given Hamiltonian term as
\begin{equation} 
    \sum_{l=0}^{L-1} 2(2l+1) = 4 \sum_{l=0}^{L-1} l + 2L = 2L(L-1) + 2L = 2L^2 \,.
\end{equation}
Let us call $M$ the maximum number of local $2$-body terms in the Hamiltonian that can be affected by an $R_y$ rotation with $\varepsilon_{l_j,j}$ angle. Consequently, we find that there are at most $2ML^2$ non-vanishing local terms in the sum in Eq.~\eqref{eq:chain-rule}.

Let us next bound the second derivatives. For a single Hamiltonian term, we have
\begin{equation}
    f_{w,w'}(\vec{\theta})=\Tr\left[\rho\,U(\vec{\theta})^\dagger \sigma_z^{(w)}\sigma_z^{(w')} U(\vec{\theta})\right]\,.
\end{equation}
Consider a Gaussian-initialized  $R_y$ gate in the circuit, $e^{-i \eta_{l,k} \sigma_y/2}$. Writing the circuit as $U(\vec{\theta})=U_+ e^{-i t\eta_{l,k} \sigma_y/2}U_-$, we have
\begin{equation}
    f_{w,w'}(\vec{\theta})=\Tr\left[\rho\,U_-^\dagger e^{i t\eta_{l,k} \sigma_y/2} U_+^\dagger \,\sigma_z^{(w)}\sigma_z^{(w')}\, U_+ e^{-i t\eta_{l,k} \sigma_y/2}U_-\right]\,.
\end{equation}
Then
\begin{equation} \label{eq:first-derivative}
    \frac{\partial f_{w,w'}(\vec{\theta})}{\partial (t \eta_{l,k})}=\frac{i }{2}\Tr\left[\rho\, U_-^\dagger\,e^{i t\eta_{l,k} \sigma_y/2}\left[\sigma_y, \,U_+^\dagger\sigma_z^{(w)}\sigma_z^{(w')} U_+\right]e^{-i t\eta_{l,k} \sigma_y/2}U_-\right]\,.
\end{equation}

To compute the second derivative, suppose, for definiteness, that the gate with parameter $\eta_{l,k}$ appears after the gate with parameter $\varepsilon_{l_j,j}$ in the circuit (a complete analogous analysis follows if the order is reversed). We now write the circuit as $U(\vec{\theta})=U_3\, e^{-it\eta_{l,k} \sigma_y/2}\, U_2 e^{-i\varepsilon_{l_j,j}\sigma_y/2}\,U_1$. Thus,
\begin{equation}
    f_{w,w'}(\vec{\theta}) = \Tr\left[\rho\,U_1^\dagger e^{i\varepsilon_{l_j,j}\sigma_y/2} U_2^\dagger e^{it\eta_{l,k} \sigma_y/2} U_3^\dagger \sigma_z^{(w)}\sigma_z^{(w')} U_3 e^{-it\eta_{l,k} \sigma_y/2} U_2 e^{-i\varepsilon_{l_j,j}\sigma_y/2} U_1\right]\,.
\end{equation}
Using Eq.~\eqref{eq:first-derivative} we obtain
\begin{equation}
    \frac{\partial  f_{w,w'}(\vec{\theta}) }{\partial (t\eta_{l,k})}=\frac{i}{2}\Tr\left[\rho\,U_1^\dagger e^{i\varepsilon_{l_j,j}\sigma_y/2} U_2^\dagger e^{it\eta_{l,k}\sigma_y/2}\left[\sigma_y, U_3^\dagger \sigma_z^{(w)}\sigma_z^{(w')} U_3\right]e^{-it\eta_{l,k}\sigma_y/2} U_2e^{-i\varepsilon_{l_j,j}\sigma_y/2} U_1\right]\,.
\end{equation}
Let us call
\begin{equation}
B\equiv U_2^\dagger e^{it\eta_{l,k} \sigma_y/2} \left[\sigma_y, U_3^\dagger \sigma_z^{(w)}\sigma_z^{(w')} U_3\right]e^{-it\eta_{l,k} \sigma_y/2} U_2\,,
\end{equation}
so that
\begin{equation}
    \frac{\partial  f_{w,w'}(\vec{\theta}) }{\partial (t\eta_{l,k})}=\frac{i}{2}\Tr\left[\rho\,U_1^\dagger e^{i\varepsilon_{l_j,j}\sigma_y/2} B  e^{-i\varepsilon_{l_j,j}\sigma_y/2} U_1\right]\,.
\end{equation}

Differentiating once more, now with respect to $\varepsilon_{l_j,j}$, we find
\begin{align}
    \frac{\partial^2  f_{w,w'}(\vec{\theta})}{ \partial(t\eta_{l,k}) \partial \varepsilon_{l_j,j}} &=-\frac{1}{4}\Tr\left[\rho\,U_1^\dagger e^{i\varepsilon_{l_j,j}\sigma_y/2} [\sigma_y,B] e^{-i\varepsilon_{l_j,j}\sigma_y/2} U_1\right] \nonumber \\ &= -\frac{1}{4}\Tr\left[\rho\,U_1^\dagger e^{i\varepsilon_{l_j,j}\sigma_y/2}\left[\sigma_y,U_2^\dagger e^{it\eta_{l,k}\sigma_y/2}\left[\sigma_y,U_3^\dagger \sigma_z^{(w)}\sigma_z^{(w')} U_3\right]e^{-it\eta_{l,k}\sigma_y/2}U_2\right]e^{-i\varepsilon_{l_j,j}\sigma_y/2}U_1\right]\,.
\end{align}
Let us bound this expression. Since $\left|\Tr[\rho X]\right|\leq \|X\|$, where $\|\cdot\|$ denotes the operator norm, and this norm is invariant under unitary conjugation, we arrive at
\begin{equation}
    \left|\frac{\partial^2  f_{w,w'}(\vec{\theta})}{ \partial(t\eta_{l,k}) \partial \varepsilon_{l_j,j}}\right|\leq \frac{1}{4} \left\|\left[\sigma_y,U_2^\dagger e^{it\eta_{l,k}\sigma_y/2}\left[\sigma_y, U_3^\dagger \sigma_z^{(w)}\sigma_z^{(w')} U_3\right]e^{-it\eta_{l,k}\sigma_y/2}  U_2\right]\right\|\,.
\end{equation}
Using $\|\sigma_y\|=1$ and $\|[\sigma_y,X]\|\leq 2\|X\|$ (from the sub-additivity and sub-multiplicativity of the operator norm), we find
\begin{equation}
    \left|\frac{\partial^2  f_{w,w'}(\vec{\theta})}{ \partial(t\eta_{l,k}) \partial \varepsilon_{l_j,j}}\right|\leq \frac{1}{2}\left\|U_2^\dagger e^{it\eta_{l,k}\sigma_y/2}\left[\sigma_y, U_3^\dagger \sigma_z^{(w)}\sigma_z^{(w')} U_3\right]e^{-it\eta_{l,k}\sigma_y/2}  U_2\right\|\,.
\end{equation}
Moreover, by the invariance of the operator norm under unitary conjugation,
\begin{equation}
    \left\|U_2^\dagger e^{it\eta_{l,k}\sigma_y/2}\left[\sigma_y, U_3^\dagger \sigma_z^{(w)}\sigma_z^{(w')} U_3\right]e^{-it\eta_{l,k}\sigma_y/2}  U_2\right\| = \left\|\left[\sigma_y,U_3^\dagger \sigma_z^{(w)}\sigma_z^{(w')} U_3\right]\right\|\,.
\end{equation}
Using again $\left\|\left[\sigma_y,U_3^\dagger \sigma_z^{(w)}\sigma_z^{(w')} U_3\right]\right\| \leq 2\left\|U_3^\dagger \sigma_z^{(w)}\sigma_z^{(w')} U_3\right\|$, we obtain
\begin{equation}
    \left|\frac{\partial^2  f_{w,w'}(\vec{\theta})}{ \partial(t\eta_{l,k}) \partial \varepsilon_{l_j,j}}\right| \leq  \|U_3^\dagger \sigma_z^{(w)}\sigma_z^{(w')} U_3\| = \left\| \sigma_z^{(w)}\sigma_z^{(w')}\right\| = 1 \,.
\end{equation}
Therefore, we find that
\begin{equation}
    \left|\sum_{l,k} \left.\frac{\partial^2 \LC}{\partial (t\eta_{l,k})\partial \varepsilon_{l_j,j}}\right|_{\vec{\theta}^y = \pi \vec{c}+\vec{\varepsilon}+t\vec{\eta}} \right|\leq  \sum_{l,k}\left| \left.\frac{\partial^2 \LC}{\partial (t\eta_{l,k})\partial \varepsilon_{l_j,j}}\right|_{\vec{\theta}^y = \pi \vec{c}+\vec{\varepsilon}+t\vec{\eta}} \right| \leq 2ML^2\,,
\end{equation}
and 
\begin{equation} \label{eq:2norm-bound}
    \left|\sum_{l,k} \left(\left.\frac{\partial^2 \LC}{\partial (t\eta_{l,k})\partial \varepsilon_{l_j,j}}\right|_{\vec{\theta}^y = \pi \vec{c}+\vec{\varepsilon}+t\vec{\eta}}\right)^2 \right| \leq 2M^2 L^2\,.
\end{equation}

We then notice that the sum $\sum_{l,k}\eta_{l,k} \left.\frac{\partial^2 \LC}{\partial (t\eta_{l,k})\partial \varepsilon_{l_j,j}}\right|_{\vec{\theta}^y = \pi \vec{c}+\vec{\varepsilon}+t\vec{\eta}}$ can be interpreted as an inner product between $\vec{\eta}$ and a vector containing the second derivatives (whose Euclidean $2$-norm, denoted by $\|\cdot\|_2$, is bounded in Eq.~\eqref{eq:2norm-bound}). By the Cauchy--Schwarz inequality, and using Eq.~\eqref{eq:chain-rule},
\begin{equation}\label{eq:bound-integral}
    \left| \left.\frac{\partial \LC}{\partial \varepsilon_{l_j,j}}\right|_{\vec{\theta}^y = \pi \vec{c}+\vec{\varepsilon}+\vec{\eta}} -  \left.\frac{\partial \LC}{\partial \varepsilon_{l_j,j}}\right|_{\vec{\theta}^y = \pi \vec{c}+\vec{\varepsilon}+\vec{0}}\right| \leq \int_0^1 \sqrt{2} ML\|\vec{\eta} \|_2 dt = \sqrt{2} ML\|\vec{\eta} \|_2 \,.
\end{equation}

Let us now assume that the parameters $\eta_{l,k}$ are independently sampled according to a Gaussian distribution $\NC(0,\varsigma^2)$. We recall that in such case
\begin{equation}
    {\rm Pr}(|\eta_{l,k}|\geq x) \leq 2e^{-\frac{x^2}{2\varsigma^2}}\,.
\end{equation}
Using the union bound, and the fact there are at most $nL$ $R_y$ parameters in the circuit, we find that
\begin{align}
    {\rm Pr}\left(\max_{l,k}|\eta_{l,k}|\geq x\right) &\leq \sum_{l,k} {\rm Pr}\left(|\eta_{l,k}|\geq x\right) \nonumber \\ & \leq 2nL e^{-\frac{x^2}{2\varsigma^2}} \,.
\end{align}
We want this probability to be at most $\delta$, so
\begin{equation}
    \delta = 2nL e^{-\frac{x^2}{2\varsigma^2}} \quad\Rightarrow\quad x = \varsigma \sqrt{ 2\log\left(\frac{2nL}{\delta}\right)}\,.
\end{equation}
Hence, with probability at least $1-\delta$, it holds that
\begin{equation} \label{eq:Gaussian-tail-bound}
   \|\vec{\eta} \|_2 \leq \varsigma \sqrt{ 2nL\log\left(\frac{2nL}{\delta}\right)}\,.
\end{equation}
Plugging this bound into Eq.~\eqref{eq:bound-integral}, 
\begin{equation}
     \left| \left.\frac{\partial \LC}{\partial \varepsilon_{l_j,j}}\right|_{\vec{\theta}^y = \pi \vec{c}+\vec{\varepsilon}+\vec{\eta}} -  \left.\frac{\partial \LC}{\partial \varepsilon_{l_j,j}}\right|_{\vec{\theta}^y = \pi \vec{c}+\vec{\varepsilon}+\vec{0}}\right| \leq 2 ML\varsigma  \sqrt{ nL\log\left(\frac{2nL}{\delta}\right)}\,.
\end{equation}
Using Eqs.~\eqref{eq:derivative-bound} and~\eqref{eq:reverse-triangle-ineq}, we arrive at
\begin{equation}
    \left|\left.\frac{\partial \LC}{\partial \varepsilon_{l_j,j}}\right|_{\vec{\theta}^y = \pi \vec{c}+\vec{\varepsilon}+\vec{\eta}} \right| \geq |\sin(\varepsilon_{l_j,j})| \left(\frac{1}{2} -\alpha\right)   \min_{j\in S}|\Delta_j(z)| -  2 ML\varsigma  \sqrt{ nL\log\left(\frac{2nL}{\delta}\right)}\,.
\end{equation}
We now impose condition~\eqref{eq:fraction-condition},
\begin{equation}
     2 ML\varsigma  \sqrt{ nL\log\left(\frac{2nL}{\delta}\right)}\leq \frac{1}{2} |\sin(\varepsilon_{l_j,j})| \left(\frac{1}{2} -\alpha\right)   \min_{j\in S}|\Delta_j(z)| \,,
\end{equation}
and solve for $\varsigma$
\begin{equation}
    \varsigma \leq \frac{\frac{1}{2} |\sin(\varepsilon_{l_j,j})| \left(\frac{1}{2} -\alpha\right)   \min_{j\in S}|\Delta_j(z)|}{ 2 ML  \sqrt{ nL\log\left(\frac{2nL}{\delta}\right)}}\,.
\end{equation}
For constant-odd-degree graphs, it holds that $\min_{j\in S}|\Delta_j(z)|\geq 2$, so that choosing $|\sin(\varepsilon_{l_j,j})| \in\Omega(1)$, we conclude that if
\begin{equation}
    \varsigma \in \OC\left(\frac{1}{ML\sqrt{ nL\log\left(\frac{2nL}{\delta}\right)}}\right)\,,
\end{equation}
all the parameters in the slices avoid barren plateaus. Since $M\in\OC\left(\poly(n)\right)$, we see that the variance of the Gaussian-initialized angles is $\varsigma\in\Theta\left(1/\poly(n)\right)$ for $L\in\OC\left(\poly(n)\right)$.

Finally, we prove that the Gaussian-initialized angles themselves also have at least inverse polynomially decreasing gradients, under a simple condition on the diagonal $R_z$ angles. Namely, that these angles are close enough to $0$ or $\pi$. 
First, let us assume that the $R_z$ angles are in $\{0,\pi\}$. Then, up to an irrelevant global phase, these gates are single-qubit Pauli $\sigma_z$ matrices. Since $\sigma_z R_y(\theta)=R_y(-\theta)\sigma_z$, all such diagonal gates can be commuted to the end of the circuit by appropriately changing the signs of the $R_y$ angles. And because they commute with the MaxCut Hamiltonian, they do not affect the value of the loss function. 
Thus, for the purpose of evaluating the loss, the circuit is equivalent to one in which all $R_y$ rotations acting on the same qubit combine into a single effective angle. We denote this effective angle by
\begin{equation}
    \vartheta_k=\bar{\varepsilon}_{l_k,k}+\sum_l\xi_{l,k}\eta_{l,k}\,,
\end{equation}
where $\xi_{l,k}\in\{-1,1\}$, and $\bar{\varepsilon}_{l_k,k}$ is either $0$ or $\pm\varepsilon_{l_k,k}$. Therefore, Eq.~\eqref{eq:loss-gammas} becomes
\begin{equation}
    \LC(\vec{\vartheta})=\sum_{(k,k')\in E}z_k z_{k'}\cos(\vartheta_k)\cos(\vartheta_{k'})\,.
\end{equation}
Moreover, since $\partial\vartheta_k/\partial\eta_{lk}=\xi_{l,k}$,
\begin{equation}
    \frac{\partial \LC}{\partial\eta_{l,k}}=\xi_{l,k}\frac{\partial \LC}{\partial\vartheta_k}\,,
\end{equation}
and thus we find
\begin{equation}
    \left|\frac{\partial \LC}{\partial\eta_{l,k}}\right|=\left|\sin(\vartheta_k)\right|\left|z_k\sum_{k':(k,k')\in E}z_{k'}\cos(\vartheta_{k'})\right|\,.
\end{equation}
An analogous analysis as in Eq.~\eqref{eq:graph-degree-bound} leads to
\begin{equation}
    \left|z_k\sum_{k':(k,k')\in E}z_{k'}\cos(\vartheta_{k'})+\frac{1}{2}\Delta_k(\vec{z})\right|\leq \deg(k)\max_q(1-\cos \vartheta_q)\,.
\end{equation}
where we recall that $\deg(k)$ is the degree of vertex $k$. 
Hence, if
\begin{equation} \label{eq:conditional}
   \max_k \deg(k)\,\max_k(1-\cos \vartheta_k) \leq \alpha\min_k|\Delta_k(\vec{z})|\,,
\end{equation}
for some constant $\alpha$, then
\begin{equation}
    \left|z_k\sum_{k':(k,k')\in E}z_{k'}\cos(\vartheta_{k'})\right|\geq \left(\frac{1}{2}-\alpha\right)\min_k|\Delta_k(\vec{z})|\,.
\end{equation}
We can see that for constant-odd-degree graphs, both $\max_k \deg(k)$ and $\min_k|\Delta_k(\vec{z})|$ are constant, and so Eq.~\eqref{eq:conditional} can be satisfied with constant $\vartheta_k$.

It remains to lower bound $|\sin(\vartheta_k)|$. 
For any qubit $k$ with at least one Gaussian-initialized $R_y$ angle, $\sum_l\xi_{l,k}\eta_{l,k}$ is a sum of independent Gaussian variables, and therefore a Gaussian random variable itself, with variance at least $\varsigma^2$ and at most $L\varsigma^2$. It follows that
\begin{equation}
    \Pr(\left|\sum_l\xi_{l,k}\eta_{l,k}\right|\leq x) \leq \int_{-x}^x \frac{1}{\varsigma\sqrt{2\pi}} e^{-\frac{t^2}{2L\varsigma^2}} dt \leq \int_{-x}^x \frac{1}{\varsigma\sqrt{2\pi}}  dt  = \sqrt{\frac{2}{\pi}}\frac{x}{\varsigma}\,.
\end{equation}
Taking a union bound over the $n$ qubits and choosing
\begin{equation}
    x=\frac{\delta\varsigma}{2n}\sqrt{\frac{\pi}{2}}\,,
\end{equation}
we obtain 
\begin{equation}
  \Pr( \min_k \left|\sum_l\xi_{l,k}\eta_{l,k}\right| \geq \frac{\delta\varsigma}{2n}\sqrt{\frac{\pi}{2}}) \geq 1- \frac{\delta}{2}\,.
\end{equation}
On the other hand, using the same reasoning as in Eq.~\eqref{eq:Gaussian-tail-bound}, we have that with probability at least $1-\frac{\delta}{2}$,
\begin{equation}
    \max_k\left|\sum_l\xi_{l,k}\eta_{l,k}\right|\leq \varsigma\sqrt{2L\log\left(\frac{4n}{\delta}\right)}\,.
\end{equation}
Consequently, using the union bound again, we see that with probability at least $1-\delta$,
\begin{equation}
   \frac{\delta\varsigma}{2n}\sqrt{\frac{\pi}{2}}  \leq  \left|\sum_l\xi_{l,k}\eta_{l,k}\right| \leq \varsigma\sqrt{2L\log\left(\frac{4n}{\delta}\right)}\,.
\end{equation}
Thus, for $\varsigma,\delta\in\Theta(1/\poly(n))$, we see that condition~\eqref{eq:conditional} can be easily satisfied by appropriately choosing the angles $\varepsilon_{l_k,k}$ in a constant-width region.

We now put all the pieces together. First, if $\bar{\varepsilon}_k=0$, we can use that $|\sin x|\geq \frac{2|x|}{\pi}$ when $|x|\leq \frac{\pi}{2}$ to get (with probability at least $1-\delta$)
\begin{equation}
    |\sin(\vartheta_k)|\geq \frac{\delta\varsigma}{n\sqrt{2\pi}}\,.
\end{equation}
Then, if $\bar{\varepsilon}_k\neq 0$ we can again use the reverse triangle inequality,
\begin{equation}
|\sin(\vartheta_k)| = \left|\sin(\varepsilon_{l_k,k} + \sum_l\xi_{l,k}\eta_{l,k})\right| \geq |\sin(\varepsilon_{l_k,k})|- \left|\sin(\varepsilon_{l_k,k} + \sum_l\xi_{l,k}\eta_{l,k}) -\sin(\varepsilon_{l_k,k})\right| \,.
\end{equation}
And noticing that $|\sin(x)-\sin(y)|\leq |x-y|$ (i.e., the sine is a $1$-Lipschitz function),
\begin{equation}
    |\sin(\vartheta_k)| \geq |\sin(\varepsilon_{l_k,k})| - \left|\sum_l\xi_{l,k}\eta_{l,k}\right|\,,
\end{equation}
so that with probability at least $1-\delta$,
\begin{equation}
    |\sin(\vartheta_k)| \geq \min_k |\sin(\varepsilon_{l_k,k})| -\varsigma\sqrt{2L\log\left(\frac{4n}{\delta}\right)} \,.
\end{equation}

Combining all these bounds, we conclude that, with probability at least $1-\delta$, for every Gaussian-initialized $R_y$ angle,
\begin{equation}
    \left|\frac{\partial L}{\partial\eta_{l,k}}\right|\geq \left(\frac{1}{2}-\alpha\right)\min_k|\Delta_k(\vec{z})|\min\left\{\min_k |\sin(\varepsilon_{l_k,k})| -\varsigma\sqrt{2L\log\left(\frac{4n}{\delta}\right)},\frac{\delta\varsigma}{n\sqrt{2\pi}}\right\}\,.
\end{equation}
In particular, for constant-odd-degree graphs, $\min_k|\Delta_k(\vec{z})|\geq 2$, and therefore all Gaussian-initialized $R_y$ angles have gradients that vanish at most polynomially whenever $L\in\OC(\poly(n))$ and $\varepsilon_{l_k,k}$, $\varsigma$ and $\delta$ are lower bound by inverse-polynomial functions in $n$, except for pathological choices that render $\min_k |\sin(\varepsilon_{l_k,k})| -\varsigma\sqrt{2L\log\left(\frac{4n}{\delta}\right)}$ exponentially vanishing.

A completely analogous analysis shows that the $R_z$ angles can also be initialized with similar Gaussian distributions around $0$ or $\pi$, and that this does not make the gradients of the $R_y$ rotations exponentially vanishing, which concludes the proof. 

\end{proof}

\section{Additional numerics for local landscape characterization}\label{app-additional-numerics}

 \begin{figure}[ht]
    \centering
    \includegraphics[width=1\columnwidth]{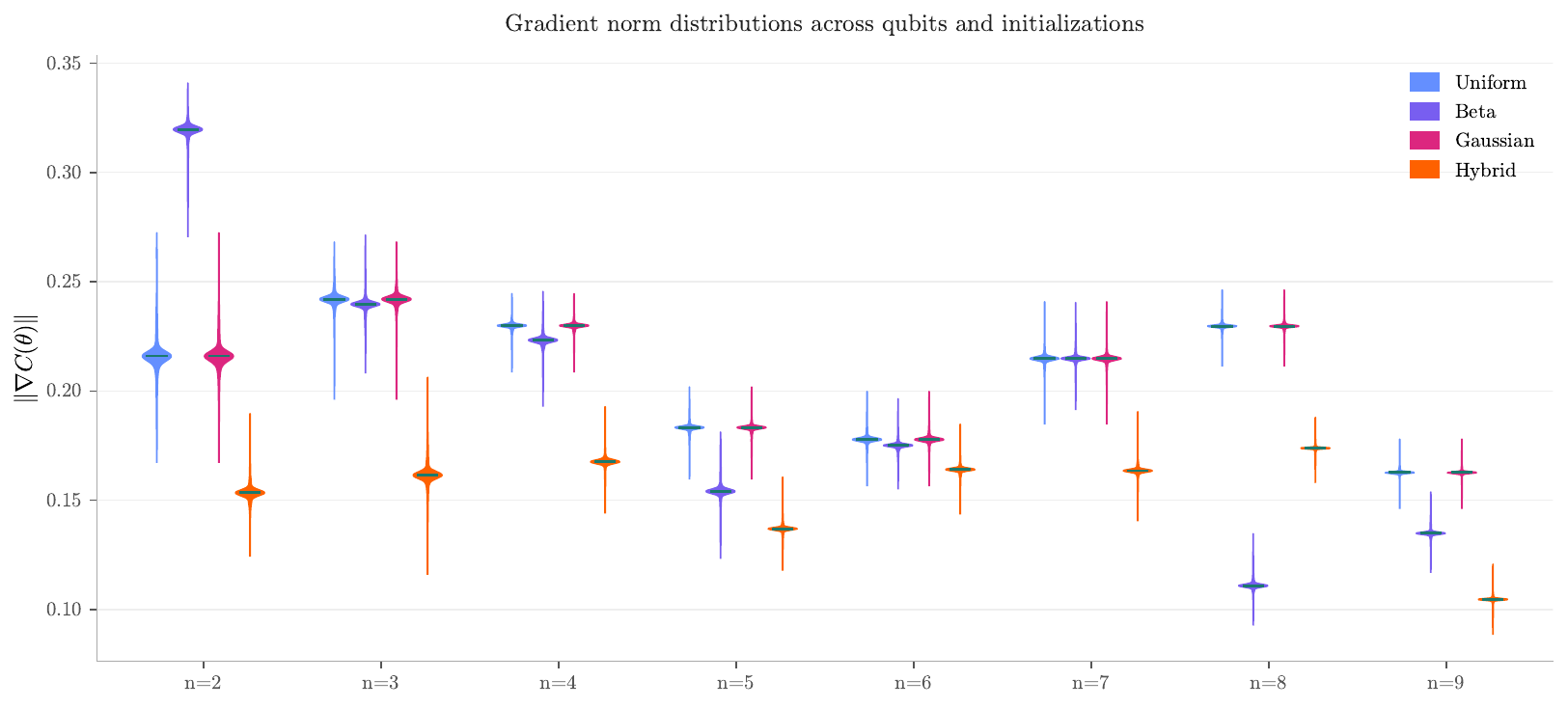}
    \caption{\textbf{Gradient norm distributions for different initialization families and system sizes.} We consider an $L=2n$ layered circuit (see Fig.~\ref{fig:qckt_ela}), where the parameters are initialized i.i.d. from a uniform distribution $\mathrm{Unif}[-\pi,\pi]$, a Beta distribution $\mathrm{Beta}(\alpha,\beta)$,  Gaussian initializations $\NC(0,\frac{1}{n})$ and a ``hybrid initialization'' strategy. We sample 500 circuits and consider circuits acting on $n=2,3,\ldots,9$ qubits.}
    \label{fig:gradnorm_distribs}
\end{figure}

In this section we present additional numerics to characterize the local landscape for different initialization strategies. While existing analyses of parameter initialization often focus on gradients at the initial point, our goal here is broader. Motivated by the analytical results of the previous sections, we now ask whether distinct initialization strategies continue to bias optimization after training has begun, by steering the optimizer toward different local neighborhoods in the loss function landscape. To probe this question, we do not restrict attention to the gradients at the initial point, but instead study the local landscape around a trained point $\bm{\theta}^\ast$ reached after optimization.

We again consider the ansatz of Fig.~\ref{fig:qckt_ela}, and perform simulations with qubit numbers ranging from $n=2$ to $n=8$ and depth $L=2n$. Here we focus on a variational quantum eigensolver task~\cite{peruzzo2014variational} where the goal is to find the ground state of the random one-dimensional Hamiltonian
\begin{equation}\label{eq:eg_hamiltonian}
    O = \sum_{i=1}^{n} \alpha_i \sigma_x^i + \sum_{i=1}^{n} \beta_i \sigma_z^i + \sum_{i=1}^{n-1} \sigma_x^i \sigma_x^{i+1}\,,
\end{equation}
where the coefficients $\alpha_i$ and $\beta_i$ are sampled uniformly at random from $[-1,1]$. For each system size, we input the all-zero state, and initialize the PQC parameters using uniform, Beta, and Gaussian distributions, and for each initialization family we perform $50$ independent training runs of $50$ gradient-descent steps with learning rate $10^{-3}$. Training is performed with gradient descent and gradients are computed via the parameter-shift rule~\cite{mitarai2018quantum}. For each run we record the corresponding $\bm{\theta}^t$ for $t=50$ and use it as the reference point for our local landscape analysis. All simulations are performed in the  noiseless regime (no shot noise and no hardware noise).

To explore the neighborhood around $\bm{\theta}^t$, we generate a perturbation set
\begin{equation}\label{eq:perturb_pts}
    \mathcal{P}(\bm{\theta}^t)=\{\bm{\theta}^{\,j}_p\}_j,
    \qquad
    \bm{\theta}^{\,j}_p = \bm{\theta}^t + \lambda_j \bm{\epsilon}_j,
\end{equation}
where each $\bm{\epsilon}_j$ is a random direction in parameter space and the radii $\lambda_j$ are logarithmically spaced between $10^{-3}$ and $10^{-1}$. Evaluating the loss and gradients on $\mathcal{P}(\bm{\theta}^t)$ allows us to probe the local geometry induced by each initialization strategy after training.

\subsubsection{A statistical picture of the local landscape}

To analyze the landscape, we consider the ansatz in Figure~\ref{fig:qckt_ela-2} and four different initialization strategies. The first three strategies initialize $\bm{\theta_z}, \bm{\theta}_y \sim P_\gamma(\bm{\theta})$ where $P_\gamma(\bm{\theta}) \in \{\text{Unif}(-r,r), {\rm Beta}(\alpha, \beta), \NC(0, \frac{1}{\sqrt{n}}) \}$. We also consider a ``hybrid'' initialization where $\bm{\theta}_z \sim \text{Unif}(-r, r)$ and $\bm{\theta}_y \sim \NC(\mu, \frac{1}{\sqrt{n}})$ and $\mu \sim \text{Unif}(0, \pi)$. Here, $\bm{\theta}_z$ refers to parameters of the $R_Z$ gate and $\bm{\theta}_y$ refers to parameters of the $R_Y$ gate in the ansatz. The loss function computes the cost w.r.t the observable $O$ in Equation~\ref{eq:eg_hamiltonian}.

Figure~\ref{fig:gradnorm_distribs} shows Gaussian KDE estimate of gradient norm distributions for four different initialization strategies described above. In the low qubit regime (i.e., $n=2,3, 4$) we see that all distributions show fairly large gradients even though hybrid initialization appears to be slightly more concentrated around median at $n=4$. For medium qubit regime (i.e. $n=5,6, 7$) we see marked difference in the behavior of distributions. In this regime, Beta and Gaussian initializations show a resistance to concentration even as median gradient norm decays significantly. Hybrid and Uniform distributions on the other hand show an onset of concentration. This is especially true for hybrid strategy where concentration becomes apparent at $n=6$. In high qubit regime (i.e. $n=8, 9$) we see some phenomena that correlates with our proposed theory. This regime exhibits all initializations concentrating heavily around their median showing that a gradient concentration pattern emerges (especially the hybrid initialization shows a strong concentration around it's median). However, Beta and Gaussian show larger tail  distributions which implies that $P(\Ebb[\nabla_\theta \LC(\bm{\theta})] > \delta)) > 0$ at some regions in the landscape. This directly correlates with our earlier analysis that showed that there exist non-trivial initialization strategies that does not lead a circuit into a ``barren'' part of the landscape.
\begin{figure}[t]
    \centering
    \includegraphics[width=.5\linewidth]{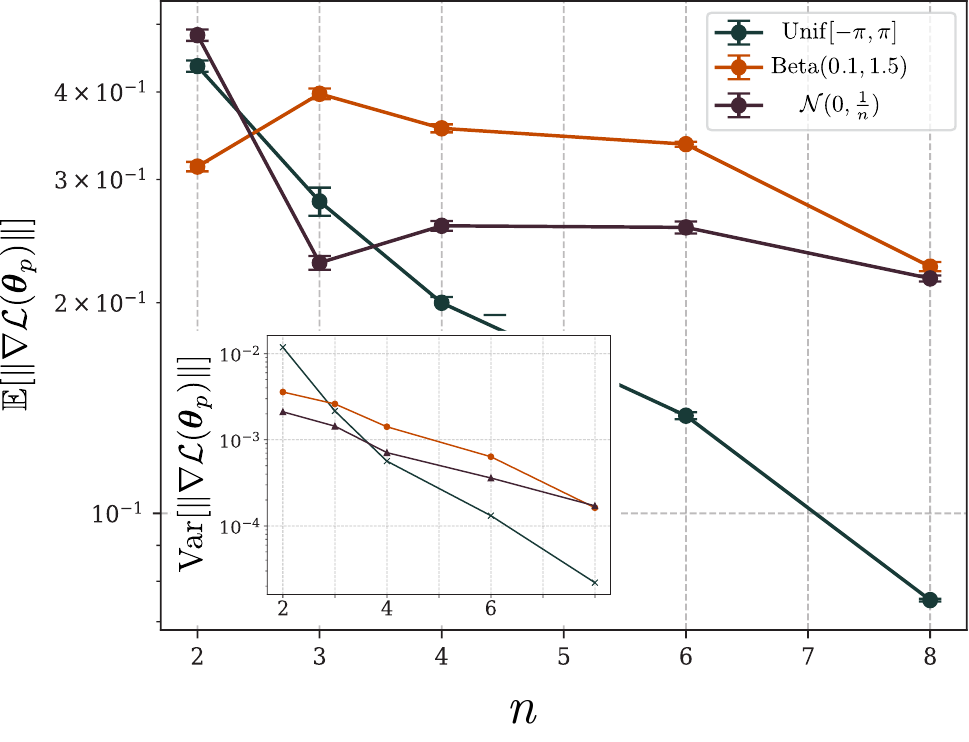}
    \caption{\textbf{Order statistics of gradient distributions around convergent point.} We show the expectation value and the variance (inset) for the values of $\|\nabla \mathcal{L}(\bm{\theta}_p)\|$ in Fig.~\ref{fig:gradnorm_distribs}. }
    \label{fig:gradnorm_stats}
\end{figure}

\begin{figure}[t]
    \centering
    \includegraphics[width=.5\linewidth]{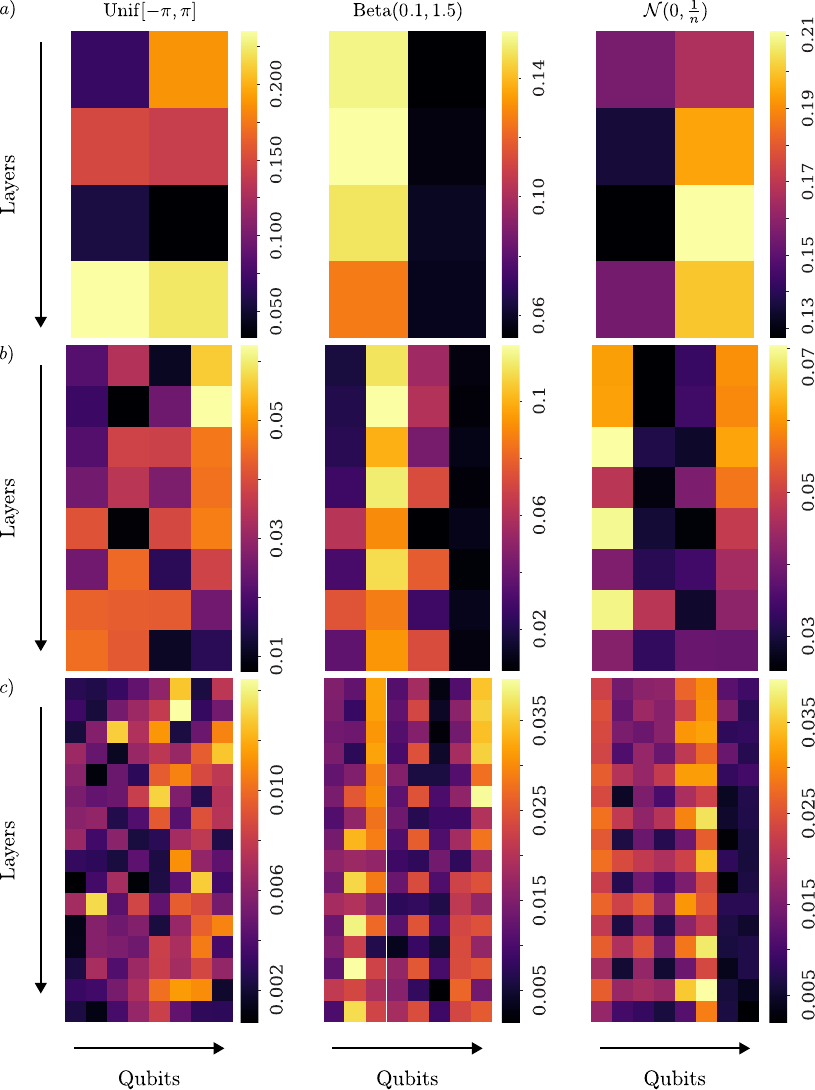}
    \caption{\textbf{Magnitude of gradient norms across different qubits and layers for three different qubit sizes.} We consider system sizes of $n=2,4,8$ qubits (respectively panels a), b) and c)), a number of layers $L=2n$, and initialization strategies  $\mathrm{Unif}[-\pi,\pi]$, $\mathrm{Beta}(\alpha,\beta)$, and  $\NC(0,\frac{1}{n})$. }
    \label{fig:gradnorm_heatmaps}
\end{figure}

Figure~\ref{fig:gradnorm_stats} completes this picture through the expectation value and the variance of the gradient norms. The uniform initialization exhibits a rapid decay in both the mean and the variance as the system size increases, consistent with increasingly featureless local geometry. The Beta and Gaussian initializations display a much slower decay, with Beta retaining the largest average gradients over most of the explored range. Importantly, we do not interpret these post-training statistics as a proof that one neighborhood exhibits a barren plateau while another does not. Rather, the point is that analytically distinct initialization strategies remain practically distinct after optimization has started. That is, they appear to lead the optimizer to local neighborhoods with different gradient scales and different fluctuation structure.

\subsubsection{A fine-grained picture of the local landscape}

To move beyond aggregate statistics, we next resolve the gradient information across qubits and layers. Figure~\ref{fig:gradnorm_heatmaps} reports heatmaps of the gradient magnitudes associated with the perturbed points in $\mathcal{P}(\bm{\theta}^\ast)$, organized by qubit and layer.

\begin{figure}[t]
    \centering
    \includegraphics[width=.5\columnwidth]{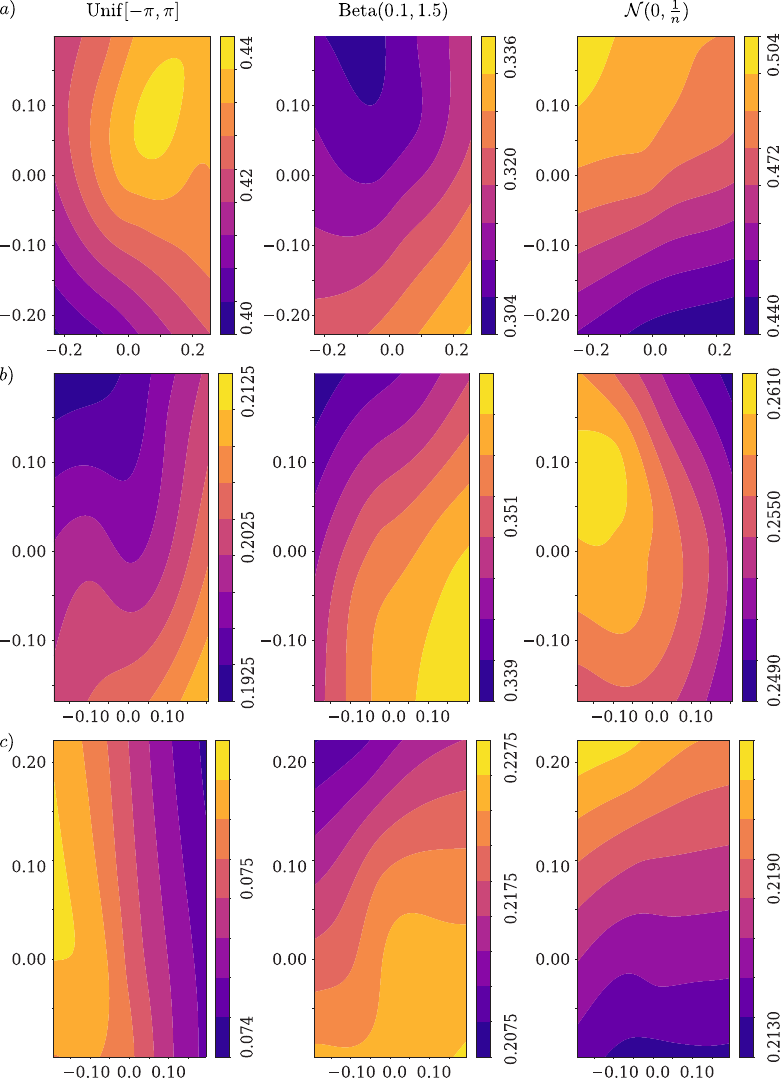}
    \caption{\textbf{Gradient landscape around some local optimum for different parameter initializations and system sizes.} We consider system sizes of $n=2,4,8$ qubits (respectively panels a), b) and c)), a number of layers $L=2n$, and initializations strategies  $\mathrm{Unif}[-\pi,\pi]$, $\mathrm{Beta}(\alpha,\beta)$, and  $\NC(0,\frac{1}{n})$. }
    \label{fig:grad_landscape} 
\end{figure}

The qualitative contrast between initialization strategies is again clear. For the uniform initialization, the gradients become both smaller and more homogeneous as the system size increases. In particular, for $n=8$ the heatmaps display increasingly little variation across qubits and layers, indicating that the corresponding trained points lie in locally flat and nearly isotropic neighborhoods. By contrast, the Beta and Gaussian initializations retain visibly heterogeneous gradient profiles. Even when their overall gradient magnitudes decrease with system size, these two strategies continue to exhibit anisotropy across layers and subsystems, which in turn suggests the persistence of nontrivial descent directions.

This distinction is particularly suggestive in the comparison between Beta and Gaussian initializations. Although both behave substantially better than the uniform baseline, the Beta initialization tends to preserve larger gradients over a broader portion of the circuit, while the Gaussian initialization develops a more localized structure. We find this difference especially interesting because the Gaussian width is chosen according to the principled scaling $\sigma^2=1/n$, whereas the Beta initialization is specified through fixed hyperparameters. One possible interpretation is that the non-symmetric character of the Beta family contributes to the observed robustness, although establishing this connection more systematically is left for future work.

Figure~\ref{fig:grad_landscape} complements the previous analysis by showing contour plots of the local gradient landscape around representative trained points (obtained by Eq.~\eqref{eq:perturb_pts}). These plots reinforce the message that different initializations do not merely rescale the gradients, but rather, that they can alter the qualitative structure of the local landscape itself. For small systems, the three initialization strategies already display distinct local geometries. As the system size increases, the uniform initialization develops progressively flatter and less structured neighborhoods, whereas the Beta initialization continues to exhibit more pronounced basin-like features. The Gaussian initialization lies in between, as it retains clear structure at intermediate sizes, but for larger systems its local landscape also becomes noticeably flatter.

Taken together, Figs.~\ref{fig:gradnorm_distribs}--\ref{fig:grad_landscape} show that initialization affects much more than the gradient at step zero. Even after training, different parameter distributions lead the optimizer to local neighborhoods with different gradient scales, different anisotropies, and different geometric structure.

\end{document}